\author{Luigi Santocanale\thanks{Supported by the ANR project
    LAMBDACOMB ANR-21-CE48-0017}}
\title[Bijective proofs for Eulerian numbers of types $\B$ and
$\D$]{Bijective proofs for Eulerian numbers \\ of types $\B$ and
  $\D$\footnote{This is a revised and extended version of the manuscript
  \citep{2020-ALGOS} appeared in the
  proceedings of the conference ALGOS 2020}\\
  {\normalsize Dedicated to Maurice Pouzet on the occasion of his 75th birthday}
}
\affiliation{
  LIS, CNRS UMR 7020, \\ Aix-Marseille Universit\'e, France}
\email{luigi.santocanale@lis-lab.fr}
\keywords{signed permutation, weak order, threshold graph}
\begin{document}

\maketitle
% \blfootnote{This is a revised version of the manuscript
%   \cite{2020-ALGOS} appeared in the
%   proceedings of the conference ALGOS 2020 }

\begin{abstract}
  Let $\sEulA{n}{k}$, $\sEulB{n}{k}$, and $\sEulD{n}{k}$ be the \En{s}
  of the types $\A$, $\B$, and $\D$, respectively---that is, the
  number of permutations of $n$ elements with $k$ descents, the number
  of \sp{s} (of $n$ elements) with $k$ type $\B$ descents, the number
  of \esp{s} (of $n$ elements) with $k$ type $\D$ descents.
  Let $S_{n}(t) = \sum_{k = 0}^{n-1} \sEulA{n}{k}t^{k}$, 
  $B_{n}(t) = \sum_{k = 0}^{n} \sEulB{n}{k}t^{k}$, and
  $D_{n}(t) = \sum_{k = 0}^{n} \sEulD{n}{k}t^{k}$. We give bijective
  proofs of the identity
  $$B_{n}(t^{2}) = (1+t)^{n+1}S_{n}(t) - 2^{n}tS_{n}(t^{2})$$
  and of \Stem's identity
  $$D_{n}(t) = B_{n}(t) - n2^{n-1}tS_{n-1}(t)\,.$$
  These bijective proofs rely on a representation of \sp{s} as
  paths. Using this representation we also establish a bijective
  correspondence between \esp{s} and pairs $(w,E)$ with $(\setn, E)$ a
  \tg and $w$ a degree ordering of $(\setn, E)$, which we use to
  obtain bijective proofs of enumerative results for \tg{s}.
\end{abstract}

\section{Introduction}

The \En{s} $\sEulA{n}{k}$ count the number of permutations in the
symmetric group $\Sn$ that have $k$ descent positions.  Let us recall
that, for a permutation $w = w_{1}w_{2}\cdots w_{n} \in \Sn$ (thus,
with $w_{i} \in \set{1,\ldots ,n}$ and $w_{i} \neq w_{j}$ for
$i \neq j$), a \emph{descent} of $w$ is an index (or position)
$i \in \set{1,\ldots ,n-1}$ such that $w_{i } > w_{i +1}$.

This is only one of the many interpretations
that we can give to these numbers, see e.g. \citep{Petersen2015},
yet it is intimately order-theoretic.  The set $\Sn$ can be endowed
with a lattice structure, known as the weak (Bruhat) ordering on
permutations or Permutohedron, see e.g. \citep{GuRo63,STA1}.  Descent
positions of $w \in \Sn$ are then bijection with its lower covers, so
the \En{s} $\sEulA{n}{k}$ can also be taken as counting the number of
permutations in $\Sn$ with $k$ lower covers. In particular,
$\sEulA{n}{1} = 2^{n} - n-1$ is the number of \jirr elements in
$\Sn$. A subtler order-theoretic interpretation is given in
\citep{barnard2016canonical}: since the $\Sn$ are
(join-)semidistributive as lattices, every element can be written
canonically as the join of \jirr elements \citep{FJN}; the numbers
$\sEulA{n}{k}$ count then the permutations $w \in \Sn$ that can be
written canonically as the join of $k$ \jirr elements.

The symmetric group $\Sn$ is a particular instance of a Coxeter group,
see \citep{BB2005}, since it yields a concrete realization of the
Coxeter group $\A_{n-1}$ in the family
$\A$.  
Notions of length, descent, inversion, and also a weak order, can be
defined for elements of an arbitrary finite Coxeter group
\citep{Bjo84}. We shift the focus to the families $\B$ and $\D$ of
Coxeter groups.  More precisely, this paper concerns the \En{s} in the
types $\B$ and $\D$.  The \En $\sEulB{n}{k}$ (\resp $\sEulD{n}{k}$)
counts the number of elements in the group $\Bn$ (\resp $\Dn$) with
$k$ descent positions.  Order-theoretic interpretations of these
numbers, analogous to the ones mentioned for the standard \En{s}, are
still valid.  As the abstract group $\A_{n-1}$ has a concrete
realization as the symmetric group $\Sn$, the group $\Bn$ (\resp
$\Dn$) has a realization as the hyperoctahedral group of \sp{s} (\resp
the group of \esp{s}).  Starting from these concrete representations
of Coxeter groups of type $\B$ and $\D$, we pinpoint some new
representations of \sp{s}. Relying on these representations we provide
bijective proofs of known formulas for \En{s} of the types $\B$ and
$\D$. These formulas allow us to compute the \En{s} of the types $\B$
and $\D$ from the better-known \En{s} of type $\A$.

Let $S_{n}(t)$ and $B_{n}(t)$ be the \Ep{s} of the types $\A$ and
$\B$:
\begin{align}
  \label{def:eulAB}
  S_{n}(t) &
  \eqdef \sum_{k=0}^{n-1}\eulA{n}{k}t^{k} \,,
  &
  B_{n}(t) & \eqdef \sum_{k = 0}^{n}\eulB{n}{k}t^{k}\,.
\end{align}

In \cite[\S 13, p. 215]{Petersen2015} the following polynomial
identity is stated:
\begin{align}
  \label{eq:Petersen}
  2 B_{n}(t^{2}) & = (1 + t)^{n+1}S_{n}(t) + (1 -
  t)^{n+1}S_{n}(-t)\,.
\end{align}
Considering that, for $f(t) = \sum_{k \geq 0}a_{k}t^{k}$,
\begin{align*}
  f(t) + f(-t) & = 2 \sum_{k \geq 0} a_{2k}t^{2k}\,,
\end{align*}
the polynomial identity \eqref{eq:Petersen} amounts to the following
identity among coefficients:
\begin{align}
  \eulB{n}{k} & = \sum_{i = 0}^{2k}\eulA{n}{i}\binom{n+1}{2k-i}\,.
  \label{eq:eulBeven}
\end{align}
We present a bijective proof of \eqref{eq:eulBeven} and also establish
the identity
\begin{align}
  2^{n}\eulA{n}{k} & = \sum_{i = 0}^{2k+1}\eulA{n}{i}\binom{n+1}{2k
  +1-i}\,.
  \label{eq:eulBodd}
\end{align}
Considering that, for $f(t) = \sum_{k \geq 0}a_{k}t^{k}$,
\begin{align*}
  f(t) - f(-t) & = 2 \sum_{k \geq 0} a_{2k +1}t^{2k +1}\,,
\end{align*}
the identity \eqref{eq:eulBodd} yields the polynomial identity:
\begin{align*}
  2^{n+1}tS_{n}(t^{2})
  & = 
  (1 + t)^{n+1}S_{n}(t) - (1-t)^{n+1}S_{n}(-t) 
  \,.
\end{align*}
More importantly, \eqref{eq:eulBeven} and \eqref{eq:eulBodd} jointly
yield the polynomial identity
\begin{align}
  (1 + t)^{n+1}S_{n}(t) & = B_{n}(t^{2}) + 2^{n}tS_{n}(t^{2})\,.
  \label{eq:main}
\end{align}
Let now $D_{n}(t)$ be the \Ep of type $\D$:
\begin{align*}
  D_{n}(t) & \eqdef \sum_{k = 0}^{n} \eulD{n}{k} t^{k}\,.
\end{align*}
Investigating further the terms $2^{n}S_{n}(t)$, we could 
find a simple bijective proof, that we present here, of \Stem's
identity \cite[Lemma 9.1]{Stembridge1994} 
\begin{align}
  \label{eq:Stembridge}
  D_{n}(t) & = B_{n}(t) - n 2^{n-1}tS_{n-1}(t)\,,
\end{align}
which, in terms of the \En{s} of type $\D$, amounts to
\begin{align*}
  \eulD{n}{k} & =  \eulB{n}{k} -n2^{n-1}\eul{n-1}{k-1}\,.
\end{align*}
The proofs presented here differ from known proofs of the identities
\eqref{eq:Petersen} and \eqref{eq:Stembridge}. As suggested in
\citep{Petersen2015}, the first identity may be derived by computing
the $f$-vector of the type $\B$ Coxeter complex and then by applying
the transform yielding the $h$-vector from the $f$-vector. A similar
method is used in \citep{Stembridge1994} to prove the identity
\eqref{eq:Stembridge}.  
Our proofs directly rely on the combinatorial properties of
signed/\esp{s} and on two representations of \sp{s} that we call one
the \emph{\prep} and, the other, the \emph{\sbp} representation.
The idea is that a \sp of $\setn$ can be organised into a discrete
path from $(0,n)$ to $(n,0)$ that only uses East and \Ss{s} and that,
by projecting onto the $x$-axis, we obtain a permutation divided into
blocks, as suggested in Figure~\ref{fig:TwoRepresentations}.
\begin{figure}
  \centering
  \includegraphics{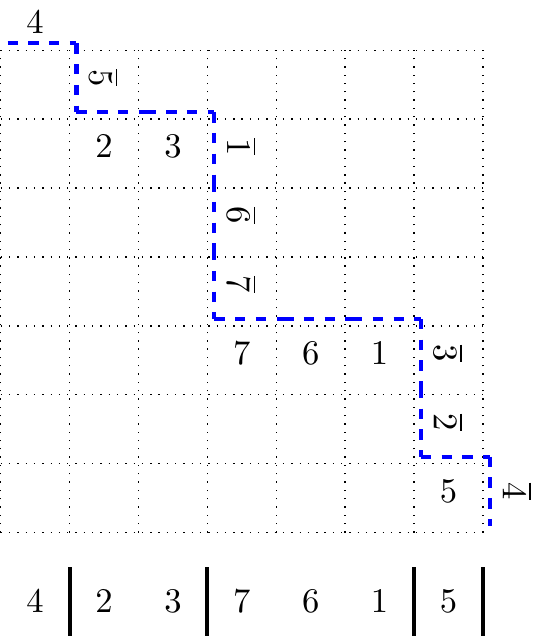}
  \caption{\Sp{s} as paths and as barred permutations}
  \label{fig:TwoRepresentations}
\end{figure}
As a byproduct of these representations, we also obtain a bijection
between \esp{s} of $\setn$ and pairs $(w,E)$ where $(\setn,E)$ is a
\tg and $w$ is a permutation or, better, a linear ordering of $\setn$
that is a \dgo for $(\setn,E)$. Under the bijection, the ordering of
$\Dn$ is coordinatewise, that is, we have
$(w_{1},E_{1}) \leq (w_{2},E_{2})$ if and only if $w_{1} \leq w_{2}$
in $\Sn$ and $E_{1} \subseteq E_{2}$.  It is in the scope of future
research to shed some light on the lattice structure of the weak order
on the Coxeter groups of type $\D$ using this representation of the
ordering. Our confidence that this is indeed possible relies on our
progress studying these lattices, which yielded the discovery of the
path representation of \sp{s}.
In the meantime, the two representations of \sp{s}, together with
  the bijection between \esp{s} and pairs $(w,E)$ as mentioned above,
  also yield a representation of \tg{s} as specific \sbp{s}.  We
  exemplify once more the convenience of these representations by
  deducing enumerative results for \tg{s} \citep{BP1987,Spiro2020}.

\section{Background}
\label{sec:notation}
The notation used is chosen to be consistent with \citep{Petersen2015}.
We use $\setn$ for the set $\set{1,\ldots ,n}$
and $\Sn$ for the set of permutations of $\setn$.  We use $\setnp$ for
the set $\set{0,1,\ldots ,n}$, $\setmn$ for $\set{-n,\ldots ,-1}$, and $\setpmn$ for
$\set{-n,\ldots ,-1,1,\ldots ,n}$.
We write a permutation $w \in \Sn$ as a word
$w = w_{1}w_{2}\cdots w_{n}$, with $w_{i} \in \setn$.  For
$w \in \Sn$, its set of descents and its set of inversions\footnote{
  It is also possible to define $\inv{w}$ as the set
  $\set{(i,j) \mid 1 \leq i < j \leq n, w_{i} > w_{j}}$.  The
  definition given above is better suited for the order-theoretic
  approach.
} are defined as follows:
\begin{align}
  \nonumber
  \desc{w}& \eqdef \set{i \in \set{1,\ldots ,n-1}\mid w_{i} > w_{i
      +1}}\,,
   \\
  \inv{w} & \eqdef \set{(i,j) \mid 1 \leq i < j \leq n, w^{-1}(i)
    > w^{-1}(j)}\,.
  \intertext{Then, we let}
  \des{w} & \eqdef \Card{\desc{w}}\,.
\end{align}
Notice that a descent $i$ of a permutation $w_{1}w_{2}\cdots w_{n}$ is
uniquely identified by the two contiguous letters $w_{i}w_{i+1}$ such
that $w_{i} > w_{i+1}$. Therefore, we shall often identify such a
descent by these two contiguous letters. 
The Eulerian number $\sEulA{n}{k}$, counting the number of
permutations of $n$ elements with $k$ descents, can be formally
defined as follows:
\begin{align*}
  \eulA{n}{k} & \eqdef \Card{\set{w \in \Sn \mid \des{w} = k}}\,.
\end{align*}
Let us define a \emph{\sp of $\setn$} as a permutation $u$ of
$\setpmn$ such that, for each $i \in \setpmn$, $u_{-i} = -u_{i}$.  We
use $\Bn$ for the set of \sp{s} of $\setn$. When writing a \sp $u$ as
a word $u_{-n}\cdots u_{-1}u_{1}\cdots u_{n}$, we prefer writing
$u_{i} = \wbar{x}$ in place of $-x$ if $u_{i} < 0$ and
$\abs{u_{i}} = x$. 
Also, we often write $u \in \Bn$ in \emph{window notation}, that is,
we only write the suffix $u_{1}u_{2}\cdots
u_{n}$; 
indeed, the prefix $u_{-n}u_{n-1}\cdots u_{-1}$ is determined as the
mirror of the suffix $u_{1}u_{2}\cdots u_{n}$ up to exchanging the
signs. The set $\Bn$ is a subgroup of the group of permutations of
the set $\setpmn$ and, as mentioned before, it is the standard model
for the Coxeter group in the family $\B$ with $n$ generators.
General notions from the theory of Coxeter groups (descent, inversion)
apply to \sp{s}.

The Cayley graph of a Coxeter group (which, by definition, comes with
a set of generators) can always be oriented by increasing length,
where the length of an element is defined as the number of its
inversions.  The oriented graph is then the Hasse diagram of a poset
where descents of an element mark its lower
covers.  
While for permutations the notions of descent, inversion, and length
are customary from elementary combinatorics, for \sp{s} these
notions are subtler yet well studied, we refer to standard
monographs such as \citep{BB2005,Petersen2015}. 
We present below, as definitions, the well-known
explicit formulas for the descent and inversion sets of $u \in
\Bn$. We let
\begin{align}
  \label{def:descinvB}
  \nonumber
  \descB{u}& \eqdef \set{i \in \set{0,\ldots ,n-1}\mid u_{i} > u_{i
      +1}}\,, 
  \\   \invB{u} & \eqdef \set{(i,j) \mid 1 \leq \abs{i} \leq j \leq n, u^{-1}(i)
      > u^{-1}(j)}\,,
  \intertext{where we set $u_{0} \eqdef 0$, so $0$ is a descent of $u$
    if and only if $u_{1} < 0$,} 
  \nonumber
  \desB{u} & = \Card{\descB{u}}\,,
  \\\eulB{n}{k} & \eqdef \Card{\set{u \in \Bn \mid \desB{u} =
    k}}\,.
\end{align}
The definition of the \Ep{s} in the types $\A$ and $\B$ appears in
\eqref{def:eulAB}. Let us mention that the type $\A$ \Ep is often (for
example in \citep{Bona2012}) defined as follows:
\begin{align*}
  A_{n}(t) & \eqdef \sum_{k = 1}^{n}\eulA{n}{k-1}t^{k}
  = t S_{n}(t)\,.
\end{align*}
We exclusively manipulate the polynomials $S_{n}(t)$ and never the
$A_{n}(t)$.  Notice that $S_{n}(t)$ has degree $n-1$ and $B_{n}(t)$
has degree $n$.

We shall introduce later \esp{s} and their groups, as well as related
notions arising from the fact that these groups are standard models
for Coxeter groups in the family $\D$.

\medskip

For $u \in \Bn$ we let
$\descBp{u} \eqdef \descB{u} \setminus \set{0}$, that is, $\descBp{u}$
is the set of strictly positive descents of $u$.
Let us observe the following:
\begin{lemma}
  \label{lemma:positiveDescs}
  $\Card{\,\set{u \in \Bn \mid \Card{\descBp{u}} = k}\,} = 2^{n}\sEulA{n}{k}$.
\end{lemma}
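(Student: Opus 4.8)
The plan is to produce an explicit bijection
$$\Phi\colon\Bn\longrightarrow\Sn\times\set{+1,-1}^{n}$$
under which a \sp with $k$ strictly positive descents corresponds to a pair $(w,\delta)$ with $\des{w}=k$; taking cardinalities then yields the count $2^{n}\sEulA{n}{k}$ at once.

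The first step is an elementary remark about window notation. For $u\in\Bn$, the entries $u_{1},\ldots,u_{n}$ of its window are $n$ nonzero integers of $\setpmn$ with pairwise distinct absolute values; since only $n$ absolute values are available, for each $j\in\setn$ exactly one of $j,-j$ occurs among them. Let $\delta(u)\in\set{+1,-1}^{n}$ record these signs, so $\delta(u)_{j}$ is the sign of the unique entry of absolute value $j$, and let $w(u)\in\Sn$ be the standardisation of $u_{1}\cdots u_{n}$, i.e.\ the unique permutation $w$ with $w_{i}<w_{i'}$ iff $u_{i}<u_{i'}$. Set $\Phi(u)\eqdef(w(u),\delta(u))$. (Note that $w(u)$ is usually not the underlying unsigned permutation $\abs{u_{1}}\cdots\abs{u_{n}}$; the point of using the standardisation rather than absolute values is that it carries the descent information.)

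The heart of the proof consists of two verifications. First, $\descBp{u}$ is precisely the descent set of $w(u)$: indeed $i\in\descBp{u}$ means $i\in\set{1,\ldots,n-1}$ and $u_{i}>u_{i+1}$, which by definition of the standardisation is the same as $w(u)_{i}>w(u)_{i+1}$; hence $\Card{\descBp{u}}=\des{w(u)}$. Second, $\Phi$ is a bijection, and I would exhibit its inverse directly: given $(w,\delta)$, form $X\eqdef\set{\delta_{j}\,j\mid j\in\setn}\subseteq\setpmn$, list it increasingly as $x_{1}<x_{2}<\cdots<x_{n}$, and define $u$ by $u_{i}\eqdef x_{w_{i}}$. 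One checks that $u_{1}\cdots u_{n}$ is a legitimate window (the $u_{i}$ are distinct, nonzero, with distinct absolute values, and the entry of absolute value $j$ has sign $\delta_{j}$), that $w(u)=w$ because the $x_{i}$ are increasing, and that the two assignments are mutually inverse.

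Putting the two verifications together, $\Phi$ restricts to a bijection from $\set{u\in\Bn\mid\Card{\descBp{u}}=k}$ onto $\set{w\in\Sn\mid\des{w}=k}\times\set{+1,-1}^{n}$, whence
$$\Card{\set{u\in\Bn\mid\Card{\descBp{u}}=k}}=2^{n}\,\Card{\set{w\in\Sn\mid\des{w}=k}}=2^{n}\sEulA{n}{k}\,.$$
I do not expect a genuine obstacle here; the only points needing care are the bookkeeping between window notation and genuine permutations of $\setpmn$, and checking that $X$ above indeed has $n$ elements of distinct absolute value so that the inverse map is well defined.
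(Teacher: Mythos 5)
Your proof is correct and is essentially the paper's own argument: your pair $(w(u),\delta(u))$ with the increasing listing $x_{1}<\cdots<x_{n}$ of $X$ is exactly the paper's decomposition $\tilde{u}=\iota\circ w$ into a standardised permutation and an order-preserving injection $\iota:\setn\rto\setpmn$ determined by a choice of sign for each absolute value. Both proofs then conclude by observing that, since the relative order of the window entries is preserved, $\descBp{u}=\desc{w}$, giving the factor $2^{n}\sEulA{n}{k}$.
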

\begin{proof}
  Recall that the window notation of a \sp $u$ is the word
  $u_{1}\cdots u_{n}$. We identify the window notation of $u$ with the
  mapping $\tilde{u} : \setn \rto \setpmn$ such that
  $\tilde{u}(i) = u_{i}$, for each $i \in \setn$.

  We claim that maps arising as window notation of a \sp are in
  bijection with pairs $(w,\iota)$ where $w \in \Sn$ and
  $\iota : \setn \rto \setpmn$ is an order preserving injection such
  that $x \in \iota(\setn)$ iff $-x \not \in \iota(\setn)$.
  The bijection goes as follows.  Given a \sp $u$, the image
  $\tilde{u}(\setn) \subseteq \setpmn$ of its window notation is a
  subset of integers of cardinality $n$ with the linear ordering
  inherited from integers. Thus, there exists a unique order
  preserving bijection $\psi : \tilde{u}(\setn) \rto \setn$. We
  associate to $\tilde{u}$ the pair $(w,\iota)$ where
  $w = \psi \circ \tilde{u}$ and where $\iota$ is the composition of
  $\psi^{-1}: \setn \rto \tilde{u}(\setn)$ and the inclusion
  $\tilde{u}(\setn) \subseteq \setpmn$. Notice that
  $\tilde{u} = \iota \circ w$, from which it follows that
  $\tilde{u}(\setn) = \iota(\setn)$ and that $x \in \iota(\setn)$ iff
  $-x \not \in \iota(\setn)$. Let us argue that this decomposition is
  unique.  Notice that $\tilde{u}(\setn) = \iota(\setn)$ uniquely
  determining $\iota$. Since moreover $\iota $ is injective, if
  $\iota \circ w = \tilde{u} = \iota \circ w'$, then $w = w'$ as well.
  Also, given such a pair $(w,\iota)$, $\iota \circ w$ is a preimage
  of $(w,\iota)$ via the correspondence, which is therefore surjective
  and a bijection.

  Next, the order preserving injections $\iota : \setn \rto \setpmn$
  such that $x \in \iota(\setn)$ iff $-x \not \in \iota(\setn)$ are
  uniquely determined by their positive image
  $\iota(\setn) \cap \setn$, so there are $2^{n}$ such mappings.
  Finally, let us argue that, for $i = 1,\ldots ,n-1$,
  $u_{i} > u_{i + 1}$ if and only if $w_{i} > w_{i + 1}$:
  $u_{i} > u_{i + 1}$ iff $\tilde{u}(i) > \tilde{u}(i + 1)$ iff
  $\iota(w(i)) > \iota(w(i + 1))$ iff $w_{i} > w_{i +1}$, since
  $\iota$ preserves and reflects the ordering. It follows that
  $\descBp{u} = \desc{w}$, thus yielding the statement of the lemma.
\end{proof}

\begin{example}
  Consider the signed permutation
  $u \eqdef 3\wbar{4}1\wbar{2}\wbar{5}$.  Then
  $\tilde{u} = \iota \circ w$ with $w = 52431$ and $\iota$ the order
  preserving map $\bar{5}\bar{4}\bar{2}13$ with
  $\iota(\setn) \cap \setn = \set{1,3}$.  \EndOfExample
\end{example}

Let us end this section with a notational remark.
We can index cells
or points of a grid such as the one in
Figure~\ref{fig:TwoRepresentations} in two different ways. Either we
consider them as being part of a matrix and index them by row (we
count here rows from the bottom to the top) and column, thus using the
letters $i,j$.
Or we can index them using the abscissa and ordinate of the
two-dimensional plane.  We shall prefer the latter method when the
axes are ordered in the standard way, and the first method when the
axes are ordered according to a permutation. For example, in
Figure~\ref{fig:TwoRepresentations}, the dashed path makes an
East-South turm at point $(x,y)$ with $x = 3$ and $y = 6$. However, if
we relabel the axes by means of the permutation $w = 4237615$, then it
makes sense to say that the path makes an East-South turn at row
$i = w(6) = 1$ and column $j = w(3) = 3$.  The use of both kind of
indexing shall be unavoidable. The reader should be aware that rows
correspond to the ordinate and columns to the abscissa, so a point
$(x,y)$ yields the cell $M_{w(y),w(x)}$ of a matrix $M$ and that a
cell $M_{i,j}$ is located at point $(w^{-1}(j),w^{-1}(i))$.

\section{\Prep of \sp{s}} 

We present here our main combinatorial tool to deal with \sp{s}, the \prep.

\begin{definition}
  The \emph{\prep} of $u \in \Bn$ is a triple
  $(\pi^{u},\lux,\luy)$ where $\pi^{u}$ is a
  discrete path, drawn on a grid $\setnp \times \setnp$ and joining
  the point $(0,n)$ to the point $(n,0)$,
  $\lux : \setn \rto \setn$, and
  $\luy : \setn \rto \setmn$. The triple
  $(\pi^{u},\lux,\luy)$ is constructed from $u$
  according to the following algorithm:
  \begin{inparaenum}[(i)]
  \item $u$ is written in full notation as a word and scanned from
    left to right: each
    positive letter yields an \Es (a length $1$ step along the
    $x$-axis towards the right), and each negative letter yields a \Ss
    (a length $1$ step along the $y$-axis towards the bottom) ;
  \item the labelling $\lux : \setn \rto \setn$ is obtained by
    projecting each positive letter on the $x$-axis,
  \item the labelling $\luy : \setn \rto \setmn$ is obtained by
    projecting each negative letter on the $y$-axis.
  \end{inparaenum}
\end{definition}

\begin{example}
  \label{example:prep}
  Consider the \sp $u \eqdef \wbar{2}316\wbar{4}\wbar{7}5$, in window
  notation, that is,
  $\wbar{5}74\wbar{613}2\wbar{2}316\wbar{4}\wbar{7}5$, in full
  notation.  Applying the algorithm above, we draw the path $\pi^{u}$
  and the labellings $\lux,\luy$ as follows:
  \begin{center}
    \includegraphics[page=1]{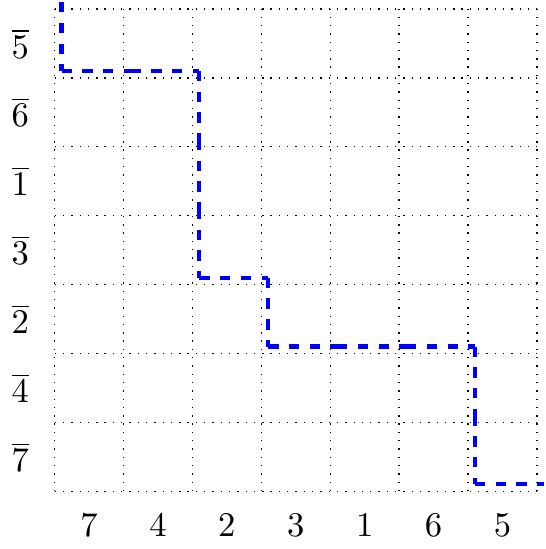}
  \end{center}
  Therefore, $\pi^{u}$ is the dashed 
  path, $\lux$ is
  the permutation $7423165$, and $\luy$ is
  $\wbar{7}\,\wbar{4}\,\wbar{2}\,\wbar{3}\,\wbar{1}\,\wbar{6}\,\wbar{5}$.
  \EndOfExample
\end{example}
It is easily seen that, for an arbitrary $u \in \Bn$,
$(\pi^{u},\lux,\luy)$ has the following
properties: 
\begin{enumerate}[(i)]
\item $\pi^{u}$ is symmetric along the diagonal,
\item $\lux \in \Sn$ and, moreover, it is the subword of
  $u$ of positive letters,
\item for each $x \in \setn$, $\luy(x) = \wbar{\lux(x)}$
  and, moreover, $\luy$ is the mirror of the subword of $u$
  of negative letters.
\end{enumerate}
In particular, we see that the data
$(\pi^{u},\lux,\luy)$ is redundant since
$\luy$ is completely determined from $\lux$.
\begin{proposition}
  The mapping $u \mapsto (\pi^{u},\lux)$ is a bijection
  from the set of \sp{s} $\Bn$ to the set of pairs $(\pi,w)$, where
  $w \in \Sn$ and $\pi$ is a discrete path from $(0,n)$ to $(n,0)$
  with \ESs{s} which, moreover, is symmetric along the diagonal.
\end{proposition}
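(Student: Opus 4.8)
The plan is to check that the map is well defined with the stated codomain and then to exhibit an explicit inverse. Well-definedness is already available: by the properties (i)--(iii) recorded above, for every $u\in\Bn$ the path $\pi^{u}$ is an East--South path from $(0,n)$ to $(n,0)$ which is symmetric along the diagonal, and $\lux\in\Sn$; hence $u\mapsto(\pi^{u},\lux)$ does land in the set of pairs $(\pi,w)$ of the statement. It remains to prove bijectivity, and for this I would reconstruct $u$ from $(\pi,w)$.

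The first step is to note that the full-notation word of $u$ is already determined by $(\pi^{u},\lux)$. Indeed, $\pi^{u}$ records the sign pattern of $u$ (its $j$-th step is East or South according as the $j$-th letter of $u$ is positive or negative); the subword of positive letters of $u$ equals $w\eqdef\lux$ by (ii); and the subword of negative letters of $u$ equals $\wbar{w_{n}}\,\wbar{w_{n-1}}\cdots\wbar{w_{1}}$ by (iii) together with $\luy(x)=\wbar{\lux(x)}$. So $u$ is recovered by interleaving $w_{1}w_{2}\cdots w_{n}$ into the East positions of $\pi^{u}$ and $\wbar{w_{n}}\cdots\wbar{w_{1}}$ into the South positions. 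This yields injectivity immediately and, at the same time, defines a candidate inverse $\Phi$ sending a pair $(\pi,w)$ to the word $v=v_{1}\cdots v_{2n}$ obtained by exactly this interleaving (an East--South path from $(0,n)$ to $(n,0)$ has exactly $n$ steps of each type, so all $2n$ letters of $v$ are assigned).

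The substantive part is to check that $\Phi(\pi,w)$ is genuinely an element of $\Bn$, that is, that $v$ is a permutation of $\setpmn$ satisfying $v_{-i}=-v_{i}$. It is a permutation of $\setpmn$ since $\{w_{1},\dots,w_{n}\}=\setn$ and $\{\wbar{w_{n}},\dots,\wbar{w_{1}}\}=\setmn$. For the symmetry: $\pi$ being symmetric along the diagonal means that reversing its step sequence and interchanging East with South leaves $\pi$ unchanged, so the $j$-th step is East exactly when the $(2n+1-j)$-th step is South; therefore, if $v_{j}>0$ is the $k$-th East step counted from the left, then $v_{2n+1-j}$ is the $k$-th South step counted from the right, hence the $(n+1-k)$-th South step from the left, which by construction is $\wbar{w_{k}}=-v_{j}$. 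Thus $\Phi(\pi,w)\in\Bn$, and by construction $\pi^{\Phi(\pi,w)}=\pi$ and $\lux[{\Phi(\pi,w)}]=w$; this gives surjectivity, and together with injectivity the proposition follows.

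I expect the only delicate point to be this last verification: tracking, through the mirror relation $u_{-i}=-u_{i}$, that the $k$-th positive letter of the full word (which comes from $w$) is forced to be the negative of the $k$-th negative letter counted from the opposite end (which comes from the reversed barred copy of $w$). Everything else is routine bookkeeping with the definitions.
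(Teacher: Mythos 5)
Your proof is correct and complete; note that the paper itself supplies no argument here (it explicitly leaves the verification to the reader), so there is nothing to diverge from. Your explicit inverse $\Phi$ — interleaving $w_{1}\cdots w_{n}$ into the East positions and $\wbar{w_{n}}\cdots\wbar{w_{1}}$ into the South positions — is exactly the right construction, and you correctly identify and carry out the one non-trivial check, namely that the diagonal symmetry of $\pi$ (the $j$-th step is East iff the $(2n+1-j)$-th step is South) forces the mirror condition $v_{-i}=-v_{i}$ on the interleaved word.
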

We leave it to the reader to verify the above statement.
Next, we argue for the interest of this representation by looking at
the inversion set of a \sp.  According to the definition in
\eqref{def:descinvB}, the type $\B$ inversions of a \sp can be split
into its positive inversions, the pairs $(i,j)$ with
$1 \leq i < j \leq n$, and the negative ones, those of the form
$(i,j)$ with $i \leq 0$ and $1 \leq \abs{i} \leq j \leq n$.
We claim that the positive inversions of $u$ are the type $\A$
inversions of $\lux$ and that its negative inversions are of the form
$(\luy(y),\lux(x))$ such that $1\leq y \leq x \leq n$ and the cell
$(x,y)$ lies below $\pi^{u}$.
This idea is exemplified in Figure~\ref{fig:negInvs} with the \sp
$\wbar{2}316\wbar{4}\wbar{7}5$ from Example~\ref{example:prep}.
\begin{figure}[h!]
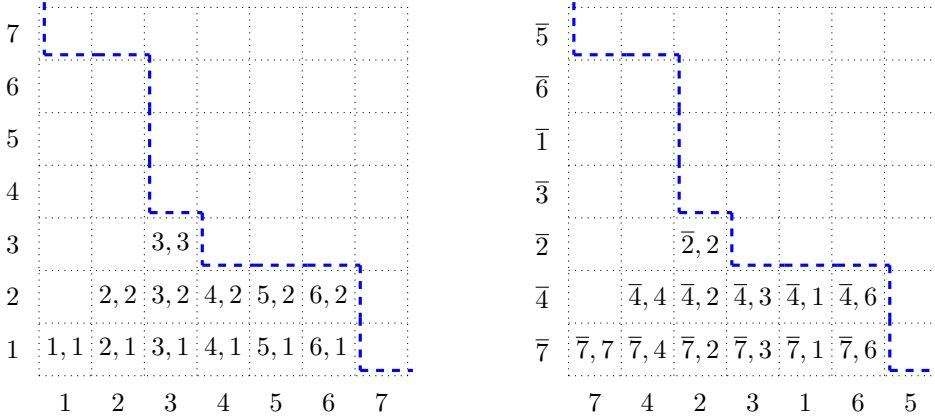

  \centering
  \includegraphics[page=2]{afirstSPermutation}
  \qquad\qquad
  \includegraphics[page=3]{afirstSPermutation}
  \caption{Negative inversions of $\wbar{2}316\wbar{4}\wbar{7}5$,
    indexed on the left by the abscissa and ordinate and on the
    right by $\luy,\lux$}
  \label{fig:negInvs}
\end{figure}
Notice that, when $1 \leq x < y \leq n$ and $(x,y)$ lies below
$\pi^{u}$, then $(y,x)$ lies below $\pi^{u}$ as well (since $\pi^{u}$
is symmetric along the diagonal) and therefore, according to our
claim, $(\luy(x),\lux(y))$ is a negative inversion of $u$.
If we identify, when $x < y$, the pair $(\luy(y),\lux(x))$ with
$(\luy(x),\lux(y))$, then we can simply say that the negative
inversions of $u$ are of the form $(\luy(y),\lux(x))$ for $(x,y)$
below $\pi^{u}$.

We collect these observations in a formal statement.
\begin{proposition}
  \label{prop:charInversions}
  Let $u \in \Bn$.  For each $i,j$ with
  $1 \leq \abs{i} \leq j \leq n$, $(i,j) \in \invB{u}$ if and only if
  either $1 \leq i < j \leq n$ and $(i,j) \in \inv{\lux}$
  or $i < 0$ and
  $((\lux)^{-1}(\minus{i}),(\lux)^{-1}(j))$ lies 
  below the path $\pi^{u}$.
\end{proposition}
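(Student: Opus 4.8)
The plan is to split the equivalence along its two disjuncts, after first recording how to read off the word of $u$ whether a given cell lies below $\pi^u$. I regard $u$ as a word of length $2n$ over $\setpmn$ (full notation), so that the positive letters of $u$ correspond to the East steps of $\pi^u$ and the negative letters to the South steps. I would first note that, for $1 \le c \le n$, the $c$-th East step of $\pi^u$ (counted from the start $(0,n)$) is the one produced by the $c$-th positive letter of $u$, which — recalling that $\lux$ is the subword of positive letters of $u$ — is $\lux(c)$; and that, since each of the $n$ South steps lowers the height by one, this East step is performed at height $n - s_c$, where $s_c$ denotes the number of negative letters of $u$ occurring strictly before $\lux(c)$. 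Because $\pi^u$ uses only East and South steps, the cell $(c,r)$ (with $1 \le r \le n$) lies below $\pi^u$ exactly when $r \le n - s_c$. This reduces the whole statement to bookkeeping on the occurrences of letters in $u$.

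For the first disjunct I would take $1 \le i < j \le n$: both letters $i$ and $j$ are positive, hence they are respectively the $(\lux)^{-1}(i)$-th and the $(\lux)^{-1}(j)$-th positive letter of $u$; consequently $u^{-1}(i) > u^{-1}(j)$, i.e.\ $i$ occurs to the right of $j$ in $u$, if and only if $(\lux)^{-1}(i) > (\lux)^{-1}(j)$, that is, if and only if $(i,j) \in \inv{\lux}$. (When $i = j$ both sides fail, so assuming $i < j$ costs nothing.)

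For the second disjunct I would take $i = -a$ with $1 \le a \le j \le n$. From the defining symmetry $u_{-k} = -u_k$ one gets $u^{-1}(-a) = -\,u^{-1}(a)$ and, more usefully, that $q \mapsto -q$ is a bijection between the positions strictly left of $u^{-1}(-a)$ and the positions strictly right of $u^{-1}(a)$ that interchanges positive and negative letters; since $u$ has exactly $n$ negative letters and the position $u^{-1}(a)$ carries the positive letter $a$, this yields
\begin{equation*}
  \#\{\,\text{positive letters of } u \text{ strictly before } -a\,\}
  \;=\; n - \#\{\,\text{negative letters of } u \text{ strictly before } a\,\}
  \;=\; n - s_{(\lux)^{-1}(a)} .
\end{equation*}
On the other hand, $(-a,j) \in \invB{u}$ says that $j$ occurs before $-a$ in $u$; as $j$ is the $(\lux)^{-1}(j)$-th positive letter of $u$, this holds if and only if at least $(\lux)^{-1}(j)$ positive letters occur before $-a$, i.e.\ if and only if $(\lux)^{-1}(j) \le n - s_{(\lux)^{-1}(a)}$. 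By the criterion of the first step, this is exactly the assertion that the cell $((\lux)^{-1}(a),(\lux)^{-1}(j))$ lies below $\pi^u$.

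The routine part is the two ``occurs before/after $\iff$ inequality of ranks'' translations. The step I expect to need care is the opening one: matching the intuitive picture of a ``cell below $\pi^u$'' with the numerical height $n - s_c$ of the $c$-th East step, and keeping strict versus weak inequalities consistent; one should also record that the diagonal symmetry of $\pi^u$ makes the cell $((\lux)^{-1}(a),(\lux)^{-1}(j))$ appearing in the statement interchangeable with the cell $((\lux)^{-1}(j),(\lux)^{-1}(a))$ used in the discussion preceding the proposition (where the normalisation $1 \le y \le x$ is imposed). I expect this indexing bookkeeping to be the only real obstacle.
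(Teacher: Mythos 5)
Your proof is correct and follows essentially the same route as the paper's: the positive case is handled identically via the fact that $\lux$ is the subword of positive letters of $u$, and the negative case reduces to the equivalence between ``$j$ occurs before $-a$ in $u$'' and ``the cell $((\lux)^{-1}(a),(\lux)^{-1}(j))$ lies below $\pi^{u}$''. The only difference is that the paper justifies this last equivalence by appeal to a figure, whereas you supply the explicit quantitative argument (the $c$-th East step sits at height $n-s_{c}$, and the central symmetry $u_{-k}=-u_{k}$ converts ``positive letters before $-a$'' into ``$n$ minus the negative letters before $a$''), which is a sharpening of the same idea rather than a divergence.
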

\begin{proof}
  Consider a pair $(i,j)$ such that $1 \leq \abs{i} \leq j \leq n$ and
  such that, if $0 < i$, then $i < j$.

  If $0 < i < j$, then both $i$ and $j$ 
  appear in $\lux$, which is the subword of $u$
  (written in full notation) of positive integers. Then
  $u^{-1}(i) > u^{-1}(j)$ if and only if
  $(\lux)^{-1}(i) > (\lux)^{-1}(j)$, that is,
  $(i,j) \in \invB{u}$ if and only if
  $(i,j) \in \inv{\lux}$.

  We suppose next that $i < 0$.  Observe that, as suggested in
  Figure~\ref{fig:xyWrtBluePath}, for $j > 0$ and $i < 0$, the cell
  identified by $\luy,\lux$ as $(i,j)$ is below $\pi^{u}$ if and only
  if the letter $j$ appears before the letter $i$ in $u$. Also, for
  such a pair, $j$ appears before $i$ in $u$ if and only if
  $(i,j) \in \inv{u}$, where $u$ is considered as a permutation of the
  set $\setpmn$ and the set of inversions is computed w.r.t the
  standard linear order on this
  set. 
  
  Therefore, if $(i,j)$ with $i < 0$ and
  $1 \leq \abs{i} \leq j \leq n$, then $(i,j) \in \invB{u}$ if and
  only if $(i,j) \in \inv{u}$ if and only if the cell identified by
  $\luy,\lux$ as $(i,j)$ is below $\pi^{u}$. 
  If, instead of using $\luy$ and $\lux$ to identify cells, we use the
  abscissa and ordinate, this happens when
  $((\luy)^{-1}(i),(\lux)^{-1}(j)) =
  ((\lux)^{-1}(\minus{i}),(\luy)^{-1}(j))$ is below $\pi_{u}$.
\end{proof}

\begin{figure}[h!]
  \centering
  \includegraphics[scale=0.8]{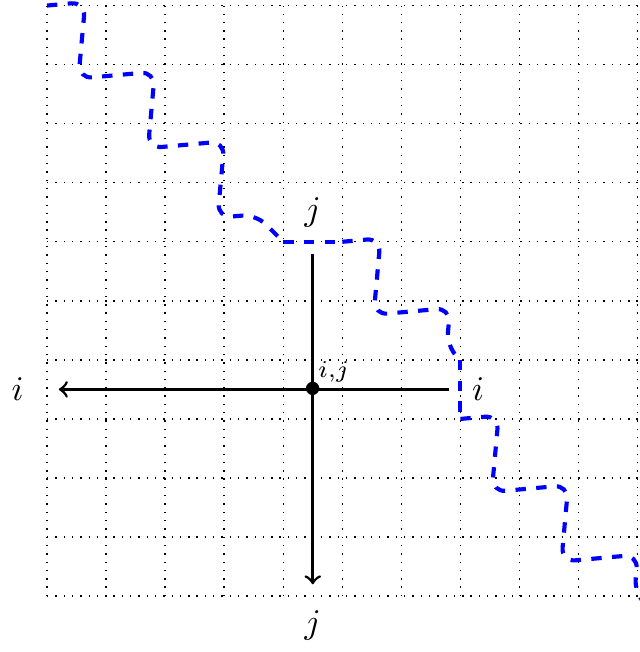}
  \caption{Characterizing inversions of the form $(i,j)$ with $i$
    negative}
  \label{fig:xyWrtBluePath}
\end{figure}

\begin{remark}
  \label{rem:charInversions}
  The fact that $(x,y)$ lies below $\pi^{u}$ if and only if $(y,x)$
  lies below $\pi^{u}$ suggests to look at negative inversions of $u$
  as unordered pairs of the form $\upair{\lux(x),\lux(y)}$ (doubletons
  or singletons) such that $(x,y)$ lies below $\pi^{u}$.  We shall
  explore this graph-theoretic approach in Section~\ref{threesholds}.
  We illustrate this with the \sp 
  of Example~\ref{example:prep}: we can identify the set of type $\B$
  inversions of $\wbar{2}316\wbar{4}\wbar{7}5$ with the disjoint union
  of the set of type $\A$ inversions of $7423165$ and the set of
  unordered pairs 
  \begin{align*}
    \set{\upair{7,7}, \upair{7,4}, & \upair{7,2}, \upair{7,3},
      \upair{7,1}, \upair{7,6}, 
      \upair{4,4}, \upair{4,2}, \upair{4,3}, \upair{4,1}, \upair{4,6},
      \upair{2,2}}\,.
  \end{align*}
\end{remark}

\section{Simply barred permutations}
\label{sec:sbps}
We consider now a second way of representing \sp{s}.  We mostly
consider \sbp{s} as shorthands for \prep{s} of \sp{s}. While less
informative than \prep{s}, we shall observe that the enumerative
results of the following chapters mostly rely on this representation.
\begin{definition}
  A \emph{\sbp} of $\setn$ is a pair $(w,B)$ where $w \in \Sn$ and
  $B \subseteq \set{1,\ldots ,n}$. We let $\SBPn$ be the set of
  \emph{\sbp}{s} of $\setn$.
\end{definition}
We think of $B$ as a set of positions of $w$, the barred positions or
walls. 
We have added the adjective ``simply'' to ``barred permutation'' since
we do not require that $B$ is a superset of $\Desc{w}$, as for example
in \citep{GS78}. 
\begin{ex}
  We write a \sbp $(w,B)$ as a permutation divided into \emph{blocks}
  by the bars, placing a vertical bar after $w_{i}$ for each
  $i \in B$.  For example, $(w,B)= (7423165, \set{2,4,6})$ is written
  $74\vbar 23 \vbar 16 \vbar 5$.  
  Notice that we allow a bar to appear in the last position, for
  example $34\vbar1\vbar 265\vbar7\vbar$ stands for the \sbp
  $(3412657, \set{2,3,6,7})$. Thus, a bar appears in the last position
  if and only if the last block is empty.  The last block is indeed
  the only block that can be empty, which amounts to saying that
  consecutive bars are not allowed in \sbp{s}. This contrasts with
  other notions of barred permutation, for example the one appearing
  in the proof of the alternating sum formula for the \En{s}
  \citep[Theorem 1.11]{Bona2012}.
\end{ex}

Next, we describe a bijection---that we call $\psi$---from the set
$\SBPn$ to $\Bn$.  Let us notice that, in order to establish
equipotence of these two sets, other more straightforward bijections
are available. In the definition below, for
$w = w_{1}w_{2}\cdots w_{n}$, we let
$\wbar{w} \eqdef \wbar{w_{1}}\wbar{w_{2}}\cdots
\wbar{w_{n}}$. 
Moreover, we need to explain what we mean for the \emph{upper
  \antidiagonal} of a subgrid.  Notice that, in a grid
$\setnp \times \setnp$, we have two discrete paths that are closest to
the the line $y = n -x$. We call them the lower and upper (discrete)
\antidiagonal, respectively. This is suggested below with the lower
and the upper \antidiagonal on the left and on the center,
respectively.
\begin{center}
  \includegraphics[scale=0.9]{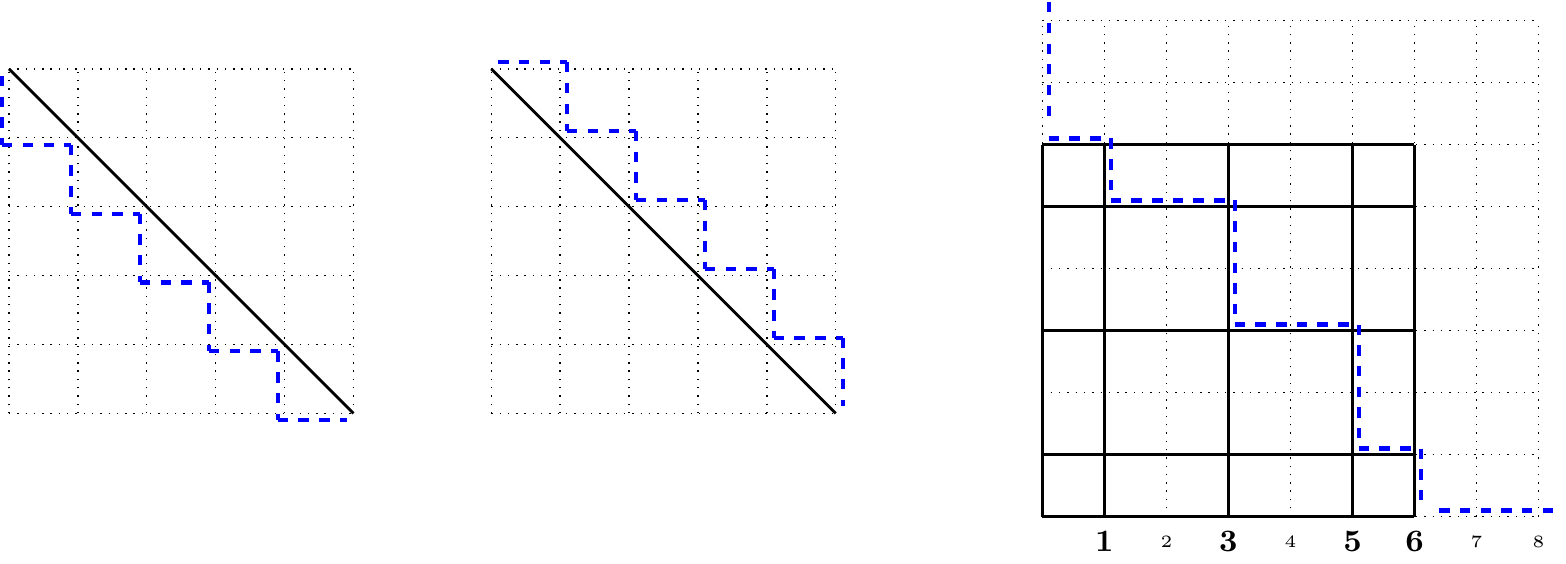}
\end{center}
Next, a subset $B \subseteq \setn$ determines a subgrid
$(B \cup \set{0}) \times (B \cup \set{0})$ which comes with its own
upper antidiagonal. We can extend this path with South steps before
and East steps after, so to obtain a path from $(0,n)$ to
$(n,0)$. We call this path the \emph{upper antidiagonal of the
  subgrid}. An example appears above on the right, where the subgrid
is determined by $B = \set{1,3,5,6}$.

\begin{definition}
  \label{def:psi}
  For $(w,B) \in \SBPn$, we define the \sp $\psi(w,B) \in \Bn$
  according to the following algorithm:
  \begin{inparaenum}[(i)]
  \item 
    draw the grid $\setnp \times \setnp$; 
  \item since $B \subseteq \setn$,
    $(B \cup \set{0})\times (B\cup \set{0})$ defines a subgrid of
    $\setnp \times \setnp$, construct the upper \antidiagonal $\pi$ of
    this subgrid;
  \item $\psi(w,B)$ is the \sp $u$ whose \prep
    $(\pi^{u},\lux,\luy)$ equals to $(\pi,w,\wbar{w})$.
  \end{inparaenum}
\end{definition}

\begin{ex}
  \label{example:grid}
  The construction just described can be understood as raising the
  bars and transforming them into a grid.  For example, for the \sbp
  $74\vbar 2\vbar 31 6\vbar 5$ (that is, $(w,B)$ with $w = 7423165$ and
  $B = \set{2,3,6}$) the construction is as follows:
  \begin{center}
    \includegraphics{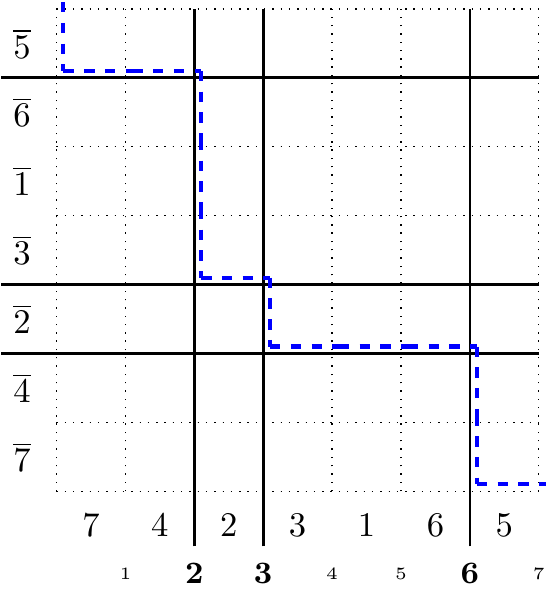}
  \end{center}
  The upper \antidiagonal of the subgrid 
  yields the
  dashed path above. The resulting \sp $\psi(w,B)$ is
  $\wbar{2}316\wbar{4}\wbar{7}5$ as from
  Example~\ref{example:prep}.
\end{ex}

The inverse image of $\psi$ can be constructed according to the
following algorithm: for $u \in \Bn$
\begin{inparaenum}[(i)]
\item construct the  \prep $(\pi^{u},\lux,\luy)$
  of $u$,
\item insert a bar in $w$ at each vertical step of $\pi^{u}$ (and remove
  consecutive bars),
\item remove a bar at position $0$ if it exists.
\end{inparaenum}
Said otherwise, $(w,B) = \psi^{-1}(u)$ is obtained from $u$ by
transforming each negative letter into a bar, by removing consecutive
bars, and then by removing a bar at position $0$ if needed.

Even if we consider \sbp{s} as shorthands for \prep{s} of \sp{s}, some
remarks are due now:
\begin{lemma}
  \label{lemma:barsxyturns}
  If $u = \psi(w,B)$, then there is a bijection between the set $B$ of
  bars and the set of \xyturn{s} of $\pi^{u}$.
\end{lemma}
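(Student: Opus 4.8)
The plan is to reduce the lemma to an explicit description of the path $\pi^u$ in terms of the bar set $B$. Write $B = \{b_1 < b_2 < \cdots < b_k\}$ and set $b_0 := 0$. By Definition~\ref{def:psi}, $\pi^u$ is the upper \antidiagonal of the subgrid $(B\cup\{0\})\times(B\cup\{0\})$, extended by $n - b_k$ South steps before it (joining $(0,n)$ to $(0,b_k)$) and $n - b_k$ East steps after it (joining $(b_k,0)$ to $(n,0)$). The idea is that these two ``padding'' runs touch the rest of the path only through South-then-East transitions, so all the \xyturn{s} already live in the \antidiagonal of the subgrid, which I can write down by hand.

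First I would record the shape of that upper \antidiagonal. Writing $\mathtt{E}^m$ and $\mathtt{S}^m$ for runs of $m$ consecutive East steps, respectively South steps, the upper \antidiagonal of the subgrid, read from $(0,b_k)$ to $(b_k,0)$, is the concatenation $\prod_{i=1}^{k}\mathtt{E}^{b_i-b_{i-1}}\,\mathtt{S}^{b_{k-i+1}-b_{k-i}}$, i.e. an alternating sequence of East-runs and South-runs, \emph{each of them nonempty} because the $b_i$ are strictly increasing, beginning with an East-run and ending with a South-run; moreover after its $i$-th East-run the path sits at the point $(b_i, b_{k-i+1})$. Splicing in the South prefix and East suffix then gives $\pi^u = \mathtt{S}^{\,n-b_k}\,\bigl(\prod_{i=1}^{k}\mathtt{E}^{b_i-b_{i-1}}\,\mathtt{S}^{b_{k-i+1}-b_{k-i}}\bigr)\,\mathtt{E}^{\,n-b_k}$.

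Second I would read off the \xyturn{s}, i.e. the points where an East step is immediately followed by a South step. In the step word just written this happens exactly at the $k$ junctions where the $i$-th East-run is followed by the $i$-th South-run ($i = 1,\dots,k$), both nonempty; at every other junction — the South prefix meeting the first East-run, each interior South-run meeting the next East-run, and the last South-run meeting the East suffix — the transition goes South-then-East and so is not an \xyturn, and nothing of interest occurs inside a run. Hence $\pi^u$ has exactly $k$ \xyturn{s}, the $i$-th one from the left being $(b_i, b_{k-i+1})$, and the assignment $b_i \mapsto (b_i, b_{k-i+1})$ — equivalently, the rule sending a bar $b \in B$ to the unique \xyturn of $\pi^u$ whose abscissa is $b$ — is the asserted bijection.

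The one point that genuinely needs care is the first step: ``upper \antidiagonal of a subgrid'' was introduced only informally (via pictures) just before Definition~\ref{def:psi}, so I would make it precise and verify the form claimed above. This amounts to noting that a subgrid with $k+1$ nodes per side is, qua grid, an abstract $(k+1)\times(k+1)$ grid whose upper \antidiagonal is the alternating staircase taking one unit step at a time, and that re-embedding it into the ambient grid replaces the $i$-th abstract East step by $b_i-b_{i-1}$ unit East steps and the rows symmetrically — which is exactly $\prod_{i=1}^{k}\mathtt{E}^{b_i-b_{i-1}}\,\mathtt{S}^{b_{k-i+1}-b_{k-i}}$. I would also dispatch the degenerate cases in passing: if $B=\emptyset$ then $\pi^u = \mathtt{S}^n\mathtt{E}^n$ has no \xyturn, matching $|B|=0$, and if $n\in B$ the South prefix and East suffix are empty; both are subsumed by the formula for $\pi^u$ above. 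Everything else is bookkeeping.
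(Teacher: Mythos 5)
Your proof is correct. The paper states this lemma without proof, treating it as evident from the construction of $\psi$; your argument supplies exactly the intended verification, by writing $\pi^{u}$ explicitly as $\mathtt{S}^{\,n-b_k}\prod_{i=1}^{k}\mathtt{E}^{b_i-b_{i-1}}\mathtt{S}^{b_{k-i+1}-b_{k-i}}\,\mathtt{E}^{\,n-b_k}$ and observing that the only East-then-South junctions are the $k$ interior ones, located at $(b_i,b_{k-i+1})$. The bookkeeping (nonemptiness of each run since the $b_i$ are strictly increasing, and the degenerate cases $B=\emptyset$ and $n\in B$) is all in order.
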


\begin{lemma}
  \label{lemma:ZeroDescent}
  We have $0 \in \descB{\psi(w,B)}$ if and only if $\Card{B}$ is odd.
\end{lemma}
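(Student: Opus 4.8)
The plan is to move the whole statement to the \prep side, i.e. to reason about the path $\pi^{u}$ attached to $u=\psi(w,B)$. First I would unwind the definition of a type~$\B$ descent at position $0$: it says $u_{1}<0$; and since the full notation of $u$ is the word $u_{-n}\cdots u_{-1}u_{1}\cdots u_{n}$ of length $2n$, the letter $u_{1}$ is the one occupying position $n+1$. Because $\pi^{u}$ is built by reading this word from left to right and sending each positive letter to an \Es and each negative letter to a \Ss, the condition $0\in\descB{u}$ is equivalent to: the $(n+1)$-th step of $\pi^{u}$ is a \Ss.

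The second step exploits that $\pi^{u}$ is symmetric along the diagonal (property~(i) of \prep{s}). Writing $s_{1},\ldots,s_{2n}$ for the successive steps of $\pi^{u}$, being symmetric along the diagonal is exactly the statement that for every $i$, $s_{i}$ is an \Es precisely when $s_{2n+1-i}$ is a \Ss (reflecting an \ESs path across the diagonal reverses it and swaps \Es{s} with \Ss{s}). Taking $i=n$ shows that $s_{n}$ and $s_{n+1}$ always have opposite types, so $\pi^{u}$ turns after its $n$-th step, at the point it reaches on the diagonal; and by the first step $0\in\descB{u}$ if and only if this central turn is an \xyturn.

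The third step controls the parity of the number of \xyturn{s} of $\pi^{u}$. A turn occurs after step $i$ (for $1\le i\le 2n-1$) exactly when $s_{i}\neq s_{i+1}$, and it is an \xyturn when $s_{i}$ is an \Es and $s_{i+1}$ a \Ss; the symmetry relation then shows this holds if and only if there is an \xyturn after step $2n-i$. Since $i\mapsto 2n-i$ is a fixed-point-free involution of $\set{1,\ldots,n-1}\cup\set{n+1,\ldots,2n-1}$, the \xyturn{s} occurring away from the center pair up, so their number is even; hence the central turn is an \xyturn if and only if the \emph{total} number of \xyturn{s} of $\pi^{u}$ is odd. Finally, Lemma~\ref{lemma:barsxyturns} identifies that total with $\Card{B}$, so combining the three steps yields $0\in\descB{\psi(w,B)}$ if and only if $\Card{B}$ is odd.

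The points I would handle with care are the position bookkeeping in the first step (that $u_{1}$, hence the letter that produces step $n+1$, carries the sign deciding whether $0$ is a descent), and the precise reading of ``symmetric along the diagonal'' as the relation ``$s_{i}$ is an \Es iff $s_{2n+1-i}$ is a \Ss'', on which the entire parity count rests; I would also note that indexing turns by the step number $i$ agrees with counting them as geometric corners, since $\pi^{u}$ is monotone and visits each point once. A more computational alternative reads $\pi^{u}$ directly from Definition~\ref{def:psi}: with $B=\set{b_{1}<\cdots<b_{k}}$ and $b_{0}=0$, the path is the alternating staircase of the $(k+1)\times(k+1)$ subgrid (East, South, East, South, $\ldots$, $k$ moves of each kind), with the $j$-th East move stretched to length $b_{j}-b_{j-1}$, the $j$-th South move stretched to length $b_{k-j+1}-b_{k-j}$, and then $n-b_{k}$ \Ss{s} prepended and $n-b_{k}$ \Es{s} appended; one checks that after $n$ steps the path has traversed exactly the first $k$ of these moves (their stretched lengths summing to $b_{k}$), so that step $n+1$ opens the $(k+1)$-th move, which is an \Es exactly when $k$ is even. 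I would present the symmetry argument as the main proof, since it is short and directly reuses Lemma~\ref{lemma:barsxyturns}.
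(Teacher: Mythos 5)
Your proof is correct and follows the same route as the paper's: reduce $0\in\descB{u}$ to the sign of $u_{1}$, hence to the type of the step of $\pi^{u}$ just after the diagonal, observe via the diagonal symmetry that this forces an \xyturn there exactly in the negative case, and conclude by a parity count of \xyturn{s} combined with Lemma~\ref{lemma:barsxyturns}. The paper states this argument in one sentence and leaves the pairing of off-diagonal \xyturn{s} implicit; your version usefully makes that step (and the step-index bookkeeping) explicit.
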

The lemma can immediately be verified by considering that
$0 \in \descB{u}$ if and only if the first letter in the window
notation of $u$ is negative, if and only if, in the \prep of
$\psi(w,B)$, the first step of $\pi^{u}$ after meeting the diagonal is
along the $y$-axis, in which case (and only in this case) $\pi^{u}$
makes an \xyturn on the diagonal. This happens exactly when $\pi^{u}$
has an odd number of \xyturn{s}.

\section{Descents from \sbp{s}}

We start investigating how the type $\B$ descent set can be recovered
from a \sbp.
\begin{proposition}
  \label{prop:descentsFromSBPs}
  For a \sbp $(w,B)$, we have
  \begin{align}
    \label{eq:descents}
    \desB{\psi(w,B)} & = 
    \Card{\desc{w} \setminus B} +
    \ceiling{\frac{\Card{B}}{2}}\,.
  \end{align}
\end{proposition}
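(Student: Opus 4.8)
The identity to prove is
$$\desB{\psi(w,B)} = \Card{\desc{w} \setminus B} + \ceiling{\tfrac{\Card{B}}{2}}\,,$$
so I would split the type $\B$ descent set of $u = \psi(w,B)$ into its positive part $\descBp{u}$ and the possible descent at position $0$, using $\desB{u} = \Card{\descBp{u}} + [\,0 \in \descB{u}\,]$. By Lemma~\ref{lemma:ZeroDescent}, the indicator $[\,0 \in \descB{u}\,]$ equals $1$ exactly when $\Card{B}$ is odd; since $\ceiling{\Card{B}/2} = \floor{\Card{B}/2} + [\,\Card{B}\text{ odd}\,]$, it therefore suffices to prove
$$\Card{\descBp{\psi(w,B)}} = \Card{\desc{w} \setminus B} + \floor{\tfrac{\Card{B}}{2}}\,.$$

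**Reducing $\descBp{u}$ to the path.** The positive descents of $u$ are the positions $1 \le i \le n-1$ with $u_i > u_{i+1}$ in full (window) notation. I would translate this into the \prep: consecutive letters $u_i u_{i+1}$ of $u$ correspond to consecutive steps of $\pi^u$, and the sign pattern of $(u_i, u_{i+1})$ falls into four cases — $(+,+)$, $(+,-)$, $(-,+)$, $(-,-)$ — i.e. steps East-East, East-South, South-East, South-South. In the $(+,+)$ case, both letters lie in the subword $\lux = w$ and a descent occurs iff the corresponding consecutive positions of $w$ form a descent that is \emph{not} separated by a bar (by Lemma~\ref{lemma:barsxyturns}, bars are exactly the \xyturn{s}, so a pair of consecutive East steps of $\pi^u$ projects to a pair of consecutive positions of $w$ with no bar between them — this is what $\desc{w}\setminus B$ counts, after checking that each bar either sits at a position that breaks a descent of $w$ or not, carefully separating the accounting). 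In the $(+,-)$ case $u_i > 0 > u_{i+1}$, always a descent; in $(-,+)$ never; in $(-,-)$ the two negative letters, read in full notation, are an ascending pair because $\luy$ is the \emph{mirror} of the subword of negative letters and $\luy(x) = \wbar{\lux(x)}$, so consecutive South steps never give a descent. Thus $\Card{\descBp{u}}$ equals (number of non-barred descents of $w$ among consecutive-East pairs) plus (number of East-South turns strictly inside the window, i.e.\ not the one possibly on the diagonal).

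**The combinatorial bookkeeping.** The heart of the argument is the count of East-South turns of $\pi^u$ that contribute a positive descent. By Lemma~\ref{lemma:barsxyturns} there are $\Card{B}$ \xyturn{s} in total; by Lemma~\ref{lemma:ZeroDescent} (and the remark following it) one of them lies on the diagonal iff $\Card{B}$ is odd, and that one corresponds to the descent at position $0$, not a positive one. The \xyturn{s} off the diagonal come in mirror-image pairs across the diagonal, and exactly one turn in each such pair lies in the window suffix $u_1\cdots u_n$ (the other lies in the prefix $u_{-n}\cdots u_{-1}$); so the number of \xyturn{s} that fall in the window and are not on the diagonal is precisely $\floor{\Card{B}/2}$. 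Meanwhile the $(+,+)$ descents in the window are exactly the descents of $w$ occurring at positions not in $B$: a bar at position $i$ replaces the pair $w_i w_{i+1}$ in the projection-to-$w$ view by an East step of $w_i$ followed (after the turn) by a South step, so the pair $w_i w_{i+1}$ is no longer consecutive-East and is removed from the $(+,+)$ count whether or not it was a descent of $w$ — hence we subtract $B$ from $\desc{w}$, giving $\Card{\desc{w}\setminus B}$. Adding the two contributions yields $\Card{\descBp{u}} = \Card{\desc{w}\setminus B} + \floor{\Card{B}/2}$, and combining with the $0$-descent indicator completes the proof.

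**Main obstacle.** The delicate point is the bookkeeping around bars and the diagonal: making sure (a) that a bar removes \emph{exactly one} consecutive-East adjacency of $w$ regardless of whether that adjacency was a descent, (b) that each off-diagonal \xyturn{s}-pair contributes exactly one turn to the window, and (c) that the lone possible on-diagonal turn is correctly attributed to the position-$0$ descent and to nothing in $\descBp{u}$. I expect the cleanest way to handle (a)–(c) uniformly is to count East-South turns of $\pi^u$ restricted to the region $x \le y$ (the window side) and to relate the non-turn descents directly to $\desc{w} \setminus B$ via the fact that $\lux = w$ is the East-step projection; the symmetry of $\pi^u$ along the diagonal then does the rest. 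The rest of the verification is the routine four-case sign analysis described above.
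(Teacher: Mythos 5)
Your overall architecture is sound and close to the paper's: you split off the position-$0$ descent via Lemma~\ref{lemma:ZeroDescent}, count the $(+,-)$ descents as the \xyturn{s} of $\pi^{u}$ strictly below the diagonal (there are indeed $\floor{\Card{B}/2}$ of them, by the symmetry of $\pi^{u}$ and the bijection of Lemma~\ref{lemma:barsxyturns}), and attribute the remaining descents to $\desc{w}\setminus B$. However, your four-case sign analysis contains a genuine error in the $(-,-)$ case, compensated by an overcount in the $(+,+)$ case. Two consecutive negative letters of the window \emph{can} form a descent: they are the labels $\luy(y)=\wbar{w_{y}}$ and $\luy(y-1)=\wbar{w_{y-1}}$ of two consecutive South steps, and $-w_{y}>-w_{y-1}$ holds iff $w_{y-1}>w_{y}$, i.e., iff $y-1\in\desc{w}$. (The ``mirror'' reverses the order of the letters \emph{and} negates them; these two reversals cancel.) In the paper's running example $u=\wbar{2}316\wbar{4}\wbar{7}5$, with $w=7423165$ and $B=\set{2,3,6}$, the pair $u_{5}u_{6}=\wbar{4}\,\wbar{7}$ is a descent, witnessing $1\in\desc{w}\setminus B$. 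Correspondingly, your claim that the $(+,+)$ descents account for \emph{all} of $\desc{w}\setminus B$ is false: in that example the only $(+,+)$ descent is $31$, which accounts for $4\in\desc{w}\setminus B$ but not for $1$.

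The correct statement, which is what the paper's proof establishes, is that each $j\in\desc{w}\setminus B$ contributes exactly one positive window descent: a $(+,+)$ descent $w_{j}w_{j+1}$ when the two East steps at abscissas $j,j+1$ lie below the diagonal, and a $(-,-)$ descent $\wbar{w_{j+1}}\,\wbar{w_{j}}$ when they lie above it (their mirror images being two consecutive South steps in the window). One also needs the small observation that a pair of consecutive East steps never straddles the diagonal---the $n$-th and $(n+1)$-th steps of $\pi^{u}$ are mirror images of each other, hence of different directions---so this dichotomy is exhaustive. With the $(-,-)$ case repaired in this way your argument goes through; as written, the two incorrect intermediate claims happen to cancel in the final sum, but the proof is not correct.
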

\begin{proof}
  Write $u = \psi(w,B)$ in window notation and divide it in maximal
  blocks of consecutive letters having the same sign. If the first
  block has negative sign, add an empty positive block in position $0$.
  Each change of sign $+-$ among consecutive blocks yields a descent.
  These changes of sign bijectively correspond to \xyturn{s} of $\pi^{u}$
  that lie on or below the diagonal.  By
  Lemma~\ref{lemma:barsxyturns}, each bar determines an \xyturn and,
  by symmetry of $\pi^{u}$ along the diagonal, the number of
  \xyturn{s} that are on or below the diagonal is
  $\ceiling{\frac{\Card{B}}{2}}$. Therefore this quantity counts the
  number of descents determined by a change of sign.

  The other descents of $\psi(w,B)$ are either of the form
  $w_{i}w_{i+1}$ with $w_{i} > w_{i+1}$ and $w_{i},w_{i+1}$ belonging
  to the same positive block, or of the form
  $\wbar{w_{i+1}}\wbar{w_{i}}$ with $w_{i} > w_{i+1}$ and
  $\wbar{w_{i}},\wbar{w_{i+1}}$ belonging to the same negative block.
  These descents are in bijection with the descent positions of $w$
  that do not belong to the set $B$.
\end{proof}

For each $k \in \set{0,1,\ldots ,n}$, we let in the following $\SBPnk$
be the set of \sbp{s} $(w,B) \in \SBPn$ such that
$\Card{\desc{w} \setminus B} + \myexpr{B} = k$.
\begin{corollary}
  The set $\SBPnk$ is in bijection with the set of \sp{s} of $\setn$
  with $k$ descents.
\end{corollary}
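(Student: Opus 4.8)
The plan is to read the corollary as an immediate consequence of Proposition~\ref{prop:descentsFromSBPs} together with the bijection $\psi$ of Definition~\ref{def:psi}. First I would recall that $\psi : \SBPn \to \Bn$ is a bijection (this is established, via its explicitly described inverse, in the discussion following Definition~\ref{def:psi}). So it suffices to understand what $\psi$ does to the relevant statistics and then restrict it.

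\medskip

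The key steps, in order, are as follows. First, fix $k \in \{0,1,\ldots,n\}$ and observe that, by definition, $\SBPnk$ is precisely the set of $(w,B)\in\SBPn$ with $\Card{\desc{w}\setminus B} + \myexpr{B} = k$. Second, apply Proposition~\ref{prop:descentsFromSBPs}, which says that for every $(w,B)\in\SBPn$ one has $\desB{\psi(w,B)} = \Card{\desc{w}\setminus B} + \myexpr{B}$. Combining these two facts, $(w,B)\in\SBPnk$ if and only if $\desB{\psi(w,B)} = k$, i.e.\ if and only if $\psi(w,B)$ lies in the set $\{u\in\Bn \mid \desB{u}=k\}$. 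Third, since $\psi$ is a bijection, its restriction to the preimage $\psi^{-1}\bigl(\{u\in\Bn\mid \desB{u}=k\}\bigr) = \SBPnk$ is a bijection onto $\{u\in\Bn\mid \desB{u}=k\}$, which by the definition in \eqref{def:descinvB} is exactly the set of signed permutations of $\setn$ with $k$ type~$\B$ descents.

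\medskip

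There is no real obstacle here: the content is entirely in Proposition~\ref{prop:descentsFromSBPs}, and the corollary is just the statement that an established bijection, restricted to a level set of a statistic, is a bijection onto the corresponding level set on the other side. The only thing to be careful about is purely bookkeeping — making sure the quantity $\Card{\desc{w}\setminus B} + \myexpr{B}$ used to define $\SBPnk$ is literally the right-hand side of \eqref{eq:descents}, so that the two level sets correspond under $\psi$ without any further argument. One could optionally add the remark that, since $\Card{\Bn} = 2^n n!$ is partitioned by the $\desB$-statistic into the classes counted by $\sEulB{n}{k}$, this also re-proves $\Card{\SBPnk} = \sEulB{n}{k}$, but that is not needed for the statement as phrased.
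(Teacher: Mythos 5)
Your proposal is correct and matches the paper's intent exactly: the corollary is stated without proof as an immediate consequence of Proposition~\ref{prop:descentsFromSBPs} and the bijectivity of $\psi$, which is precisely the restriction-to-level-sets argument you give.
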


We introduce next \wbp{s} only as a tool to index \sbp{s}
independently of the even/odd cardinalities of their set of bars.
\begin{definition}
  A \emph{\wbp} of $\setn$ is a pair $(w,B)$ where $w$ is a permutation
  of $\setn$ and $B \subseteq \set{0,\ldots ,n}$ is a set of positions
  (the bars). We let $\WBPn$ be the set of
  \emph{\wbp}{s} of $\setn$.
\end{definition}

For $D \subseteq \setn$, let $\xiD : P(\setnp) \rto P(\setn)$ be the
map defined by
\begin{align*}
  \xiD(B) & \eqdef (D \Delta B) \setminus \set{0} = D \Delta (B
  \setminus \set{0}) \,,
\end{align*}
where $\Delta$ stands for the symmetric difference in $P(\setnp)$.
Then, we define $\Theta_{n} : \WBPn \rto \SBPn$ by
\begin{align*}
  \Theta_{n}(w,B) & \eqdef  (w,\xiD[\desc{w}](
  B))\,.
\end{align*}
We shall investigate properties of the map $\Theta_{n}$, for which
we first need to collect properties of the map $\xiD$.
These are listed in the following lemmas.
\begin{lemma}
  The map $\xiD$ is a surjective two-to-one map. That is, each
  $C \subseteq \setn$ has exactly two preimages,
  $B_{1} \eqdef D \Delta C$ and $B_{2} = (D \Delta C) \cup \set{0}$.
\end{lemma}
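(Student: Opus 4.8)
The statement is essentially a bookkeeping fact about symmetric differences, and the only thing one must keep track of carefully is the role of the element $0$. The key observation is that $D \subseteq \setn$, so $0 \notin D$; hence for any $B \in P(\setnp)$ we have
\begin{align*}
  \xiD(B) & = D \Delta (B \setminus \set{0}) \subseteq \setn\,,
\end{align*}
which already confirms that $\xiD$ indeed lands in $P(\setn)$. Moreover $\xiD$ factors as the composite of two maps: first $B \mapsto B \setminus \set{0}$, which is a surjective two-to-one map $P(\setnp) \rto P(\setn)$ (each $B' \subseteq \setn$ has exactly the preimages $B'$ and $B' \cup \set{0}$, which are distinct since $0 \notin B'$), and then $B' \mapsto D \Delta B'$, which is a bijection of $P(\setn)$ onto itself because $\Delta$ with $D$ is an involution, $D \Delta (D \Delta B') = B'$. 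A composite of a two-to-one surjection with a bijection is again a two-to-one surjection, which gives the qualitative part of the claim.

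\textbf{Identifying the preimages.} For the explicit description, fix $C \subseteq \setn$ and compute which $B$ satisfy $\xiD(B) = C$. Using the factorization, $\xiD(B) = C$ holds iff $D \Delta (B \setminus \set{0}) = C$ iff $B \setminus \set{0} = D \Delta C$ (applying the involution $D \Delta (-)$ and noting $B \setminus \set{0}, D, C \subseteq \setn$). Since $D \Delta C \subseteq \setn$, in particular $0 \notin D \Delta C$, the equation $B \setminus \set{0} = D \Delta C$ has exactly the two solutions $B_{1} = D \Delta C$ and $B_{2} = (D \Delta C) \cup \set{0}$, and these are distinct. Both lie in $P(\setnp)$, and $\xiD(B_{1}) = C$ shows surjectivity directly.

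\textbf{Expected obstacle.} There is no real obstacle here; it is a routine verification. The only point that requires a little care is not to blur the distinction between membership in $\setnp$ and in $\setn$ when manipulating the symmetric differences — specifically one uses $0 \notin D$ twice (once to see $\xiD(B) \subseteq \setn$ and once to see that $B_{1}$ and $B_{2}$ are genuinely two different preimages). Writing the argument via the factorization $B \mapsto B \setminus \set{0} \mapsto D \Delta (B \setminus \set{0})$ makes both of these transparent.
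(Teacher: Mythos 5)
Your proof is correct and complete; the paper itself states this lemma without proof, and your factorization of $\xiD$ as the two-to-one surjection $B \mapsto B \setminus \set{0}$ followed by the involution $B' \mapsto D \Delta B'$ is exactly the routine verification the author leaves to the reader. The two points you flag — using $0 \notin D$ to get $\xiD(B) \subseteq \setn$ and to see that $B_{1} \neq B_{2}$ — are indeed the only places where any care is needed.
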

In view of $D \setminus \xiD(B) = D \cap B$, the following relation
holds if $0 \not\in B$:
\begin{align*}
  \Card{D} + \Card{B} & = 2\Card{D \setminus \xiD(B)} + \Card{\xiD(B)}\,.
\end{align*}
Given this relation, the reader shall have no difficulties verifying
the properties stated in the next lemma.
\begin{lemma}
  \label{lemma:counting}
  The parity of $\Card{\xiD(B)}$ can be computed from
  $\Card{D}$ and $\Card{B}$ according to the following rules:
  \begin{enumerate}
  \item if $\Card{D} + \Card{B} = 2k$ and $0 \not\in B$, then
    $\Card{D \setminus \xiD(B)} + \MyExpr{B} = k$ and, in this
    case, $\Card{\xiD(B)}$ is even;
  \item if $\Card{D} + \Card{B} = 2k$ and $0 \in B$, then
    $\Card{D \setminus \xiD(B)} + \MyExpr{B} = k$ and, in this
    case, $\Card{\xiD(B)}$ is odd;
  \item if $\Card{D} + \Card{B} = 2k +1$ and $0 \not\in B$, then
    $\Card{D \setminus \xiD(B)} + \MyExpr{B} = k +1$ and, in this
    case, $\Card{\xiD(B)}$ is odd;
  \item if $\Card{D} + \Card{B} = 2k +1 $ and $0 \in B$, then
      $\Card{D \setminus \xiD(B)} + \MyExpr{B} = k$ and, in this case,
      $\Card{\xiD(B)}$ is even.
    \end{enumerate}
  \end{lemma}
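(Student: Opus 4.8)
The plan is to derive all four statements from the single counting identity that precedes the lemma, together with an equally elementary variant of it covering the case $0 \in B$. First I would abbreviate $B' \eqdef B \setminus \set{0}$, so that $\xiD(B) = D \Delta B'$, and recall the elementary count $\Card{D \Delta B'} = \Card{D} + \Card{B'} - 2\Card{D \cap B'}$. Since $D \setminus \xiD(B) = D \cap B'$ (as already observed, $D \setminus (D \Delta B') = D \cap B'$), this rearranges to
\begin{align*}
  \Card{D} + \Card{B'} & = \Card{\xiD(B)} + 2\Card{D \setminus \xiD(B)}\,.
\end{align*}
Now I split according to whether $0 \in B$. If $0 \notin B$ then $\Card{B'} = \Card{B}$ and the identity reads $\Card{D} + \Card{B} = \Card{\xiD(B)} + 2\Card{D \setminus \xiD(B)}$ (this is exactly the relation stated in the excerpt); if $0 \in B$ then $\Card{B'} = \Card{B} - 1$ and it reads $\Card{D} + \Card{B} = \Card{\xiD(B)} + 2\Card{D \setminus \xiD(B)} + 1$.

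With these two relations in hand, each of the four cases is immediate. In each case I substitute the hypothesis on $\Card{D} + \Card{B}$ (either $2k$ or $2k+1$) and solve for $\Card{\xiD(B)}$, obtaining $\Card{\xiD(B)} = 2m + \varepsilon$ with $m \eqdef k - \Card{D \setminus \xiD(B)}$ and $\varepsilon \in \set{0, -1, 1}$ determined by the case. This simultaneously exhibits the claimed parity of $\Card{\xiD(B)}$ and, on taking $\ceiling{\cdot/2}$, yields $\MyExpr{B} = m$ (respectively $\MyExpr{B} = m+1$ in case 3), i.e.\ $\Card{D \setminus \xiD(B)} + \MyExpr{B} = k$ (respectively $k+1$). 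For instance, in case 2 ($\Card{D}+\Card{B} = 2k$, $0 \in B$) one gets $\Card{\xiD(B)} = 2m - 1$, which is odd and has $\ceiling{(2m-1)/2} = m$; in case 3 ($\Card{D}+\Card{B} = 2k+1$, $0 \notin B$) one gets $\Card{\xiD(B)} = 2m + 1$, odd, with $\ceiling{(2m+1)/2} = m+1$; cases 1 and 4 give $\Card{\xiD(B)} = 2m$, even.

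There is no genuine obstacle here: once the preliminary identity is recorded in both its $0 \notin B$ and $0 \in B$ forms, the lemma is pure bookkeeping over the four parity/membership combinations. The only point requiring a moment's care is the behaviour of the ceiling when $\Card{\xiD(B)}$ is odd (cases 2 and 3), where one must keep track of whether $\ceiling{(2m \pm 1)/2}$ equals $m$ or $m+1$; everything else is the same one-line computation repeated four times, exactly as the author anticipates when writing that ``the reader shall have no difficulties verifying'' it.
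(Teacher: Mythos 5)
Your proof is correct and is essentially the paper's own argument: the paper records exactly your identity $\Card{D}+\Card{B} = \Card{\xiD(B)} + 2\Card{D\setminus\xiD(B)}$ for $0\notin B$ just before the lemma and leaves the four-case bookkeeping to the reader, which you carry out faithfully, including the needed $+1$ correction when $0\in B$ and the careful evaluation of the ceilings in the odd cases.
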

  The next lemma restates these properties on the side of preimages.
  \begin{lemma}
    \label{lemma:inverseImages}
    For $C \subseteq \setn$, let $B_{1},B_{2}$ be the two preimages of
    $C$ via $\xiD$. Then, the following statements hold:
    \begin{enumerate}
    \item if $\Card{D \setminus C} + \Myexpr{C} = k$ and $\Card{C}$ is
      even, then the two preimages $B_{1},B_{2}$ of $C$ satisfy
      $\Card{D} + \Card{B_{1}} = 2k$ and
      $\Card{D} + \Card{B_{2}} = 2k +1$;
    \item if $\Card{D \setminus C} + \myexpr{C} = k$ and $\Card{C}$ is
      odd, then the two preimages $B_{1},B_{2}$ of $C$ satisfy
      $\Card{D} + \Card{B_{1}} = 2k -1$ and
      $\Card{D} + \Card{B_{2}} = 2k$.
    \end{enumerate}
  \end{lemma}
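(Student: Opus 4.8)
The plan is to reduce the whole statement to the single numerical identity
$\Card{D} + \Card{D \Delta C} = 2\Card{D \setminus C} + \Card{C}$, which is just the displayed relation stated right before Lemma~\ref{lemma:counting}, specialized to $B = B_1$. First I would recall from the preceding (two-to-one) lemma that the two preimages of $C$ under $\xiD$ are $B_1 \eqdef D \Delta C$ and $B_2 \eqdef (D \Delta C) \cup \set{0}$. Since $C, D \subseteq \setn$ we have $D \Delta C \subseteq \setn$, so $0 \notin B_1$, $0 \in B_2$, and $\Card{B_2} = \Card{B_1} + 1$. Moreover $\xiD(B_1) = C$, and using $D \setminus \xiD(B) = D \cap B$ one gets $D \setminus \xiD(B_1) = D \cap B_1 = D \cap (D \Delta C) = D \setminus C$; substituting into the displayed relation (valid because $0 \notin B_1$) gives $\Card{D} + \Card{B_1} = 2\Card{D \setminus C} + \Card{C}$.

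From here the argument is a parity case split on $\Card{C}$. If $\Card{C}$ is even, then $\myexpr{C} = \Card{C}/2$, so the hypothesis $\Card{D \setminus C} + \myexpr{C} = k$ reads $\Card{D \setminus C} = k - \Card{C}/2$; plugging this into the identity above yields $\Card{D} + \Card{B_1} = 2k$, and hence $\Card{D} + \Card{B_2} = 2k + 1$, which is item~(1). If $\Card{C}$ is odd, then $\myexpr{C} = (\Card{C}+1)/2$, so $\Card{D \setminus C} = k - (\Card{C}+1)/2$; substituting gives $\Card{D} + \Card{B_1} = 2k - 1$, and hence $\Card{D} + \Card{B_2} = 2k$, which is item~(2).

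An alternative I might mention is to not compute at all but simply invert Lemma~\ref{lemma:counting}: its four cases are implications whose hypotheses run over all pairs given by the parity of $\Card{D} + \Card{B}$ and whether $0 \in B$, so one reads off the preimage statement by matching the conclusion $\Card{D \setminus \xiD(B)} + \Myexpr{B} = k$ together with the parity of $\Card{\xiD(B)}$ against $B_1$ (for which $0 \notin B_1$) and $B_2$ (for which $0 \in B_2$). That bookkeeping is the only place any care at all is needed. I do not expect a genuine obstacle here: the lemma is purely arithmetic, and the only subtlety is keeping the ceiling function straight in the odd case, which the identity $\Card{D} + \Card{B_1} = 2\Card{D \setminus C} + \Card{C}$ handles cleanly.
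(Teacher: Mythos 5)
Your proof is correct and matches the paper's (implicit) route: the paper offers no written proof, presenting the lemma merely as a restatement of Lemma~\ref{lemma:counting} on the side of preimages, and your direct computation from the displayed identity $\Card{D}+\Card{B_1}=2\Card{D\setminus C}+\Card{C}$ (together with $\Card{B_2}=\Card{B_1}+1$) is exactly the verification the author leaves to the reader. The parity case split and the alternative "invert Lemma~\ref{lemma:counting}" bookkeeping are both sound.
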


\begin{definition}
  For each $n \geq 0$ and $k \in \setnp[2n]$, we let $\WBPnk$ be the
  set of \wbp{s} $(w,B)$ such that $\Card{\Desc{w}} + \Card{B} = k$.
\end{definition}

\begin{proposition}
  \label{prop:ThetaDownK}
  For each $n \geq 0$ and $k \in \setnp$, the restriction of
  $\Theta_{n}$ to $\WBPnk[2k]$ yields a bijection $\Theta_{n,k}$ from
  $\WBPnk[2k]$ to $\SBPnk$.
\end{proposition}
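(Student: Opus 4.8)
The plan is to exploit that $\Theta_{n}$ leaves the first component untouched, so the whole statement reduces to a claim about the set map $\xiD$ for a fixed value $D = \desc{w}$ of the descent set. Concretely, I would fix $w \in \Sn$ and write $D \eqdef \desc{w}$; then $(w,B) \in \WBPnk[2k]$ is equivalent to $B \subseteq \setnp$ with $\Card{D} + \Card{B} = 2k$, while $(w,C) \in \SBPnk$ is equivalent to $C \subseteq \setn$ with $\Card{D \setminus C} + \Myexpr{C} = k$, and $\Theta_{n}(w,B) = (w,\xiD(B))$. Thus it suffices to show that $B \mapsto \xiD(B)$ restricts to a bijection between these two families of subsets; ranging over all $w \in \Sn$ then yields the proposition.

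For well-definedness of the restriction I would invoke cases (1) and (2) of Lemma~\ref{lemma:counting}: if $\Card{D} + \Card{B} = 2k$ then, whether or not $0 \in B$, one gets $\Card{D \setminus \xiD(B)} + \MyExpr{B} = k$, i.e. $\xiD(B)$ lies in the target family. Hence $\Theta_{n}$ does map $\WBPnk[2k]$ into $\SBPnk$.

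For bijectivity the key input is the two-to-one description of $\xiD$ together with Lemma~\ref{lemma:inverseImages}. Given $C \subseteq \setn$, its two $\xiD$-preimages are $B_{1} = D \Delta C$ and $B_{2} = (D \Delta C) \cup \set{0}$, which differ only in whether they contain $0$, so $\Card{D}+\Card{B_{1}}$ and $\Card{D}+\Card{B_{2}}$ are consecutive integers. Lemma~\ref{lemma:inverseImages} pins them down: once $\Card{D\setminus C} + \Myexpr{C} = k$, they are $2k$ and $2k+1$ when $\Card{C}$ is even and $2k-1$ and $2k$ when $\Card{C}$ is odd. Consequently, for each such $C$ exactly one of its two preimages $B$ satisfies $\Card{D}+\Card{B} = 2k$. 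This simultaneously gives injectivity (two elements of $\WBPnk[2k]$ with the same image would both be preimages in one fiber of $\xiD$, yet only one member of each fiber lies in $\set{B \mid \Card{D}+\Card{B}=2k}$) and surjectivity (the unique good preimage of any $(w,C) \in \SBPnk$ lies in $\WBPnk[2k]$ and $\Theta_{n}$ sends it to $(w,C)$), so $\Theta_{n,k}$ is a bijection.

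I do not expect a real obstacle: the delicate parity bookkeeping has already been isolated in Lemmas~\ref{lemma:counting} and~\ref{lemma:inverseImages}, and what remains is the observation that ``consecutive integers'' forces exactly one preimage into the prescribed fiber $\Card{D}+\Card{B}=2k$. The only point that needs a moment's care is making sure the two halves dovetail: Lemma~\ref{lemma:counting} certifies that the image of $\WBPnk[2k]$ satisfies precisely $\Card{D\setminus C}+\Myexpr{C}=k$, and Lemma~\ref{lemma:inverseImages} is exactly the converse statement read off on fibers, so together they deliver the bijection with no leftover cases.
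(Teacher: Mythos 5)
Your proof is correct and follows essentially the same route as the paper's: well-definedness via items (1)--(2) of Lemma~\ref{lemma:counting}, and injectivity/surjectivity via the two-to-one structure of $\xi_{D}$ together with Lemma~\ref{lemma:inverseImages}. The only difference is presentational---your ``exactly one preimage per fiber'' observation packages in a single step what the paper states as separate injectivity and surjectivity arguments.
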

\begin{proof}
    By the first two items of Lemma~\ref{lemma:counting}, the
    restriction of $\Theta_{n}$ to $\WBPnk[2k]$ takes values in
    $\SBPnk$. The restriction map is injective. Indeed, if
    $\Theta_{n}(w,B) = \Theta_{n}(w',B')$, then $w = w'$ and, for
    $D = \Desc{w}$, $C = \xiD(B) = \xiD(B')$.
    Lemma~\ref{lemma:inverseImages} states that each
    $C \subseteq \setn$ has at most one $\xiD$-preimage $B$ satisfying
    $\Card{D} + \Card{B} = 2k$, whence $B = B'$.  This map is also
    surjective: using Lemma~\ref{lemma:inverseImages}, if
    $(w,C)\in \SBP_{n,k}$, $D = \Desc{w}$, and $B_{1} \neq B_{2}$ are
    such that $0 \not \in B_{1}$ and $\xiD(B_{1}) = \xiD(B_{2}) = C$,
    then $(w,B_{1})$ is a preimage of $(w,C)$ if $\Card{C}$ is even,
    and $(w,B_{2})$ is a preimage of $(w,C)$ if $\Card{C}$ is odd.
\end{proof}

Let us recall that, for $u \in \Bn$, $\descBp{u}$ denotes the set of
strictly positive descents of $u$, see
Lemma~\ref{lemma:positiveDescs}.
\begin{definition}
  For each $k \in \setnp[n-1]$, we let $\SBPn^{k}$ be the set of
  \sbp{s} $(w,B) \in \SBPn$ such that $\Card{\descBp{\psi(w,B)}} = k$.
\end{definition}
Let us pinpoint the following characterization of the set $\SBPn^{k}$:
\begin{lemma}
  \label{lemma:SBPnK}
  For each \sbp $(w,C) \in \SBPn$, 
  \begin{align*}
    (w,C) & \in \SBPn^{k}
    \Tiff
    \begin{cases}
      \Card{C} \text{
        is even} \tand     \Card{\desc{w} \setminus C} + \myexpr{C} = k, \tor \\[3mm]
      \Card{C} \text{
        is odd}    \tand \Card{\desc{w} \setminus C} + \myexpr{C} = k +1 \,.
    \end{cases}
  \end{align*}
\end{lemma}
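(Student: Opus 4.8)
The plan is to derive the statement directly from Proposition~\ref{prop:descentsFromSBPs} and Lemma~\ref{lemma:ZeroDescent}, together with the elementary bookkeeping that relates $\descBp{u}$ to $\descB{u}$.

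First I would recall that $\descBp{u} = \descB{u} \setminus \set{0}$ by definition, so for every $u \in \Bn$ one has $\Card{\descBp{u}} = \desB{u}$ when $0 \notin \descB{u}$, and $\Card{\descBp{u}} = \desB{u} - 1$ when $0 \in \descB{u}$. Apply this with $u = \psi(w,C)$ and invoke Lemma~\ref{lemma:ZeroDescent}, which tells us that $0 \in \descB{\psi(w,C)}$ precisely when $\Card{C}$ is odd. This gives
$$
\Card{\descBp{\psi(w,C)}} =
\begin{cases}
  \desB{\psi(w,C)}, & \Card{C} \text{ even},\\
  \desB{\psi(w,C)} - 1, & \Card{C} \text{ odd}.
\end{cases}
$$

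Next I would substitute the value of $\desB{\psi(w,C)}$ supplied by Proposition~\ref{prop:descentsFromSBPs}, namely $\desB{\psi(w,C)} = \Card{\desc{w}\setminus C} + \myexpr{C}$. In the even case this yields $\Card{\descBp{\psi(w,C)}} = \Card{\desc{w}\setminus C} + \myexpr{C}$, so $(w,C) \in \SBPn^{k}$ iff $\Card{\desc{w}\setminus C} + \myexpr{C} = k$, which is the first alternative. In the odd case it yields $\Card{\descBp{\psi(w,C)}} = \Card{\desc{w}\setminus C} + \myexpr{C} - 1$, so $(w,C) \in \SBPn^{k}$ iff $\Card{\desc{w}\setminus C} + \myexpr{C} = k + 1$, which is the second alternative. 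Both implications are equivalences since every step above is an equivalence.

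There is essentially no obstacle here; the only point deserving a moment's care is the $-1$ shift coming from deleting $0$ from the descent set, and checking that it is absorbed exactly into the ``$k+1$'' of the odd case. This works cleanly because the same quantity $\myexpr{C} = \ceiling{\Card{C}/2}$ appears in both alternatives of the statement, so no rewriting of the ceiling in terms of parity of $\Card{C}$ is needed.
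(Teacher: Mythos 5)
Your argument is correct and is essentially the paper's own proof: both reduce the claim to the identity $\Card{\descBp{\psi(w,C)}} = \desB{\psi(w,C)}$ or $\desB{\psi(w,C)}-1$ according to whether $0$ is a type $\B$ descent, and then invoke Lemma~\ref{lemma:ZeroDescent} and Proposition~\ref{prop:descentsFromSBPs} to translate this into the parity of $\Card{C}$ and the formula $\Card{\desc{w}\setminus C}+\myexpr{C}$. You merely spell out the bookkeeping that the paper leaves implicit; nothing further is needed.
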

\begin{proof}
  We have 
  \begin{align*}
    \Card{\descBp{\psi(w,C)}} & = k \Tiff
    \begin{cases}
      0 \not\in \descB{\psi(w,C)} \tand \desB{\psi(w,C)} = k, \tor \\[3mm]
      0 \in \descB{\psi(w,C)} \tand \desB{\psi(w,C)} = k+1 \,.
    \end{cases}
  \end{align*}
  The statement of the lemma follows using
  Lemma~\ref{lemma:ZeroDescent} 
  and Proposition~\ref{prop:descentsFromSBPs}.
\end{proof}

\begin{proposition}
  \label{prop:ThetaUpK}
  For each $k \in \setnp[n-1]$, the restriction of  $\Theta_{n}$ to
  $\WBPnk[2k + 1]$
  yields  a bijection $\Theta_{n}^{k}$ from $\WBPnk[2k + 1]$ to $\SBPn^{k}$.
\end{proposition}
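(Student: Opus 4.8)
The plan is to mirror the argument of Proposition~\ref{prop:ThetaDownK}, but now using the odd-cardinality cases of the counting lemmas. First I would check that $\Theta_{n}$ restricted to $\WBPnk[2k+1]$ actually lands in $\SBPn^{k}$. Take $(w,B)\in\WBPnk[2k+1]$, so $\Card{\Desc{w}}+\Card{B}=2k+1$, and set $D=\Desc{w}$, $C=\xiD(B)$. By items (3) and (4) of Lemma~\ref{lemma:counting}: if $0\notin B$ then $\Card{D\setminus C}+\myexpr{C}=k+1$ and $\Card{C}$ is odd; if $0\in B$ then $\Card{D\setminus C}+\myexpr{C}=k$ and $\Card{C}$ is even. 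In either case the criterion of Lemma~\ref{lemma:SBPnK} is met, so $\Theta_{n}(w,B)=(w,C)\in\SBPn^{k}$. Thus $\Theta_{n}^{k}$ is well defined.

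Next I would prove injectivity. Suppose $\Theta_{n}(w,B)=\Theta_{n}(w',B')$ with both pairs in $\WBPnk[2k+1]$. Then $w=w'$, hence $D=\Desc{w}=\Desc{w'}$, and $\xiD(B)=\xiD(B')=:C$. By Lemma~\ref{lemma:inverseImages}, the two $\xiD$-preimages of $C$ have sizes (after adding $D$) differing by exactly one: they are $2k$ and $2k+1$ when $\Card{C}$ is even (case 1), and $2k-1$ and $2k$ when $\Card{C}$ is odd — wait, here I must be careful, since for odd $\Card{C}$ the value $\Card{D\setminus C}+\myexpr{C}$ equals $k+1$, not $k$, so Lemma~\ref{lemma:inverseImages}(2) is applied with its ``$k$'' equal to $k+1$, giving preimage sizes $\Card{D}+\Card{B_{i}}\in\{2(k+1)-1,\,2(k+1)\}=\{2k+1,\,2k+2\}$. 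Either way, among the two preimages of $C$ exactly one satisfies $\Card{D}+\Card{B}=2k+1$, so $B=B'$ and the map is injective.

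For surjectivity, take $(w,C)\in\SBPn^{k}$ and set $D=\Desc{w}$. Let $B_{1},B_{2}$ be the two $\xiD$-preimages of $C$, with $0\notin B_{1}$ and $0\in B_{2}$. By Lemma~\ref{lemma:SBPnK} there are two cases. If $\Card{C}$ is even, then $\Card{D\setminus C}+\myexpr{C}=k+1$, so Lemma~\ref{lemma:inverseImages}(1) (with its ``$k$'' being $k+1$) gives $\Card{D}+\Card{B_{1}}=2(k+1)$ and $\Card{D}+\Card{B_{2}}=2(k+1)+1=2k+3$; that is not quite $2k+1$, so I would instead invoke item (4) of Lemma~\ref{lemma:counting} in reverse — the clean way is simply: the preimage we want is the one with $0\in B$, and a direct parity count shows $\Card{D}+\Card{B_{2}}$ has the right value; and if $\Card{C}$ is odd, then $\Card{D\setminus C}+\myexpr{C}=k+1$ again and the preimage $B_{1}$ with $0\notin B_{1}$ satisfies $\Card{D}+\Card{B_{1}}=2k+1$ by Lemma~\ref{lemma:counting}(3). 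In both cases we exhibit a preimage in $\WBPnk[2k+1]$, so $\Theta_{n}^{k}$ is onto.

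The main obstacle is purely bookkeeping: the index shift between ``$k$'' as it appears in $\SBPn^{k}$ (defined via $\descBp$, which excludes the possible descent at position $0$) and ``$k+1$'' as it appears in the quantity $\Card{\desc{w}\setminus C}+\myexpr{C}$ when $\Card{C}$ is odd. The cleanest presentation is to phrase everything through Lemma~\ref{lemma:SBPnK}, which already packages this shift, rather than re-deriving it from Lemma~\ref{lemma:counting} each time — so I would streamline the surjectivity argument to: each $(w,C)\in\SBPn^{k}$ has a unique $\xiD$-preimage $B$ of $C$ with $\Card{D}+\Card{B}=2k+1$ (choosing $B$ to contain $0$ precisely when $\Card{C}$ is even), and verifying this choice works is exactly items (3)–(4) of Lemma~\ref{lemma:counting} read backwards, which is what Lemma~\ref{lemma:inverseImages} records.
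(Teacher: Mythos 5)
Your proposal follows the paper's proof essentially step for step: well-definedness from items 3--4 of Lemma~\ref{lemma:counting} together with Lemma~\ref{lemma:SBPnK}, injectivity from the fact that the two $\xiD$-preimages of $C$ have cardinalities differing by one so at most one satisfies $\Card{D}+\Card{B}=2k+1$, and surjectivity by selecting the preimage containing $0$ exactly when $\Card{C}$ is even. The only wrinkle is the slip in the even-cardinality branch of your surjectivity argument: for $(w,C)\in\SBPn^{k}$ with $\Card{C}$ even, Lemma~\ref{lemma:SBPnK} gives $\Card{\desc{w}\setminus C}+\myexpr{C}=k$, not $k+1$ (the shift to $k+1$ occurs only in the odd clause); with the correct value, Lemma~\ref{lemma:inverseImages}(1) yields $\Card{D}+\Card{B_{2}}=2k+1$ directly and no workaround is needed. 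Your final streamlined statement is the correct one and matches the paper.
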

\begin{proof}
    By items 3. and 4. in Lemma~\ref{lemma:counting}, and also using
    Lemma~\ref{lemma:SBPnK}, the restriction of $\Theta_{n}$ to
    $\WBPnk[2k +1]$ takes values in $\SBPn^{k}$.  The restriction map
    is injective. Indeed, if $\Theta_{n}(w,B) = \Theta_{n}(w',B')$,
    then $w = w'$ and, for $D = \Desc{w}$, $C = \xiD(B) = \xiD(B')$.
    Lemma~\ref{lemma:inverseImages} states that each
    $C \subseteq \setn$ has at most one preimage $B$ satisfying
    $\Card{D} + \Card{B} = 2k +1$, whence $B = B'$.
    This map is also surjective.  Let $(w,C)\in \SBPn^{k}$,
    $D = \Desc{w}$, and $B_{1} \neq B_{2}$ be such that
    $0 \not \in B_{1}$ and $\xiD(B_{1}) = \xiD(B_{2}) = C$. By
    Lemma~\ref{lemma:inverseImages}, if $\Card{C}$ is even, then
    $(w,C)$ has the preimage $(w,B_{2})$, and if $\Card{C}$ is odd,
    then $(w,C)$ has the preimage $(w,B_{1})$.
\end{proof}

To end this section, we collect the consequences of the bijections
established so far.
\begin{theorem}
  The following relations hold:
  \begin{align*}
    \eulB{n}{k} & = \sum_{i = 0}^{2k}\eulA{n}{i}\binom{n+1}{2k-i}\,,
    \tag*{\eqref{eq:eulBeven}}
    \\
    2^{n}\eulA{n}{k} & = \sum_{i = 0}^{2k+1}\eulA{n}{i}\binom{n+1}{2k
      +1-i}\,.
    \tag*{\eqref{eq:eulBodd}}
  \end{align*}
\end{theorem}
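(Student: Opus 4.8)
The plan is to count the set $\WBPnk[m]$ of \wbp{s} $(w,B)$ satisfying $\des{w} + \Card{B} = m$ in two different ways, and then to specialize $m$ to $2k$ and to $2k+1$. The point is that all the substantial work is already done: Proposition~\ref{prop:ThetaDownK} provides a bijection $\WBPnk[2k] \rto \SBPnk$, Proposition~\ref{prop:ThetaUpK} provides a bijection $\WBPnk[2k+1] \rto \SBPn^{k}$, and the left-hand sides of \eqref{eq:eulBeven} and \eqref{eq:eulBodd} will turn out to be, respectively, $\Card{\SBPnk}$ and $\Card{\SBPn^{k}}$.

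First I would compute $\Card{\WBPnk[m]}$ directly, by sorting the \wbp{s} according to the number $i = \des{w}$ of descents of the underlying permutation. For each value of $i$ there are $\eulA{n}{i}$ choices of $w \in \Sn$, and then, since the bars of a \wbp{} form an arbitrary subset $B$ of the $n+1$ positions $\set{0,1,\ldots ,n}$ with $\Card{B} = m - i$, there are $\binom{n+1}{m-i}$ choices of $B$. This yields
\begin{align*}
  \Card{\WBPnk[m]} & = \sum_{i = 0}^{m}\eulA{n}{i}\binom{n+1}{m-i}\,,
\end{align*}
where the terms with $i \geq n$ vanish because $\eulA{n}{i} = 0$ there.

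Next I would read off the two identities. For \eqref{eq:eulBeven} I would take $m = 2k$: by Proposition~\ref{prop:ThetaDownK} together with the corollary to Proposition~\ref{prop:descentsFromSBPs}, one has $\Card{\WBPnk[2k]} = \Card{\SBPnk} = \eulB{n}{k}$, and substituting $m = 2k$ into the displayed count finishes the argument. For \eqref{eq:eulBodd} I would take $m = 2k+1$: by Proposition~\ref{prop:ThetaUpK}, $\Card{\WBPnk[2k+1]} = \Card{\SBPn^{k}}$, and since $\psi : \SBPn \rto \Bn$ is a bijection carrying $\SBPn^{k}$ onto $\set{u \in \Bn \mid \Card{\descBp{u}} = k}$, Lemma~\ref{lemma:positiveDescs} gives $\Card{\SBPn^{k}} = 2^{n}\eulA{n}{k}$; substituting $m = 2k+1$ then finishes the argument.

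I do not expect a genuine obstacle, since the combinatorial heart of the matter is packaged into the bijections $\Theta_{n,k}$ and $\Theta_{n}^{k}$ and into Lemma~\ref{lemma:positiveDescs}, all already available. The one place deserving care is the direct count: the bars of a \wbp{} range over the $(n+1)$-element set $\set{0,1,\ldots ,n}$ (and not over $\setn$), which is precisely what produces the factor $\binom{n+1}{m-i}$; and one should observe that replacing the summation limit $m$ by $2k$ (\resp $2k+1$) changes nothing, since $\binom{n+1}{m-i} = 0$ whenever $i > m$.
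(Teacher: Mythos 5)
Your proposal is correct and follows essentially the same route as the paper's own proof: both pass from signed permutations to simply barred permutations via $\psi$, then to loosely barred permutations via Propositions~\ref{prop:ThetaDownK} and~\ref{prop:ThetaUpK}, and identify the cardinality of $\WBPnk[2k]$ (\resp $\WBPnk[2k+1]$) with the right-hand sides by sorting according to the number of descents of the underlying permutation. The only difference is that you write out explicitly the convolution count of $\WBPnk[m]$, which the paper leaves implicit.
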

\begin{proof}
  We have seen that \sp{s} $u \in \Bn$ such that $\desB{u} = k$ are in
  bijection (via the mapping $\psi$ of Definition~\ref{def:psi}) with
  \sbp{s} in $\SBPnk$. Next, this set is in bijection (see
  Proposition~\eqref{prop:ThetaDownK}) with the set $\WBPnk[2k]$ of
  \wbp{s} $(w,B) \in \WBPn$ such that $\des{w} + \Card{B} = 2k$. The
  cardinality of $\WBPnk[2k]$ is the right-hand side of equality
  \eqref{eq:eulBeven}.

  The left-hand side of equality \eqref{eq:eulBodd} is the cardinality
  of the set of \sp{s} $u$ such that $\Card{\descBp{u}} = k$, see
  Lemma~\ref{lemma:positiveDescs}. This set is in bijection with the
  set $\SBPn^{k}$ (via $\psi$ defined in~\ref{def:psi} and by the
  definition of $\SBPn^{k}$) which, in turn, is in bijection (see
  Proposition~\eqref{prop:ThetaUpK}) with the set $\WBPnk[2k+1]$ of
  \wbp{s} $(w,B) \in \WBPn$ such that $\des{w}+ \Card{B} = 2k +1$. The
  cardinality of this set is the right-hand side of equality
  \eqref{eq:eulBodd}.
\end{proof}

\begin{theorem}
  The following relation holds:
  \begin{align}
    \label{eq:BsqOne}
    B_{n}(t^{2})   & = (1+t)^{n
      +1} S _{n}(t) - 2^{n}tS_{n}(t^{2})\,.
  \end{align}
\end{theorem}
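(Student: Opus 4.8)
The plan is to obtain \eqref{eq:BsqOne} as a formal consequence of the coefficient identities \eqref{eq:eulBeven} and \eqref{eq:eulBodd} established in the previous theorem, by reading off the coefficients of the even and the odd powers of $t$ on both sides.

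First I would expand the left-hand side. Writing $(1+t)^{n+1} = \sum_{j = 0}^{n+1}\binom{n+1}{j}t^{j}$ and $S_{n}(t) = \sum_{i = 0}^{n-1}\eulA{n}{i}t^{i}$, the coefficient of $t^{m}$ in $(1+t)^{n+1}S_{n}(t)$ is $\sum_{i}\eulA{n}{i}\binom{n+1}{m - i}$, where, using the conventions $\eulA{n}{i} = 0$ for $i \notin \set{0,\ldots ,n-1}$ and $\binom{n+1}{j} = 0$ for $j \notin \set{0,\ldots ,n+1}$, the sum may be taken over $i \in \set{0,\ldots ,m}$: the terms with $i > m$ vanish because of the binomial factor, and those with $i \geq n$ vanish because of the Eulerian factor.

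Then I would split according to the parity of $m$. For $m = 2k$, identity \eqref{eq:eulBeven} gives $\sum_{i = 0}^{2k}\eulA{n}{i}\binom{n+1}{2k-i} = \eulB{n}{k}$, which is precisely the coefficient of $t^{2k}$ in $B_{n}(t^{2}) = \sum_{j = 0}^{n}\eulB{n}{j}t^{2j}$. For $m = 2k+1$, identity \eqref{eq:eulBodd} gives $\sum_{i = 0}^{2k+1}\eulA{n}{i}\binom{n+1}{2k+1-i} = 2^{n}\eulA{n}{k}$, which is the coefficient of $t^{2k+1}$ in $2^{n}tS_{n}(t^{2}) = \sum_{j}2^{n}\eulA{n}{j}t^{2j+1}$. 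Since every polynomial is the sum of its even-degree part and its odd-degree part, collecting these over all $m$ yields $(1+t)^{n+1}S_{n}(t) = B_{n}(t^{2}) + 2^{n}tS_{n}(t^{2})$ --- which is \eqref{eq:main} --- and rearranging gives \eqref{eq:BsqOne}. An equivalent, slightly slicker route is to recall, as in the introduction, that \eqref{eq:eulBeven} is the identity $2B_{n}(t^{2}) = (1+t)^{n+1}S_{n}(t) + (1-t)^{n+1}S_{n}(-t)$ and \eqref{eq:eulBodd} the identity $2^{n+1}tS_{n}(t^{2}) = (1+t)^{n+1}S_{n}(t) - (1-t)^{n+1}S_{n}(-t)$, and then to add these two.

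There is essentially no genuine obstacle here: the only point needing a little care is keeping track of the summation ranges so that the finite sums in \eqref{eq:eulBeven} and \eqref{eq:eulBodd} are correctly matched with the coefficients of the polynomial products $(1+t)^{n+1}S_n(t)$, $B_n(t^2)$, $2^n t S_n(t^2)$, which the vanishing conventions above handle uniformly. I would also include a degree check as a sanity remark: both sides have degree $2n$, since $\deg S_{n} = n-1$, $\deg B_{n} = n$, so $\deg\,(1+t)^{n+1}S_{n}(t) = 2n$, while $\deg\, 2^{n}tS_{n}(t^{2}) = 2n-1$.
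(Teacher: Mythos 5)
Your proposal is correct and follows exactly the paper's own argument: read off the even coefficients of $(1+t)^{n+1}S_{n}(t)$ via \eqref{eq:eulBeven} to get $B_{n}(t^{2})$, the odd ones via \eqref{eq:eulBodd} to get $2^{n}tS_{n}(t^{2})$, and rearrange \eqref{eq:main}. The extra care about summation ranges and the degree check are fine but not needed beyond what the paper already does.
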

\begin{proof}
  By \eqref{eq:eulBeven}, $\sEulB{n}{k}$, which is the coefficient of
  $t^{2k}$ in the polynomial $B_{n}(t^{2})$, is also the coefficient
  of $t^{2k}$ in $(1+t)^{n +1}S_{n}(t)$.  By \eqref{eq:eulBodd},
  $2^{n}\sEulA{n}{k}$ is the coefficient of $t^{2k +1}$ in the
  polynomials $2^{n}tS_{n}(t^{2})$ and $(1+t)^{n
    +1}S_{n}(t)$. Therefore
  \begin{align*}
    B_{n}(t^{2}) + 2^{n}tS_{n}(t^{2}) & = (1+t)^{n +1}S_{n}(t)\,,
    \tag*{\eqref{eq:main}}
  \end{align*}
  whence equation~\eqref{eq:BsqOne}.
\end{proof}

\section{\Stem'{s} identity for \En{s} of type $\D$}
\label{sec:Stembridge}

We recall that a \sp $u \in \Bn$ is \emph{even signed} if the
number of negative letters in its window notation is even. The \esp{s}
of $\Bn$ form a subgroup $\Dn$ of $\Bn$ and in fact the groups $\Dn$
are standard models for the abstract Coxeter groups of type $\D$.

Definitions analogous to those given in Section~\ref{sec:notation} for
the types $\A$ and $\B$ can be given for type $\D$. Namely, for
$u \in \Dn$, we set
\begin{align}
  \label{eq:descD}
  \descD{u} & \eqdef \makebox[1cm][l]{$\displaystyle \set{ i \in \set{0,1,\ldots ,n-1} \mid u_{i} >
      u_{i + 1}}\,,$} 
  \intertext{where we have set $u_{0} = - u_{2}$,
  } 
  \notag 
  \desD{u} & \eqdef \Card{\descD{u}} \,,
  & 
  \eulD{n}{k} & \eqdef \Card{\set{ u \in \Dn \mid \desD{u} = k}}\,,
  &
  D_{n}(t) & \eqdef \sum_{k = 0}^{n}\eulD{n}{k}t^{k}\,.
\end{align}
The formula in \eqref{eq:descD} is the standard one, see e.g. \citep[\S
8.2]{BB2005} or \citep{BagnoBG18}.  The reader will have no
difficulties verifying that, up to renaming $0$ by $-1$, the type $\D$
descent set of $u$ can also be defined as follows, see
\citep[\S 13]{Petersen2015}:
\begin{align}
  \label{eq:descDTwo}
  \descD{u} & \eqdef \set{ i \in \set{-1,1,\ldots ,n-1} \mid u_{i} >
    u_{\abs{i} + 1}}\,,
\end{align}
where now $u_{-1} = \minus{u_{1}}$, as normal if $u$ is written in full notation.

It is convenient to consider a more flexible representation of
elements of $\Dn$. If $u \in \Bn$, then its mate is the \sp
$\mate{u} \in \Bn$ that differs from $u$ only for the sign of the
first letter. Notice that $\mate{\mate{u}} = u$.  We define a
\emph{\fsp} (see \citep[\S 13]{Petersen2015}) as an unordered pair of
the form $\upair{u,\mate{u}}$ for some $u\in \Bn$.  Clearly, just one
of the mates is even signed and therefore \fsp{s} are combinatorial
models of $\Dn$.

The \prep of a \fsp is insensitive of how the diagonal is crossed,
either from the West, or from the North. The following are possible
ways to draw a \fsp on a grid:
\begin{center}
  \includegraphics{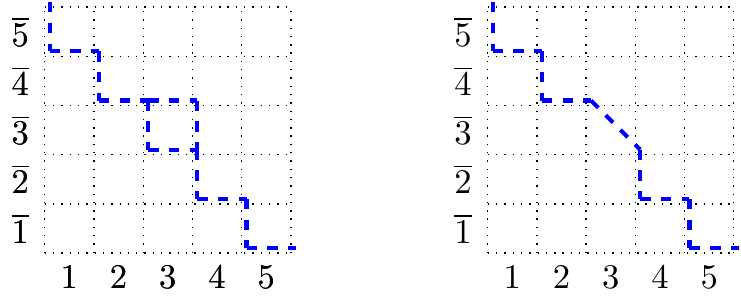}
\end{center}

Even if the formulas in \eqref{eq:descD} and \eqref{eq:descDTwo} have
been defined for \esp{s}, they still can be computed for all
\sp{s}.  
The formula in \eqref{eq:descDTwo} is not invariant under taking
mates, however the following lemma shows that this formula suffices to
compute the number of type $\D$ descents of a \fsp and therefore the
Eulerian numbers $\sEulD{n}{k}$.

\begin{lemma}
  \label{lemma:descmates}
  For each $u \in \Bn$, $1 \in \DescD{u}$ if and only if
  $-1 \in \DescD{\mate{u}}$. Therefore $\desD{u} = \desD{\mate{u}}$.
\end{lemma}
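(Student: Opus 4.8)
The plan is to work directly from the explicit description of the type $\D$ descent set given in \eqref{eq:descDTwo}, namely that for every $v \in \Bn$ one has $\DescD{v} = \set{i \in \set{-1,1,\ldots,n-1} \mid v_i > v_{\abs{i}+1}}$ with the convention $v_{-1} = \minus{v_1}$. The key structural observation is that passing from $u$ to $\mate{u}$ only flips the sign of the first letter, so $\mate{u}_i = u_i$ for every $i$ with $\abs{i} \geq 2$, while $\mate{u}_1 = \minus{u_1}$ and $\mate{u}_{-1} = u_1$.

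First I would dispose of the ``large'' indices. For $i \in \set{2,\ldots,n-1}$ neither $u_i$ nor $u_{i+1}$ is touched by the mate operation, so $i \in \DescD{u}$ iff $u_i > u_{i+1}$ iff $\mate{u}_i > \mate{u}_{i+1}$ iff $i \in \DescD{\mate{u}}$; hence $\DescD{u}$ and $\DescD{\mate{u}}$ agree outside $\set{-1,1}$. Next I would compute the two remaining membership conditions: $1 \in \DescD{u}$ iff $u_1 > u_2$, whereas $-1 \in \DescD{\mate{u}}$ iff $\mate{u}_{-1} > \mate{u}_2$ iff $u_1 > u_2$. This proves the first assertion. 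Symmetrically — either by the same direct computation or by applying the first assertion to $\mate{u}$, whose mate is $u$ — we get $-1 \in \DescD{u}$ iff $\minus{u_1} > u_2$ iff $\mate{u}_1 > \mate{u}_2$ iff $1 \in \DescD{\mate{u}}$.

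Putting these together, the transposition of $-1$ and $1$, extended by the identity on $\set{2,\ldots,n-1}$, is an involution of $\set{-1,1,\ldots,n-1}$ that restricts to a bijection from $\DescD{u}$ onto $\DescD{\mate{u}}$; taking cardinalities yields $\desD{u} = \desD{\mate{u}}$. The argument is essentially bookkeeping with the definition, and the only point that deserves care is that $1$ and $-1$ may be type $\D$ descents simultaneously — precisely when $\minus{\abs{u_1}} > u_2$ — so the coincidence of the two descent sets restricted to $\set{-1,1}$ must be packaged as the bijection above rather than argued by a naive split on the sign of $u_1$.
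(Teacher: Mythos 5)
Your proof is correct and follows essentially the same route as the paper: a direct computation with $\mate{u}_{-1} = -(-u_{1}) = u_{1}$ establishes the equivalence $1 \in \DescD{u} \Leftrightarrow -1 \in \DescD{\mate{u}}$ (and its symmetric counterpart), after which the equality of cardinalities follows because the two descent sets agree on $\set{2,\ldots,n-1}$ and are matched on $\set{-1,1}$ by the swap of $1$ and $-1$. Your explicit packaging of that swap as an involution restricting to a bijection $\DescD{u} \to \DescD{\mate{u}}$ is just a cleaner phrasing of the paper's observation that $\Card{\Delta_{u}} = \Card{\Delta_{\mate{u}}}$.
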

\begin{proof}
  Suppose $1 \in \DescD{u}$, that is $u_{1} > u_{2}$. Then
  $\mate{u}_{-1} = - (-u_{1}) = u_{1} > u_{2}$, and so
  $-1 \in \DescD{u}$. The opposite entailment is proved similarly.

  For the last statement, observe that
  $\DescD{u} = \Delta_{u} \cup \set{i \in \set{2,\ldots ,n-1} \mid
    u_{i} > u_{i +1}}$ with
  $\Delta_{u} \eqdef \set{i \in \set{1,-1} \mid u_{i} > u_{\abs{i}
      +1}}$ and, by what we have just remarked, we have
  $\Card{\Delta_{u}} = \Card{\Delta_{\mate{u}}}$.  It follows that
  $\Card{\descD{u}} = \Card{\descD{\mate{u}}}$.
\end{proof}

Our next aim is to derive \Stem's identity 
\begin{align}
  D_{n}(t) & = B_{n}(t) - n2^{n-1}tS_{n-1}(t) \,, 
  \intertext{see
    \citep[Lemma 9.1]{Stembridge1994}, which, in term of the
    coefficients of these polynomials, amounts to}
  \eulD{n}{k} & = \eulB{n}{k} - n2^{n-1}\eul{n-1}{k-1} \,.
  \label{eq:stembridge}
\end{align}

\begin{definition}
  A \sp $u$ is \emph{\smooth} if $u_{1},u_{2}$ have equal sign and,
  otherwise, it is \emph{\ns}. 
\end{definition}
The reason for naming a \sp \smooth arises again from the \prep of a
\sp: the \smooth \sp is, between the two mates, the one minimizing
the turns nearby the diagonal, as suggested in Figure~\ref{fig:smooth} with two pairs of
mates as examples.
\begin{figure}
  \centering
  \includegraphics{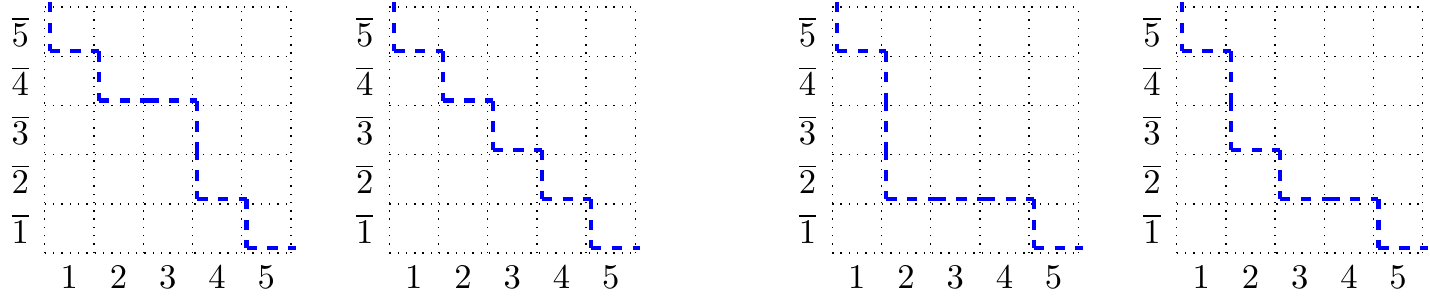}
  \caption{Two pairs of mates, the smooth mates are on the left}
  \label{fig:smooth}
\end{figure}

\begin{lemma}
  \label{lemma:smooth}
  If $u \in \Bn$ is \smooth, then $-1 \in \DescD{u}$ if and only if
  $0 \in \DescB{u}$ and therefore $\desD{u} = \desB{u}$.
\end{lemma}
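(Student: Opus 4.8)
The plan is to unwind the definitions on both sides and check that, under the smoothness hypothesis, the three "boundary" descent tests line up. Recall that $u$ smooth means $u_1$ and $u_2$ have the same sign, so there are exactly two cases: $u_1,u_2 > 0$ and $u_1,u_2 < 0$. The quantity $0 \in \DescB{u}$ is, by the definition in \eqref{def:descinvB} together with the convention $u_0 = 0$, equivalent to $u_1 < 0$; and $-1 \in \DescD{u}$ is, by \eqref{eq:descDTwo} with the convention $u_{-1} = -u_1$, equivalent to $-u_1 > u_2$, i.e. to $u_1 + u_2 < 0$. So the first claim reduces to showing $u_1 < 0 \Leftrightarrow u_1 + u_2 < 0$ under the assumption $\mathrm{sign}(u_1) = \mathrm{sign}(u_2)$, which is immediate: if both are positive the two conditions are both false, and if both are negative they are both true.

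For the equality $\desD{u} = \desB{u}$, I would compare the two descent sets position by position. Writing $u_0 = 0$ in the type $\B$ convention, $\DescB{u}$ consists of $0$ if $u_1 < 0$, together with the positions $i \in \{1,\dots,n-1\}$ with $u_i > u_{i+1}$. For type $\D$, using the form \eqref{eq:descDTwo}, $\DescD{u}$ consists of $-1$ if $-u_1 > u_2$, of $1$ if $u_1 > u_2$, and of the positions $i \in \{2,\dots,n-1\}$ with $u_i > u_{i+1}$. The "interior" positions $i \in \{2,\dots,n-1\}$ contribute identically to both counts, so it suffices to show that the number of boundary elements agrees: $|\DescB{u} \cap \{0,1\}| = |\DescD{u} \cap \{-1,1\}|$. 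The test "$1 \in \DescB{u}$" is "$u_1 > u_2$", which is literally the same as "$1 \in \DescD{u}$", so these either both hold or both fail; and by the first part "$0 \in \DescB{u}$" agrees with "$-1 \in \DescD{u}$". Hence the boundary contributions coincide termwise, and summing gives $\desD{u} = \desB{u}$.

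I do not expect a genuine obstacle here; the only thing to be careful about is bookkeeping the index conventions correctly — in particular that the type $\D$ descent set under \eqref{eq:descDTwo} lives on $\{-1,1,\dots,n-1\}$ (so position $1$ can be a descent on the $\D$ side even though position $0$, not $1$, is the boundary position on the $\B$ side), and that the smoothness hypothesis is exactly what forces the $0$-versus-$(-1)$ comparison to behave well. One could alternatively phrase the whole argument via the \prep and Figure~\ref{fig:smooth}: smoothness says $\pi^u$ does not turn on the diagonal, so the change-of-sign structure near the diagonal is the same whether one reads it with the type $\B$ or the type $\D$ convention; but the direct computation above is shorter and self-contained.
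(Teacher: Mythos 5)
Your proof is correct and follows essentially the same route as the paper's: both unwind the conventions $u_0 = 0$ and $u_{-1} = -u_1$ and use the equal-sign hypothesis to match the boundary tests, the only cosmetic difference being that you phrase the case split as $u_1 < 0 \Leftrightarrow u_1 + u_2 < 0$ while the paper argues the two implications directly. Your explicit position-by-position comparison for the ``therefore'' clause is a detail the paper leaves implicit, and it is handled correctly.
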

\begin{proof}
  Suppose $0 \in \DescB{u}$, so $u_{1} < 0$ and $u_{2} < 0$ as well,
  since $u$ is \smooth. Then $u_{-1} = -u_{1} > 0 > u_{2}$, so
  $-1 \in \DescD{u}$.
  Conversely, suppose $-1 \in \DescD{u}$, that is, $u_{-1} > u_{2}$. If
  $u_{1} > 0$, then $0 > - u_{1} = u_{-1} > u_{2}$, so $u_{1},u_{2}$
  have different sign, a \contr. Therefore $u_{1} < 0$ and
  $0 \in \DescB{u}$.
\end{proof}

\begin{corollary}
  \label{cor:smooth}
  There is a bijection between the set of \smooth \sp{s} in $\Bn$
  with $k$ type $\B$ descents and the set of \esp{s} in $\Dn$ with
  $k$ type $\D$ descents.
\end{corollary}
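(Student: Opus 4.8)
The plan is to deduce this from Lemmas~\ref{lemma:descmates} and~\ref{lemma:smooth} together with two elementary parity observations about pairs of mates. First I would record that, for every $u \in \Bn$, exactly one of the two mates $u,\mate{u}$ is \smooth: since $\mate{u}$ differs from $u$ only in the sign of the first letter, $u_{1}$ and $u_{2}$ have equal sign precisely when $\mate{u}_{1}$ and $\mate{u}_{2}$ have opposite sign. Likewise, since the numbers of negative letters of $u$ and $\mate{u}$ differ by one, exactly one of $u,\mate{u}$ is \es, that is, lies in $\Dn$. Equivalently: each \fsp $\upair{u,\mate{u}}$ contains exactly one \smooth mate and exactly one mate in $\Dn$.

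Using this, I would define, for each \smooth \sp $v \in \Bn$, the element $\phi(v) \in \Dn$ to be the unique \es member of $\upair{v,\mate{v}}$ (so $\phi(v) = v$ if $v$ is \es, and $\phi(v) = \mate{v}$ otherwise). I would then check that $\phi$ is a bijection from the set of \smooth signed permutations onto $\Dn$. For injectivity, $\phi(v)$ determines the pair $\upair{v,\mate{v}} = \upair{\phi(v),\mate{\phi(v)}}$, which has a unique \smooth element, namely $v$; hence $\phi(v) = \phi(v')$ forces $v = v'$. For surjectivity, given $u \in \Dn$ let $v$ be the unique \smooth element of $\upair{u,\mate{u}}$; then $\phi(v)$ is the \es element of $\upair{v,\mate{v}} = \upair{u,\mate{u}}$, which is $u$, so $\phi(v) = u$.

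It remains to verify that $\phi$ preserves the relevant statistic, i.e.\ $\desB{v} = \desD{\phi(v)}$ for every \smooth $v$. By Lemma~\ref{lemma:smooth}, $\desB{v} = \desD{v}$. Since $\phi(v) \in \upair{v,\mate{v}}$ and Lemma~\ref{lemma:descmates} asserts $\desD{u} = \desD{\mate{u}}$ for all $u \in \Bn$ (so $\desD{\cdot}$ is constant on each pair of mates), we get $\desD{v} = \desD{\phi(v)}$. Combining, $\desB{v} = \desD{\phi(v)}$, so $\phi$ restricts, for each $k$, to a bijection between the \smooth signed permutations in $\Bn$ with $k$ type $\B$ descents and the elements of $\Dn$ with $k$ type $\D$ descents, as claimed.

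I do not anticipate a real obstacle: the statement is a bookkeeping corollary of the two lemmas. The only point needing a little care is threading the descent equality through both lemmas, and in particular observing that $\desD{\cdot}$ computed from formula~\eqref{eq:descDTwo}, which is what Lemmas~\ref{lemma:descmates} and~\ref{lemma:smooth} use, does return the genuine type $\D$ descent number on the member of $\upair{v,\mate{v}}$ that lies in $\Dn$ — precisely the content of the remark following~\eqref{eq:descDTwo}.
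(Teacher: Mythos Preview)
Your proof is correct and follows essentially the same route as the paper: map each \smooth $v$ to the unique \es member of $\upair{v,\mate{v}}$, and combine Lemma~\ref{lemma:smooth} with Lemma~\ref{lemma:descmates} to transport the descent count. You are simply more explicit than the paper in checking that this map is a well-defined bijection (via the two parity observations about mates), which the paper leaves implicit.
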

Indeed, if $u \in \Bn$ is \smooth and \es, then we let $v = u$, so
$\descB{u} = \descD{v}$, by Lemma~\ref{lemma:smooth}. If $u$ is
\smooth but not \es, then its mate $\mate{u}$ is \es. We let then
$v = \mate{u}$ and then $\descB{u} = \descD{u} = \descD{\mate{u}}$,
using Lemmas~\ref{lemma:descmates} and \ref{lemma:smooth}.

Next, we consider the correspondence---let us call it $\chi$---sending
a \ns \sp $u \in \Bn$ to the pair $(\abs{u_{1}},u')$, where $u'$ is
obtained from $u_{2}\cdots u_{n}$ by normalising this sequence, so
that it takes absolute values in the set
$\setn[n-1]$. 
For example
$\chi(6\wbar{12347}5) = (6,\wbar{12346}5)$ and
$\chi(\wbar{2}316\wbar{47}5) = (2,215\wbar{36}4)$, as suggested below:
\begin{align*}
  6\wbar{12347}5 & \leadsto (6,\wbar{12347}5)\leadsto
  (6,\wbar{12346}5)\,, &
  \wbar{2}316\wbar{47}5 & \leadsto (2,316\wbar{47}5) \leadsto (2,215\wbar{36}4)\,.
\end{align*}
Notice that this transformation is reversible.  Consider for example
the pair $(3,\bar{1}5\bar{4}23)$.  We can first rename
$\bar{1}5\bar{4}23$ so $3$ is not the absolute value of any letter,
thus obtaining $\bar{1}6\bar{5}24$. We can then add $\pm 3$ in front
of this word, having two choices, $3\bar{1}6\bar{5}24$ and
$\bar{3}\bar{1}6\bar{5}24$. There is exactly one choice yielding a
\ns \sp, namely $3\bar{1}6\bar{5}24$.

The process of normalizing the sequence $u_{2}\ldots u_{n}$ can be
understood as applying to each letter of this sequence the unique
order preserving bijection
$N_{n,x} : \setpmn \setminus \set{x,\wbar{x}} \rto \setpmn[\,n-1]$
where, in general, $x \in \setn$ and, in this case, $x = \abs{u_{1}}$.

\begin{lemma}
  Let $n \geq 2$.  For each pair $(x,v)$ with $x \in \setn$ and
  $v \in \B_{n-1}$, there exists a unique \ns $u \in \Bn$ such that
  $\chi(u) = (x, v)$.
\end{lemma}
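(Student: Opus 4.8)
The statement asserts that the map $\chi$, restricted to \ns \sp{s} in $\Bn$, is a bijection onto the set of pairs $(x,v)$ with $x \in \setn$ and $v \in \B_{n-1}$. My plan is to exhibit an explicit inverse and verify it works, essentially formalizing the reversal procedure already described informally in the paragraph preceding the lemma. Given a target pair $(x,v)$, recall the order-preserving bijection $N_{n,x} : \setpmn \setminus \set{x,\wbar{x}} \rto \setpmn[\,n-1]$; its inverse $N_{n,x}^{-1} : \setpmn[\,n-1] \rto \setpmn \setminus \set{x,\wbar{x}}$ is also order-preserving. Set $v' = N_{n,x}^{-1}(v_{1})\,N_{n,x}^{-1}(v_{2})\cdots N_{n,x}^{-1}(v_{n-1})$, a word on $\setpmn \setminus \set{x,\wbar{x}}$ using each absolute value in $\setn \setminus \set{x}$ exactly once with one chosen sign. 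Then there are exactly two ways to prepend a first letter of absolute value $x$, namely $x\,v'$ and $\wbar{x}\,v'$, and both are window notations of genuine \sp{s} in $\Bn$ (since $x$ is the unique absolute value missing from $v'$).

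\textbf{Key steps.} First I would check that both candidates $x\,v'$ and $\wbar{x}\,v'$ are well-defined elements of $\Bn$: this is immediate since the multiset of absolute values of their letters is exactly $\setn$, each occurring once. Second, I would observe that exactly one of the two is \ns: if $v_{1} > 0$ then $v'_{1} = N_{n,x}^{-1}(v_{1}) > 0$, so $x\,v'$ is \smooth and $\wbar{x}\,v'$ is \ns; if $v_{1} < 0$ then $v'_{1} < 0$, so $x\,v'$ is \ns and $\wbar{x}\,v'$ is \smooth. (The case $n \geq 2$ guarantees $v$ has a first letter $v_{1}$.) Call the \ns one $u$. Third, I would verify $\chi(u) = (x,v)$: by construction $\abs{u_{1}} = x$, and normalising $u_{2}\cdots u_{n} = v'$ means applying $N_{n,x}$ to each letter, which recovers $N_{n,x}(N_{n,x}^{-1}(v_{i})) = v_{i}$, i.e. $v$. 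Fourth, I would check uniqueness: if $u'' \in \Bn$ is \ns with $\chi(u'') = (x,v)$, then $\abs{u''_{1}} = x$ forces $u''_{1} \in \set{x,\wbar{x}}$, and the normalising map being a bijection forces $u''_{2}\cdots u''_{n} = v'$; among the two resulting words only one is \ns, so $u'' = u$. This shows $\chi$ is a bijection from \ns \sp{s} in $\Bn$ to $\setn \times \B_{n-1}$.

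\textbf{Main obstacle.} The only genuinely delicate point is the sign bookkeeping at the head of the word: one must be sure that ``\smooth'' versus ``\ns'' is decided solely by whether $u_{1}$ and $u_{2} = v'_{1}$ agree in sign, and that $v'_{1}$ and $v_{1}$ always have the same sign because $N_{n,x}^{-1}$ is order-preserving and sends $\setpmn[\,n-1]$ bijectively onto $\setpmn \setminus \set{x,\wbar{x}}$ with $0$ separating positives from negatives on both sides. Granting that, the choice of sign for the first letter is forced, and everything else is routine verification that two mutually inverse constructions compose to the identity in both directions.
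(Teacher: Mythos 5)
Your proposal is correct and follows essentially the same route as the paper's proof: rename $v$ to $v'$ via $N_{n,x}^{-1}$, prepend $x$ or $\wbar{x}$ according to the sign of $v'_{1}$, and observe that exactly one choice is \ns. You simply spell out in more detail the sign bookkeeping and the uniqueness check that the paper leaves implicit.
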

\begin{proof}
  We construct $u$ firstly by renaming $v$ to $v'$ so that none of
  $x,\wbar{x}$ appears in $v'$ (that is, we apply to each letter of
  $v$ the inverse of $N_{n,x}$) and then by adding in front of $v'$
  either $x$ or $\wbar{x}$, according to the sign of the first letter
  of $v'$.
\end{proof}
\begin{lemma}
  \label{lemma:nonsmooth}
  The correspondence $\chi$ restricts to a bijection from the set of
  \ns \sp{s} $u \in \Bn$ such that $\desB {u} = k$ to the set of pairs
  $(x,v)$ where $x \in \setn$ and $v \in \B_{n-1}$ is such that
  $\Card{\descBp{v}} = k -1$.
\end{lemma}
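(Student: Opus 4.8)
The plan is to combine the previous lemma---which already guarantees that $\chi$ is a bijection between \ns \sp{s} $u \in \Bn$ and pairs $(x,v)$ with $x \in \setn$ and $v \in \B_{n-1}$---with a careful bookkeeping of how the type $\B$ descent count changes under $\chi$. So the real content is: if $u \in \Bn$ is \ns and $\chi(u) = (x,v)$, then $\desB{u} = \Card{\descBp{v}} + 1$. First I would observe that, since $u$ is \ns, we have $u_{1}$ and $u_{2}$ of opposite sign, hence the pair $u_{1}u_{2}$ contributes exactly one descent to $\descB{u}$ among the positions $\{0,1\}$: namely, if $u_{1} < 0 < u_{2}$ then $0 \in \descB{u}$ and $1 \notin \descB{u}$, while if $u_{1} > 0 > u_{2}$ then $0 \notin \descB{u}$ and $1 \in \descB{u}$. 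In either case exactly one of the positions $0,1$ lies in $\descB{u}$, accounting for the ``$+1$''.

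Next I would analyse the remaining descents of $u$, namely those at positions $i \in \{2,\ldots,n-1\}$, which are governed solely by the comparisons $u_{i} > u_{i+1}$ within the suffix $u_{2}u_{3}\cdots u_{n}$. Normalising this suffix to obtain $v = u'$ applies the order-preserving bijection $N_{n,x}$ to each letter, so it preserves and reflects all these comparisons; hence position $i \in \{2,\ldots,n-1\}$ is a descent of $u$ if and only if position $i-1 \in \{1,\ldots,n-2\}$ is a strictly positive descent of $v$. This gives a bijection between $\descB{u} \cap \{2,\ldots,n-1\}$ and $\descBp{v} \cap \{1,\ldots,n-2\}$. It remains to deal with position $0$ of $v$, i.e. whether $v_{1} < 0$: but $v_{1}$ has the same sign as the first letter of the normalised suffix, which has the same sign as $u_{2}$; this position $0$ of $v$ is exactly the potential descent at position $1$ of $u$ that we already counted in the ``$+1$'' term, so it must be excluded from $\descBp{v}$, consistently with the fact that $\descBp{v}$ by definition excludes position $0$. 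Putting these together, $\desB{u} = 1 + \Card{\descBp{v}}$, equivalently $\Card{\descBp{v}} = k-1$ when $\desB{u} = k$.

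Finally I would close the argument by invoking the previous lemma: for each $(x,v)$ with $x \in \setn$ and $v \in \B_{n-1}$ satisfying $\Card{\descBp{v}} = k-1$, there is a unique \ns $u \in \Bn$ with $\chi(u) = (x,v)$, and by the descent count just established this $u$ satisfies $\desB{u} = k$; conversely every \ns $u$ with $\desB{u} = k$ maps under $\chi$ to such a pair. Hence $\chi$ restricts to the claimed bijection. The main obstacle I anticipate is purely one of careful case analysis around the positions $0$ and $1$: one must be scrupulous about which of the boundary positions is a type $\B$ descent in $u$ versus in $v$, and about the convention that $\descBp{\,\cdot\,}$ discards position $0$, since it is easy to be off by one here. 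No deep idea is needed beyond the order-preservation of $N_{n,x}$ and the opposite-sign property of \ns permutations.
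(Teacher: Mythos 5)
Your proposal is correct and follows essentially the same route as the paper's proof: invoke the preceding lemma for the underlying bijection, observe that exactly one of the positions $0,1$ is a type $\B$ descent of $u$ because $u_{1},u_{2}$ have opposite signs, and match the descents of $u$ at positions $2,\ldots,n-1$ with the strictly positive descents of $v$ via the order-preserving normalisation $N_{n,x}$. Your extra care around position $0$ of $v$ is consistent with (and slightly more explicit than) the paper's argument.
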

\begin{proof}
  We have already argued that $\chi$ is a bijection from the set of
  \ns \sp{s} $u$ of $\setn$ to the set of pairs $(x,v)$ with
  $x \in \setn$ and $v \in \B_{n-1}$. Therefore, we are left to argue
  that, for a \ns $u $ and $v$ such that $\chi(u) = (x,v)$,
  $\desB{u} = k$ if and only if $\Card{\descBp{v}} = k -1$. Said
  otherwise, we need to argue that, for such $u$ and $v$,
  $\Card{\descBp{v}} = \desB{u} -1$.
  To this end, observe that (i) $\Card{\descB{u} \cap \set{0,1}} = 1$,
  since $u_{1},u_{2}$ have different sign, (ii)
  $\descBp{v} = \set{i -1 \mid i \in \descB{u} \cap \set{2,\ldots
      ,n-1} }$, from which the relation
  $\Card{\descBp{v}} = \desB{u} -1$ follows.
\end{proof}

\begin{theorem}
  The 
  following relations hold:
  \begin{align*}
    \eulB{n}{k} & = \eulD{n}{k} + n2^{n-1}\eulA{n-1}{k-1}\,,
    & 
    B_{n}(t) & = D_{n}(t) + n2^{n-1}tS_{n-1}(t) \,.
  \end{align*}
\end{theorem}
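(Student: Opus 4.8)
The plan is to partition the group $\Bn$ of signed permutations into the \smooth ones and the \ns ones, and to count each class with its number of type $\B$ descents. By Corollary~\ref{cor:smooth}, the \smooth signed permutations with $k$ type $\B$ descents are in bijection with the \esp{s} in $\Dn$ with $k$ type $\D$ descents; hence the \smooth part contributes exactly $\eulD{n}{k}$ to the count of $\{u \in \Bn \mid \desB{u} = k\}$. It therefore remains only to count the \ns signed permutations with $k$ type $\B$ descents.

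For the \ns part I would invoke Lemma~\ref{lemma:nonsmooth}: the correspondence $\chi$ restricts to a bijection between the \ns $u \in \Bn$ with $\desB{u}=k$ and the pairs $(x,v)$ with $x \in \setn$ and $v \in \B_{n-1}$ satisfying $\Card{\descBp{v}} = k-1$. There are $n$ choices for $x$, and by Lemma~\ref{lemma:positiveDescs} the number of $v \in \B_{n-1}$ with $\Card{\descBp{v}} = k-1$ is $2^{n-1}\eulA{n-1}{k-1}$. Hence the number of \ns signed permutations with $k$ type $\B$ descents is $n\,2^{n-1}\eulA{n-1}{k-1}$.

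Adding the two contributions gives
\begin{align*}
  \eulB{n}{k} & = \Card{\{u \in \Bn \mid \desB{u} = k\}} = \eulD{n}{k} + n2^{n-1}\eulA{n-1}{k-1}\,,
\end{align*}
which is the coefficient form of the identity. Multiplying by $t^{k}$ and summing over $k$ yields the polynomial identity $B_{n}(t) = D_{n}(t) + n2^{n-1}tS_{n-1}(t)$, since $\sum_{k}\eulA{n-1}{k-1}t^{k} = t\,S_{n-1}(t)$. Rearranging gives precisely \StemI, $D_{n}(t) = B_{n}(t) - n2^{n-1}tS_{n-1}(t)$.

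The only real content has already been isolated in the preceding lemmas, so there is no serious obstacle left; the one point that needs care is checking that the two pieces of the partition are counted by the correct statistic — namely that \smooth elements are genuinely enumerated by $\desB{u}$ (which matches $\desD{v}$ via Lemma~\ref{lemma:smooth} and Lemma~\ref{lemma:descmates}) and that the \ns count via $\chi$ uses $\descBp{v}$ rather than $\descB{v}$, the shift by one being accounted for by the fact that a \ns element always has exactly one descent among positions $\{0,1\}$. Once this bookkeeping is in place the result follows immediately.
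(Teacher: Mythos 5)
Your proof is correct and follows exactly the paper's own argument: the same partition of $\Bn$ into \smooth and \ns elements, with Corollary~\ref{cor:smooth} handling the \smooth part and Lemma~\ref{lemma:nonsmooth} together with Lemma~\ref{lemma:positiveDescs} counting the \ns part. Nothing to add.
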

\begin{proof}
  Every \sp is either \smooth or \ns.  By Corollary~\ref{cor:smooth},
  the \smooth \sp{s} with $k$ type $\B$ descents are in bijection with
  the \esp{s} with $k$ type $\D$ descents.  By
  Lemma~\ref{lemma:nonsmooth}, the \ns \sp{s} $u \in \B_{n}$ with $k$
  type $\B$ descents are in bijection with the pairs
  $(x,v) \in \setn \times \B_{n-1}$ such $\Card{\descBp{v}} = k
  -1$. Using Lemma~\ref{lemma:positiveDescs}, the number of these
  pairs is $n2^{n-1}\sEulA{n-1}{k-1}$.
\end{proof}

\medskip

\begin{example}
  We end this section exemplifying the use of formulas
  \eqref{eq:eulBeven} and \eqref{eq:stembridge} by which computation
  of the \En{s} of type $\B$ and $\D$ is reduced to computing \En{s}
  of type $\A$.  Let us mention that our interest in \En{s} originates
  from our lattice-theoretic work on the lattice variety of
  Permutohedra \citep{JEMS} and its possible extensions to generalized
  forms of Permutohedra \citep{Pouzet1995,STA2,CWO}. Among these
  generalizations, we count lattices of finite Coxeter groups in the
  types $\B$ and $\D$ \citep{Bjo84}. While it is known that the
  lattices $\Bn$ span the same lattice variety of the permutohedra,
  see \citep[Exercise 1.23]{STA1}, characterizing the lattice variety
  spanned by the lattices $\Dn$ is an open problem. 
  A first step towards solving this kind of problem is to characterize
  (and count) the \jirr elements of a class of lattices. In our case,
  this amounts to characterizing
  the elements $u$ in $\Bn$ (\resp in $\Dn$) such that $\desB{u} = 1$
  (\resp such that $\desD{u} = 1$).
  The numbers $\sEulB{n}{1}$ and $\sEulD{n}{1}$ are known to be equal
  to $3^{n} - n -1$ and $3^{n} - n -1 - n2^{n-1}$ respectively, see
  \citep[Propositions 13.3 and 13.4]{Petersen2015}.
  Let us see how to derive these identities using the formulas
  \eqref{eq:eulBeven} and \eqref{eq:stembridge}.
  To this end, we also need the alternating sum formula for \En{s},
  see e.g.  \cite[Theorem 1.11]{Bona2012} or \cite[page
  12]{Petersen2015}:
\begin{align}
  \label{eq:alternating} 
  \eulA{n}{k} & = \sum_{j = 0}^{k}(-1)^{j}\binom{n+1}{j}(k
  +1-j)^{n}\,.
\end{align}

For type $\B$, we have
\begin{align*}
  \eulB{n}{1} &  = \eulA{n}{0}\binom{n+1}{2} +
  \eulA{n}{1}\binom{n+1}{1} + \eulA{n}{2}\binom{n+1}{0} \\
  & = \binom{n+1}{2} +
  (2^{n} - n -1)(n +1) + \eulA{n}{2} \\
  & = \binom{n+1}{2} + (2^{n} - n -1)(n +1) + 3^{n} - 2^{n}(n+1) +
  \binom{n + 1}{2} \,,
  \tag*{by \eqref{eq:alternating}}
  \\
  &
  = 3^{n} - (n +1)^{2} + 2\binom{n+1}{2}
  = 3^{n} - (n +1)(n+1 - n) = 3^{n} - n -1  \,.
\end{align*}
The computation of type $\D$ is then immediate from Stembridge's
identity \eqref{eq:stembridge}:
\begin{align*}
  \eulD{n}{1} &  = \eulB{n}{1} - n2^{n-1}\eul{n-1}{0} = 3^{n}  - n - 1
  - n2^{n-1}\,.
  \tag*{\EndOfExample}
\end{align*}
\end{example}

\section{\Tg{s} and their \dgo{s}}
\label{threesholds}
Besides presenting the bijective proofs, a goal of this paper is to
illustrate the \prep of \sp{s} and exemplify its potential.  The
attentive reader might object that the \prep is not in use within
Section~\ref{sec:Stembridge}. Indeed, after discovering the bijective
proof of \StemI via the \prep, we realized that the proof could be
simplified and reach a larger audience by avoiding mentioning the
representation. It might be asked then whether the \prep yields more
information, in particular with respect to the lattices of the Coxeter
groups $\Dn$. We answer this question in this section.  The type $\D$
set of inversions of an \esp 
can be defined as follows:
\begin{align*}
  \invD{u} & \eqdef \invB{u} \setminus \set{(-i,i) \mid i \in \setn}\,,
\end{align*}
which, graphically, amounts to ignoring cells on the diagonal:
\begin{center}
  \includegraphics{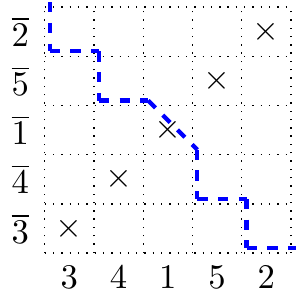}
\end{center}
As mentioned in Remark~\ref{rem:charInversions}, we can identify
the set of inversions of a \sp $u$ with the disjoint union of
$\inv{\lux}$ and a set of unordered pairs. For \esp{s}, this
identification yields:
\begin{align}
  \label{def:Eu}
  \nonumber
  \invD{u} & = \inv{\lux} \cup \Eu\quad \\
  \text{with} \quad
  \Eu & \eqdef \set{\upair{i,j} \mid i,j \in \setn, i \neq j,\,
    ((\lux)^{-1}(i),(\lux)^{-1}(j)) \text{
      lies below } \pi^{u} }\,.
\end{align}
Therefore, we consider $(\setn,\Eu)$ as a simple graph on the set of
\vertices $\setn$.
Let us observe that the definition of the set of edges $E^{u}$
in~\eqref{def:Eu} makes sense for all \sp{s}, not just for an
  \esp{s}. Moreover, for mates $u$ and $\mate{u}$, we have
  $E^{u} = E^{\mate{u}}$.
  Before exploring further the graph $(\setn,\Eu)$, we recall some
  standard graph-theoretic concepts. For an arbitrary simple graph
  $(V,E)$ and a vertex $v \in V$, we let:
 \begin{align*}
   \NE(v) & \eqdef \set{u \in V \mid \upair{v,u} \in E}\,, & \degE(v) & \eqdef
   \Card{\NE(v)}\,,& \SE{v} & \eqdef \NE(v) \cup \set{v}\,.
 \end{align*}
 $\NE(v)$ is the \emph{neighbourhood} of the vertex $v$, $\degE(v)$
 is its \emph{degree}, and $\SE{v}$ is often called the \emph{star} of the
 vertex $v$.
 A linear ordering $v_{1},\ldots ,v_{n}$ of $V$ is a \emph{\dgo} of
 $(V,E)$ if
 $\degE(v_{1}) \geq \degE(v_{2})\geq \ldots \geq \degE(v_{n})$.
 A \emph{preorder} on a set $V$ is a reflexive and transitive binary
 relation on $V$.
 The
 \emph{vicinal preorder} of a graph $(V,E)$, denoted $\vp_{E}$, is
 defined by $v \vp_{E} u \tiff \NE(v) \subseteq \SE{u}$. Notice that
 the relation $\NE(v) \subseteq \SE{u}$ is equivalent to
 $\NE(v) \setminus \set{u} \subseteq \NE(u) \setminus \set{v}$.
 The vicinal preorder is indeed a preorder, see e.g. \citep{MP1995}.
 For completeness, we add a statement and a proof of this fact.
 \begin{lemma}
   The relation $\vp_{E}$ on a simple graph $(V,E)$ is reflexive and
   transitive.
 \end{lemma}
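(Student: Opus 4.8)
The plan is to check the two closure properties straight from the definition, $v \vpE u$ iff $\NE(v) \subseteq \SE{u}$, recalling that $\SE{u} = \NE(u) \cup \set{u}$. Reflexivity needs no real argument: for every $v$ one has $\NE(v) \subseteq \NE(v) \cup \set{v} = \SE{v}$, hence $v \vpE v$. So the content of the lemma is transitivity.

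For transitivity I would fix $v,u,w$ with $v \vpE u$ and $u \vpE w$, take an arbitrary $x \in \NE(v)$, and aim to show $x \in \SE{w}$. From $v \vpE u$ we get $x \in \SE{u}$, i.e. $x \in \NE(u)$ or $x = u$. If $x \in \NE(u)$, then $u \vpE w$ gives $x \in \SE{w}$ and we are done immediately. The only real case is $x = u$, and this is the step I expect to be the main obstacle: the hypothesis $u \vpE w$ controls the set $\NE(u)$, not the vertex $u$ itself, so one extra turn of the crank is needed.

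To deal with $x = u$, I would use that the graph is simple, so edges are unordered and there are no loops. From $u = x \in \NE(v)$ we get $v \in \NE(u)$, and $u \vpE w$ then yields $v \in \SE{w}$, i.e. $v = w$ or $v \in \NE(w)$. If $v = w$, then $x = u \in \NE(v) = \NE(w) \subseteq \SE{w}$. If $v \in \NE(w)$, then $w \in \NE(v)$, and feeding this back into $v \vpE u$ gives $w \in \SE{u}$, i.e. $w = u$ or $w \in \NE(u)$; in the first subcase $x = u = w \in \SE{w}$, and in the second $u \in \NE(w) \subseteq \SE{w}$, so $x = u \in \SE{w}$. In every branch $x \in \SE{w}$, hence $\NE(v) \subseteq \SE{w}$, that is, $v \vpE w$.

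An alternative I would mention is to carry out the same bookkeeping through the reformulation $\NE(v) \setminus \set{u} \subseteq \NE(u) \setminus \set{v}$ recorded just above the lemma: transitivity then reduces to composing two such inclusions, the only subtlety again being to verify, using symmetry and absence of loops, that the ``diagonal'' element $u \in \NE(v) \setminus \set{w}$ lands in $\NE(w) \setminus \set{v}$. Either route is a short finite case split with no computation; the whole difficulty is isolating and handling the coincidence $x = u$.
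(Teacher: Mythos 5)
Your proof is correct and follows essentially the same route as the paper: a direct element-chase for $\NE(v)\subseteq \SE{w}$, with a case split on the coincidence $x=u$ resolved by bouncing back and forth via the symmetry of the edge relation. The only cosmetic difference is that the paper dispatches the non-pairwise-distinct case up front, whereas you absorb those coincidences into the nested case analysis; both are complete.
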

 \begin{proof}
   Reflexivity is
   obvious. For transitivity, let us consider $u,v,w \in V$ such that
   $u \vp_{E} v \vp_{E} w$. If $u,v,w$ are not pairwise distinct, then
   $u \vp_{E} w$ immediately follows.  Therefore, let us assume that
   $u,v,w$ are pairwise distinct with $\NE(u) \subseteq \SE{v}$ and
   $\NE(v) \subseteq \SE{w}$. Let $x \in \NE(u)$. If $x \neq v$, then
   $x \in \NE(v) \subseteq \SE{w}$. If $x = v$, then $v \in \NE(u)$,
   thus $u \in \NE(v) \subseteq \SE{w}$ and since $u \neq w$,
   $u \in \NE(w)$; thus $w \in \NE(u) \subseteq \SE{v}$, so
   $w \in \NE(v)$ and $x = v \in \NE(w)$.
   Therefore $\NE(u) \subseteq \SE{w}$.
 \end{proof}

 Next, we take Theorem 1 in \citep{chvatal1977} as the definition of
 the class of threshold graphs and consider, among the possible
 characterizations of this class, the one that uses the vicinal
 preorder.
 \begin{definition}
   A graph $(V,E)$ is \emph{threshold} if it does not contain an
   induced subgraph isomorphic to one among $2K_{2}$, $P_{3}$ and
   $C_{4}$ (these graphs
   are illustrated in
   Figure~\ref{fig:excludedFromThreshold}).
 \end{definition}
 \begin{figure}
  \centering
  \includegraphics[scale=1]{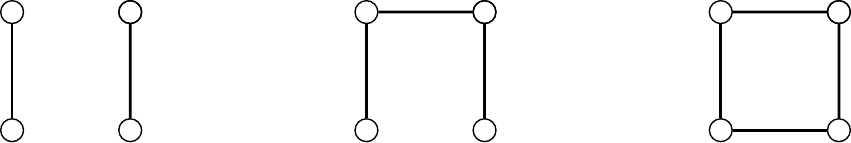}
  \caption{The (unlabelled) graphs $2K_{2}$,
    $P_{3}$, and $C_{4}$}
  \label{fig:excludedFromThreshold}
\end{figure}
A binary relation $R$ on $V$ is \emph{total} if and only if, for each $v,u
\in V$, $v R u$ or $u R v$.

\def\citeMP{\citep[Theorem 1.2.4]{MP1995}}
\begin{proposition}[see e.g. \citeMP]
  \label{prop:tgCharact}
  A graph $(V,E)$ is threshold if and only if the vicinal preorder is
  total.
\end{proposition}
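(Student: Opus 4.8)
The plan is to reduce both implications to one observation about \emph{witnesses of incomparability}. First I would unwind the definitions: given $p,q\in V$, neither $\NE(p)\subseteq\SE{q}$ nor $\NE(q)\subseteq\SE{p}$ holds if and only if there are vertices $x\in\NE(p)\setminus\SE{q}$ and $y\in\NE(q)\setminus\SE{p}$; this is immediate, since $\NE(p)\not\subseteq\SE{q}$ means precisely that such an $x$ exists, and symmetrically for $y$. Hence $\vp[E]$ fails to be total exactly when some pair $p,q$ admits such witnesses $x,y$. Next I would record that, whenever these witnesses exist, the four vertices $p,q,x,y$ are pairwise distinct: $p\neq q$ (otherwise $\NE(p)\subseteq\SE{q}$ trivially), $x\neq p$ and $y\neq q$ since a vertex never lies in its own neighbourhood, $x\neq q$ and $y\neq p$ by the choice of the witnesses, and $x\neq y$ because $x\in\NE(p)$ while $y\notin\NE(p)$.

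For the direction ``threshold $\Rightarrow$ total'', assume $(V,E)$ is threshold and, towards a contradiction, pick $p,q$ incomparable under $\vp[E]$ together with witnesses $x,y$. Among the four distinct vertices $\set{p,q,x,y}$, the pairs $\upair{p,x}$ and $\upair{q,y}$ are edges, $\upair{p,y}$ and $\upair{q,x}$ are non-edges, and only the status of $\upair{p,q}$ and of $\upair{x,y}$ is left open. A split into four cases on these two pairs then pins down the induced subgraph on $\set{p,q,x,y}$: it is $2K_{2}$ when both are non-edges, it is $C_{4}$ (the cycle $p-x-y-q-p$) when both are edges, and it is $P_{3}$ in the two remaining cases, realised as the path $x-p-q-y$ when $\upair{p,q}\in E$ and as the path $p-x-y-q$ when $\upair{x,y}\in E$. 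Each possibility contradicts the hypothesis that $(V,E)$ is threshold, so no such $p,q$ can exist and $\vp[E]$ is total.

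For the converse, assume $\vp[E]$ is total and suppose $(V,E)$ contained one of $2K_{2}$, $P_{3}$, $C_{4}$ as an induced subgraph on four vertices. In each case I would name an incomparable pair directly: for $2K_{2}$ with edges $\upair{a,b}$ and $\upair{c,d}$, take $p=a$, $q=c$ and witnesses $x=b$, $y=d$; for the path $a-b-c-d$, take the endpoints $p=a$, $q=d$ and witnesses $x=b$, $y=c$; for the cycle $a-b-c-d-a$, take an adjacent pair $p=a$, $q=b$ and witnesses $x=d$, $y=c$. In all three cases $x\in\NE(p)$ and $y\in\NE(q)$, while $x\notin\SE{q}$ and $y\notin\SE{p}$ because the relevant pairs are non-edges of the induced subgraph and the four vertices are distinct; by the observation above, $p$ and $q$ are then incomparable under $\vp[E]$, contradicting totality. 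Hence $(V,E)$ is threshold.

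The whole argument is elementary once the witness reformulation is isolated. I expect the only delicate point to be the bookkeeping in the four-case analysis --- in particular, keeping track of which of the four vertices plays which role --- together with the remark that $\vp[E]$-comparability of $p$ and $q$ depends only on which vertices are adjacent to $p$ and to $q$, so that further neighbours of $p,q,x,y$ lying outside the chosen four-element set never affect the conclusion.
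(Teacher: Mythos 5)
Your proof is correct and complete. Note that the paper does not actually prove this proposition: it is quoted from Mahadev and Peled (Theorem 1.2.4) with the forbidden-induced-subgraph condition taken as the \emph{definition} of threshold, so there is no in-text argument to compare against. What you supply is the standard self-contained derivation, and every step checks out: the witness reformulation of $\vp[E]$-incomparability, the pairwise distinctness of $p,q,x,y$, the four-case analysis on the undetermined pairs $\upair{p,q}$ and $\upair{x,y}$ recovering $2K_{2}$, $C_{4}$, and the two realisations of the four-vertex path (the paper's $P_{3}$ counts edges, so it is indeed the path on four \vertices that your mixed cases produce), and the explicit incomparable pairs in the converse direction. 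The one point you rightly flag --- that membership of a witness in $\SE{q}$ is decided entirely by edges and non-edges inside the chosen four-element set --- is exactly what the word ``induced'' guarantees, so the converse goes through as stated.
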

We develop next a few considerations on  \tg{s}.
\begin{lemma}
  \label{lemma:charTgs}
  For a simple graph $(V,E)$ and a total ordering $<$ on $V$, the
  following conditions are equivalent:
  \begin{enumerate}[(i)]
  \item $(V,E)$ is a \tg and $<$ is a \dgo,
  \item $u < v$ implies $v \vp[E] u$, for each $v,u \in V$. 
  \end{enumerate}
  If any of the above conditions hold, then, for each $u \in V$,
  $\NE(u)$ is a downset in the following sense: if $v \in \NE(u)$ and
  $w \neq u$ is such that $w < v$, then $w \in \NE(u)$.
\end{lemma}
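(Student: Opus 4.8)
The plan is to derive both implications from Proposition~\ref{prop:tgCharact}, which says that $(V,E)$ is a \tg exactly when the vicinal preorder $\vp[E]$ is total, together with a single monotonicity observation: for all $u,v \in V$, if $v \vp[E] u$ then $\degE(v) \le \degE(u)$, and if moreover $\degE(v) = \degE(u)$ then also $u \vp[E] v$. I would prove this observation first, splitting on whether $\upair{u,v} \in E$. If $\upair{u,v} \notin E$, then $u \notin \NE(v)$, so $\NE(v) \subseteq \SE{u}$ reads $\NE(v) \subseteq \NE(u)$; taking cardinalities gives the inequality, and equality of degrees forces $\NE(v) = \NE(u)$, hence $\NE(u) \subseteq \SE{v}$, i.e.\ $u \vp[E] v$. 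If $\upair{u,v} \in E$, then $u \in \NE(v)\setminus\NE(u)$ and $v \in \NE(u)\setminus\NE(v)$, and $\NE(v) \subseteq \SE{u}$ is equivalent to $\NE(v)\setminus\set{u} \subseteq \NE(u)\setminus\set{v}$; cardinalities now give $\degE(v)-1 \le \degE(u)-1$, and equality of degrees turns this inclusion of equal finite sets into an equality, whence $\NE(u) = (\NE(u)\setminus\set{v}) \cup \set{v} \subseteq \NE(v)\cup\set{v} = \SE{v}$, i.e.\ $u \vp[E] v$.

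For $(ii) \Rightarrow (i)$: totality of $\vp[E]$ is immediate, since for distinct $u,v$ one of $u<v$, $v<u$ holds, and (ii) then delivers $v \vp[E] u$ or $u \vp[E] v$; so $(V,E)$ is a \tg by Proposition~\ref{prop:tgCharact}. That $<$ is a \dgo is exactly the monotonicity observation in the right direction: if $u < v$ then $v \vp[E] u$ by (ii), hence $\degE(v) \le \degE(u)$.

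For $(i) \Rightarrow (ii)$: fix $u < v$. By Proposition~\ref{prop:tgCharact} the preorder $\vp[E]$ is total, so either $v \vp[E] u$, in which case we are done, or $u \vp[E] v$. In the latter case the monotonicity observation gives $\degE(u) \le \degE(v)$, while $u<v$ and the \dgo hypothesis give $\degE(u) \ge \degE(v)$; hence $\degE(u) = \degE(v)$, and the ``moreover'' part of the observation upgrades $u \vp[E] v$ to $v \vp[E] u$.

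Finally, for the downset claim, assume the equivalent conditions hold and take $v \in \NE(u)$ and $w \ne u$ with $w < v$. Condition (ii) gives $v \vp[E] w$, i.e.\ $\NE(v) \subseteq \SE{w}$. Since $v \in \NE(u)$ we have $u \in \NE(v) \subseteq \NE(w) \cup \set{w}$, and since $u \ne w$ this forces $u \in \NE(w)$, that is, $w \in \NE(u)$. The step I expect to be the main obstacle is the tie case inside $(i)\Rightarrow(ii)$: because $\vp[E]$ is only total, not antisymmetric, when $\degE(u)=\degE(v)$ one must actually verify that $v \vp[E] u$ holds and not merely $u \vp[E] v$; this is the one place where the cardinality bookkeeping (again split on $\upair{u,v}\in E$) must be carried out with care, and it is precisely what the second half of the monotonicity observation is for.
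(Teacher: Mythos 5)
Your proof is correct and takes essentially the same route as the paper's: the same degree-monotonicity observation about $\vp[E]$ (with the strict/tie refinement), combined with Proposition~\ref{prop:tgCharact}, drives both implications and the downset claim. The only differences are cosmetic — you spell out the cardinality bookkeeping for the tie case, which the paper leaves as "the same argument", and you argue (i)$\Rightarrow$(ii) directly rather than by contradiction.
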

\begin{proof}
  We observe firstly that $v \vpE u$ implies $\degE(v) \leq
  \degE(u)$. Indeed, this follows from the fact that $v \vpE u$
  amounts to
  $\NE(v) \setminus \set{u} \subseteq \NE(u) \setminus \set{v}$ and
  that $u \in \NE(v)$ if and only if $v \in \NE(u)$. Notice also that
  the same argument can be used to argue that if $v \vpE u$ and
  $ u \not\vpE v$, then $\degE(v) < \degE(u)$.
  
  Let therefore $(V,E)$ and $<$ be as stated. By the remark above, if
  $<$ satisfies (ii) then it is a \dgo and the relation $\vpE$ is
  total, since if $u \not \vpE v$, then $u \not< v$, so $v \leq u$ and
  $v \vpE u$; thus $(V,E)$ is a \tg.
  Suppose next $(V,E)$ is a \tg and that $<$ is a \dgo, so $u < v$
  implies $\degE(v) \leq \degE(u)$. Let $u < v$ and suppose that
  $v \not\vpE u$. Since the vicinal preorder is total, we have then
  $u \vpE v$ and so $\degE(u) < \degE(v)$, contradicting
  $\degE(v) \leq \degE(u)$.

  For the last statement, for such $u,v,w$, the relation $w < v$
  implies
  $\NE(v) \setminus \set{w} \subseteq \NE(w) \setminus \set{v}$.
  Since $v \in \NE(u)$, then $u \in \NE(v) \setminus \set{w}$ and
  $u \in \NE(w) \setminus \set{v}$, so $w \in \NE(u)$.
\end{proof}

We establish now the connection between \tsg{s} on the set of \vertices
$\setn$, paths, and \esp. We achieve this through the order-theoretic
notion of \Gc, see e.g. \citep[\S 7]{DP2002} or \citep{Nelson1976}.  A
\emph{Galois connection} on $\setnp$ is a pair of functions
$f,g : \setnp \rto \setnp$ such that, for each $x,y \in \setnp$,
$y \leq f(x)$ if and only if $x \leq g(y)$.  We say that a map
$f : \setnp \rto \setnp$ is a \emph{\hf} (we shall discover few lines
below the reason for the naming) if it is is antitone---that is,
$x \leq y$ implies $f(y) \leq f(x)$, for each $x,y \in \setnp$---and
moreover $f(0) = n$.  Observe that, for a \Gc $(f,g)$, $f$ is a
\hf. It is part of elementary order theory that a map
$f : \setnp \rto \setnp$ is part of a Galois connection exactly when
it is a \hf. Moreover, for $f: \setnp \rto \setnp$, there is at most
one function $g : \setnp \rto \setnp$ such that $(f,g)$ is a Galois
connection.

Paths from $(0,n)$ to $(n,0)$ that are composed only by East and
\Ss{s} bijectively correspond to \hf{s}. The bijection is realized by
the correspondence sending a path $\pi$ to
$\height_{\pi} : \setnp \rto \setnp$ such that $\height_{\pi}(x)$ is
the height of $\pi$ after $x$ \Es{s}; Figure~\ref{fig:pathsGCs}
illustrates this correspondence.
\begin{figure}
  \centering
  \includegraphics[page=4]{afirstSPermutation}
  \caption{Paths as \hf{s}}
  \label{fig:pathsGCs}
\end{figure}
We refer the reader to \citep{2019-WORDS} for the correspondence
between paths and this kind of functions in the discrete setting. 
For a \hf $f : \setnp \rto \setnp$, we say that $x$ is
  \emph{negative} if $x \leq f(x)$ and that $x$ is \emph{positive} if
  $f(x) \leq x$.  Notice that $x$ is both positive and negative if and
  only if it is \fp of $f$.  Let $N_{f}$ (resp., $P_{f}$) be
  the set of negative (resp., positive) elements of $f$. Observe that
  $N_{f} \neq\emptyset$, so we can define $\gamma_{f}$, the center of
  $f$, as the maximum of this set, $\gamma_{f} \eqdef \max N_{f}$.
  \begin{lemma}
    \label{lemma:fixepointgammaf}
    For a \hf
    $f : \setnp \rto \setnp$, $N_{f}$ is a downset, $P_{f}$
    is an upset, and $\Card{N_{f} \cap P_{f}} \leq 1$. In particular,
    $f$ has at most one \fp, necessarily $\gamma_{f}$.
  \end{lemma}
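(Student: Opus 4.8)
The plan is to derive all three assertions directly from the two defining properties of a \hf, namely antitonicity and the normalisation $f(0) = n$; the fixed-point claim then drops out as a corollary. Since $0 \leq n = f(0)$ we have $0 \in N_{f}$, so $N_{f}$ is nonempty and $\gamma_{f} = \max N_{f}$ is well-defined, as already observed just before the lemma.

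First I would check that $N_{f}$ is a downset. Take $x \in N_{f}$, so $x \leq f(x)$, and let $y \leq x$. Antitonicity gives $f(x) \leq f(y)$, whence $y \leq x \leq f(x) \leq f(y)$, so $y \in N_{f}$. Dually, to see that $P_{f}$ is an upset, take $x \in P_{f}$, so $f(x) \leq x$, and let $y \geq x$; then $f(y) \leq f(x) \leq x \leq y$, so $y \in P_{f}$.

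Next I would bound $\Card{N_{f} \cap P_{f}}$. Suppose $x,y \in N_{f} \cap P_{f}$ and, without loss of generality, $x \leq y$. Since $x$ is positive and $y$ is negative, $f(x) \leq x$ and $y \leq f(y)$, while antitonicity together with $x \leq y$ gives $f(y) \leq f(x)$. Chaining these, $y \leq f(y) \leq f(x) \leq x \leq y$, which forces $x = y$; hence $\Card{N_{f} \cap P_{f}} \leq 1$.

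Finally, any \fp $x$ satisfies $x = f(x)$, so $x \in N_{f} \cap P_{f}$, and by the previous step there is at most one such $x$. Moreover, if $y \in N_{f}$ with $y > x$, then $y \leq f(y) \leq f(x) = x < y$, a contradiction; so a \fp is the largest element of $N_{f}$, that is, $x = \gamma_{f}$. I do not anticipate any genuine obstacle: the only points requiring care are keeping the inequality directions straight under the antitone map and invoking the already-noted fact that $N_{f} \neq \emptyset$ so that $\gamma_{f}$ makes sense.
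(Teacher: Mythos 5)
Your proof is correct and follows essentially the same route as the paper: antitonicity gives the downset/upset claims directly, and chaining inequalities through $f$ shows two elements of $N_{f}\cap P_{f}$ must coincide. The only difference is cosmetic—you argue the intersection bound directly rather than first noting its elements are fixed points—and you additionally spell out why a fixed point must equal $\gamma_{f}$, which the paper leaves implicit.
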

  \begin{proof}
    If $x \leq y \leq f(y)$, then $f(y) \leq f(x)$, so $N_{f}$ is a
    downset. Similarly, $P_{f}$ s an upset. Also, if $x$ is positive,
    then $f(x)$ is negative, and if $x$ is negative, then $f(x)$ is
    positive.
    Next, the intersection $N_{f} \cap P_{f}$ can have at most one
    element. Indeed, if $x,y$ are distinct \fp{s} and $x < y$,
    then $y = f(y) \leq f(x) = x$, a contradiction.
  \end{proof}

  \begin{lemma}
    For $f = \height_{\pi}$, $(\gamma_{f},\gamma_{f})$ is the
    (necessarily unique) intersection point of $\pi$ with the
    diagonal.
  \end{lemma}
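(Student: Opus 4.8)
The plan is to show that both $(\gamma_{f},\gamma_{f})$ and ``the point where $\pi$ meets the diagonal'' are one and the same lattice point of $\pi$. Regard $\pi$ as the sequence of its lattice points $v_{0},v_{1},\dots ,v_{2n}$, with $v_{0}=(0,n)$, $v_{2n}=(n,0)$, and each $v_{i+1}$ obtained from $v_{i}$ by moving one unit East or one unit South. The single observation driving everything is that \emph{both} an East step and a South step decrease the quantity $y-x$ by exactly $1$. Since $y-x$ equals $n$ at $v_{0}$ and $-n$ at $v_{2n}$, along $\pi$ it assumes each value in $\set{n,n-1,\dots ,-n}$ at exactly one vertex; in particular there is a unique vertex $v_{m}=(a,a)$ on the diagonal $y=x$. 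Moreover a horizontal edge of $\pi$ has constant integer ordinate and a vertical edge has constant integer abscissa, so the diagonal meets no edge of $\pi$ in its interior; hence $v_{m}$ is genuinely the unique point of $\pi$ on the diagonal, which settles the ``necessarily unique'' clause.

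It then remains to check that $a=\gamma_{f}$. For one inequality, $v_{m}=(a,a)$ lies on $\pi$ at abscissa $a$, and by definition $f(a)=\height_{\pi}(a)$ is the largest ordinate among points of $\pi$ of abscissa $a$; hence $a\le f(a)$, i.e.\ $a\in N_{f}$, so $a\le \max N_{f}=\gamma_{f}$. For the reverse inequality, take any $x\in N_{f}$ and let $w=(x,f(x))$ be the top-most lattice point of $\pi$ at abscissa $x$. Since $x\le f(x)$, the value of $y-x$ at $w$ is non-negative, so by the monotonicity noted above $w$ is one of $v_{0},\dots ,v_{m}$; as the abscissa is non-decreasing along $\pi$, this forces $x\le a$. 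Hence $\gamma_{f}=\max N_{f}\le a$, and therefore $(\gamma_{f},\gamma_{f})=(a,a)=v_{m}$, the unique point of $\pi$ on the diagonal.

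I do not anticipate a genuine obstacle here; the argument is short once one notices that $y-x$ drops by $1$ at every single step, which pins down the diagonal crossing completely. The only subtlety worth flagging is that $\gamma_{f}$ need not be a fixed point of $f$: the point $(\gamma_{f},\gamma_{f})$ lies on $\pi$ but is in general neither the top-most nor the bottom-most point of $\pi$ at abscissa $\gamma_{f}$, so the verification has to run through the inequality $\gamma_{f}\le f(\gamma_{f})$ rather than through an equality. This is also why Lemma~\ref{lemma:fixepointgammaf}, which only asserts uniqueness of a possible fixed point, cannot simply be quoted in its place.
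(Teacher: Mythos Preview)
Your argument is correct. The route differs from the paper's only in how the identification is carried out. The paper argues \emph{directly} that $(\gamma_f,\gamma_f)$ lies on $\pi$: from $\gamma_f\le f(\gamma_f)$ and $f(\gamma_f+1)\le\gamma_f$ it observes that the run of South steps at abscissa $\gamma_f$ goes from $(\gamma_f,f(\gamma_f))$ down to $(\gamma_f,f(\gamma_f+1))$ and therefore passes through $(\gamma_f,\gamma_f)$; existence and uniqueness of the diagonal crossing are dismissed as a ``straightforward geometric argument''. You instead first locate the unique diagonal vertex $(a,a)$ via the clean observation that $y-x$ drops by exactly one at every step, and then sandwich $a$ between $\gamma_f$ and $\gamma_f$ using the max-characterisation of $\gamma_f$. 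Your version has the small advantage of making the existence/uniqueness claim fully explicit; the paper's is marginally more direct once that is granted. One cosmetic remark: the equality ``$f(a)$ is the largest ordinate among points of $\pi$ of abscissa $a$'' is not literally the paper's definition of $\height_\pi$ (which is phrased as ``the height after $a$ East steps''), though it is an immediate consequence and is exactly how the paper itself uses it in its own proof.
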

  \begin{proof}
    A straightforward geometric argument shows that the intersection
    point of $\pi$ with the diagonal exists and is unique.  Let
    $x = \gamma_{f}$, so $x \leq \height_{\pi}(x)$ and
    $\height_{\pi}(x +1) \leq x$. Thus, at time $x$, $\pi$ moves from
    $(x,\height_{\pi}(x))$ to $(x,\height_{\pi}(x+1))$ through a
    sequence of South steps. Since
    $\height_{\pi}(x+1) \leq x \leq \height_{\pi}(x)$, the path $\pi$
    meets the point $(x,x)$.
  \end{proof}  

Let us say now that a \hf $f$ is \emph{\sa} if $(f,f)$ is a \Gc. That
is, $f$ is \sa if $y \leq f(x)$ is equivalent to $x \leq f(y)$, for
each $x,y \in \setnp$. We say $f$ is \emph{\fpf} if $f(x) \neq x$, for
each $x \in \setnp$.

\begin{lemma}
  \label{lemma:EastStep}
  For a \hf $f: \setnp \rto \setnp$, let $\pi$ be the unique path
  such that $\height_{\pi} = f$. Then $f$ is \sa if and only if
  $\pi$ is symmetric along the diagonal, in which case $f$ is \fpf
  if and only if its first step after meeting the diagonal is an
  East step.
\end{lemma}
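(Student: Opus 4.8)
The plan is to translate the condition "$f$ is self-adjoint" into a geometric statement about the path $\pi$, and then separately handle the fixed-point-free part. First I would recall that $\height_\pi(x)$ is the height of $\pi$ after $x$ East steps, so the cells below $\pi$ are exactly the pairs $(x,y)$ with $y < \height_\pi(x)$, equivalently (reading the path from the other axis) with $x < \width_\pi(y)$ where $\width_\pi$ is the analogous "width" function. The key elementary observation is that for a single path, $\pi$ is symmetric along the diagonal if and only if the set of cells strictly below $\pi$ is symmetric under $(x,y) \mapsto (y,x)$, i.e. $(x,y)$ lies below $\pi$ iff $(y,x)$ lies below $\pi$. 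Spelling this out: $(x,y)$ lies below $\pi$ means $y \le f(x) - 1$, i.e. $y < f(x)$, i.e. $y \not\ge f(x)$ fails — more precisely I want the clean equivalence "$(x,y)$ strictly below $\pi$ $\iff$ $y+1 \le f(x)$". Then symmetry of the below-region is "$y < f(x) \iff x < f(y)$" for all $x,y \in \setn$, and one checks this is equivalent to the self-adjointness condition "$y \le f(x) \iff x \le f(y)$" for all $x,y \in \setnp$ by a routine boundary bookkeeping at $0$ and $n$ (using $f(0)=n$ and antitonicity). So the first half reduces to: self-adjointness of $f$ $\iff$ the below-region of $\pi$ is diagonally symmetric $\iff$ $\pi$ is diagonally symmetric.

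For the second half, assume $\pi$ (equivalently $f$) is symmetric, and let $\gamma = \gamma_f = \max N_f$ be the center; by the lemma just above, $(\gamma,\gamma)$ is the unique point where $\pi$ meets the diagonal. By Lemma~\ref{lemma:fixepointgammaf} the only possible fixed point of $f$ is $\gamma$, so $f$ is fixed-point-free iff $f(\gamma) \ne \gamma$, and since $\gamma \in N_f$ we have $\gamma \le f(\gamma)$, hence this is $f(\gamma) > \gamma$, i.e. $f(\gamma) \ge \gamma+1$. Now I translate "the first step of $\pi$ after meeting the diagonal is an East step" into function language. The path reaches $(\gamma,\gamma)$; its first step after that point is East precisely when the height does not drop further at abscissa $\gamma$, i.e. when $\height_\pi(\gamma) = \gamma$ would force an immediate East step — wait, I need to be careful: $\pi$ enters $(\gamma,\gamma)$ and the question is the next step. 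The next step is East iff the portion of $\pi$ at abscissa $\gamma$ has already finished its South steps by the time it reaches height $\gamma$, equivalently iff $\height_\pi(\gamma+1) < \gamma$ is \emph{not} the constraint — rather, the next step is East iff $\gamma$ is not a "positive" index forcing more descent, which by the proof of the preceding lemma means $\height_\pi(\gamma) = \gamma$ exactly when $f$ has a fixed point, and $\height_\pi(\gamma) > \gamma$ exactly when the path arrives at $(\gamma,\gamma)$ horizontally-then-vertically with $f(\gamma)\ge\gamma+1$, in which case, by symmetry, the next step after $(\gamma,\gamma)$ is the mirror of the last step before it, which was a South step, hence is an East step. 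Conversely if $f(\gamma)=\gamma$ then $\pi$ passes through $(\gamma,\gamma)$ turning South-to-East right at that corner in a symmetric way, so the step after meeting the diagonal is... here I must check the turn type: if $(\gamma,\gamma)$ is itself the East-South turn, the "step after meeting the diagonal" is a South step. So: $f$ fixed-point-free $\iff$ $f(\gamma)>\gamma$ $\iff$ the first step after the diagonal is East.

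The main obstacle I anticipate is purely the boundary/off-by-one bookkeeping: pinning down precisely what "below the path", "the step after meeting the diagonal", and the strict-versus-non-strict inequalities in the Galois-connection condition mean at the extreme indices $0$ and $n$ and at the corner $(\gamma,\gamma)$. Once the dictionary between $f$, the cells below $\pi$, and $\gamma_f$ is set up carefully (and Lemma~\ref{lemma:fixepointgammaf} and the center lemma are invoked), both equivalences are short. I would therefore spend most of the write-up fixing conventions cleanly and then verify the two chains of equivalences above, likely illustrating with the already-available figure conventions rather than drawing anything new.
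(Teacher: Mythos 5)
Your overall strategy is the same as the paper's: identify the adjoint of $\height_{\pi}$ with the height function of the reflected path, so that self-adjointness becomes diagonal symmetry, and then combine Lemma~\ref{lemma:fixepointgammaf} with the symmetry of $\pi$ to read off whether $\gamma_{f}$ is a fixed point from the type of turn that $\pi$ makes at the diagonal crossing. The second half of your argument (after the self-correction about the turn type) is essentially the paper's proof: symmetry forces the step after $(\gamma_{f},\gamma_{f})$ to be the mirror of the step before it, an East step before the crossing means $f(\gamma_{f})=\gamma_{f}$, and a South step before it means $\gamma_{f}<f(\gamma_{f})$.

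The one step that would fail as written is the ``routine boundary bookkeeping'' you defer in the first half. With your strict convention for ``below'', \emph{neither} of your two intermediate equivalences holds. Take $n=2$ and $f(0)=2$, $f(1)=2$, $f(2)=0$ (the path E,\,S,\,S,\,E). The strict region $\set{(x,y) \mid x,y\in\setn,\ y<f(x)}=\set{(1,1)}$ is symmetric under $(x,y)\mapsto(y,x)$, yet the path is not symmetric along the diagonal and $f$ is not \sa (we have $2\le f(1)$ but $1\not\le f(2)$). So ``strict region symmetric on $\setn\times\setn$'' is strictly weaker than both ``$\pi$ symmetric'' and ``$f$ \sa'', and your chain breaks in both links; no bookkeeping at $0$ and $n$ can repair an equivalence that is false. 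The fix is simply to use the non-strict inequality throughout: the point (or cell) $(x,y)$ lies below and to the left of $\pi$ exactly when $y\le\height_{\pi}(x)$, this region determines $\pi$, and reflection along the diagonal carries it to the corresponding region of the reflected path $\pi'$. Since $y\le f(x)$ is literally the Galois-connection inequality, this says that the adjoint of $\height_{\pi}$ is $\height_{\pi'}$, whence $f$ is \sa if and only if $\pi=\pi'$, with no boundary cases to treat separately. This is exactly the route the paper takes.
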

\begin{proof}
  For a path $\pi$, let $\pi'$ be the path obtained from $\pi$ by
  reflecting it along the diagonal.
    It is straightforward that $y \leq \height_{\pi}(x)$ if and only
    if the point $(x,y)$ lies below and on the left of $\pi$, if and
    only if the point $(y,x)$ lies below and on the left of $\pi'$, if
    and only if $x \leq \height_{\pi'}(y)$. Therefore, by identifying
    paths with \hf{s}, the adjoint of $\pi$ is the path obtained by
    reflecting along the diagonal.
    In particular, $\pi$ is \sa if and only if $\pi$ is symmetric
    along the diagonal.

    Let us argue that, for $f = \height_{\pi}$ \sa, $f$ is \fpf if and
    only if $\pi$'s first step after meeting the diagonal is towards
    East.  
    Let $(\gamma_{f},\gamma_{f})$ be the intersection point of $\pi$
    with the diagonal of $\setnp$.  We shall verify whether
    $\gamma_{f}$ is a \fp, since, by
    Lemma~\ref{lemma:fixepointgammaf}, it is the only candidate with
    this property.  If the following step is a South step, then the
    last step before meeting the diagonal is an East step, which
    implies that $\gamma_{f} = \height_{\pi}(\gamma_{f})$, thus $f$
    has a \fp.
    If the following step is an East step, then the last step before
    meeting the diagonal is a South step, which implies that
    $\gamma_{f} < \height_{\pi}(\gamma_{f}) = f(\gamma_{f})$, so $f$
    is \fpf.
  \end{proof}

\begin{proposition}
  \label{prop:dgtgGalois}
  For $f : \setnp \rto \setnp$ a \sa \hf, define
  \begin{align*}
    E_{f} & \eqdef \set{\upair{x,y} \mid x,y \in \setn,\, x\neq y,\, y \leq f(x)}\,.
  \end{align*}
  Then $(\setn,E_{f})$ is a \tg and $<$ is a \dgo of $(\setn,E_{f})$.
  The mapping $f \mapsto E_{f}$ restricts to a bijection from the set
  of \sa \hf{s} on $\setnp$ to the set of \tg{s} of the form
  $(\setn,E)$ such that the standard linear ordering of $\setn$ is a
  \dgo.
\end{proposition}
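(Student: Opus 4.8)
The plan is to read both conclusions off the neighbourhood structure of $(\setn,E_{f})$ together with Lemma~\ref{lemma:charTgs}, and then to build an explicit inverse. First I would note that, since $f$ is \sa, the defining condition $y\leq f(x)$ of $\upair{x,y}\in E_{f}$ is symmetric in $x,y$; hence $E_{f}$ is a well-defined simple graph and, for every $x\in\setn$, $\NE(x)=\set{\,y\in\setn\mid y\neq x,\ y\leq f(x)\,}=\set{1,\ldots,f(x)}\setminus\set{x}$. The forward direction is then a one-line consequence of Lemma~\ref{lemma:charTgs}: if $u<v$ then $f(u)\geq f(v)$ by antitonicity, so $\NE(v)=\set{1,\ldots,f(v)}\setminus\set{v}\subseteq\set{1,\ldots,f(u)}\subseteq\SE{u}$, i.e.\ $v\vp[E_{f}]u$. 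By the equivalence in Lemma~\ref{lemma:charTgs} this single relation certifies at once that $(\setn,E_{f})$ is a \tg and that the standard ordering of $\setn$ is a \dgo; in particular the \dgo conclusion spares the case analysis at the centre that a direct computation of $\degE$ would require. Thus $f\mapsto E_{f}$ genuinely lands in the stated target.

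For bijectivity I would produce the inverse directly. Given a \tg $(\setn,E)$ for which $<$ is a \dgo, the last clause of Lemma~\ref{lemma:charTgs} says that each $\NE(x)$ is down-closed apart from $x$, so $\NE(x)=\set{1,\ldots,c_{x}}\setminus\set{x}$ with $c_{x}\eqdef\max\NE(x)$ (taking $c_{x}=0$ when $\NE(x)=\emptyset$). I would then set $f(0)\eqdef n$ and choose, for each $x\in\setn$, a value $f(x)$ with $\set{1,\ldots,f(x)}\setminus\set{x}=\NE(x)$, and verify three things: that $f$ is antitone; that $f$ is \sa, which is exactly the place where totality of the vicinal preorder (the threshold/\dgo hypothesis) is converted into the symmetry $y\leq f(x)\Leftrightarrow x\leq f(y)$; and that $E_{f}=E$. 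Producing such an $f$ already yields surjectivity of $f\mapsto E_{f}$, and the identity $E_{f}=E$ supplies one half of the inverse relations.

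The step I expect to be the crux, and the one genuinely geometric point, is injectivity, and it is concentrated at the centre of $f$. Because the defining condition insists on $x\neq y$, the edge set $E_{f}$ never records the diagonal cell $(x,x)$, so the value of $f$ at the unique point where the associated path meets the diagonal is not directly visible in $E_{f}$: indeed the equation $\set{1,\ldots,f(x)}\setminus\set{x}=\NE(x)$ leaves two candidates, $f(x)=x-1$ and $f(x)=x$, precisely when $\NE(x)=\set{1,\ldots,x-1}$. The heart of the argument is therefore to decide this single value, and the machinery is already available: by Lemma~\ref{lemma:fixepointgammaf} the only possible \fp of a \hf is its centre $\gamma_{f}$, while Lemma~\ref{lemma:EastStep} identifies the two candidates with the two ways the symmetric path of a \sa \hf can meet the diagonal—its first step past $(\gamma_{f},\gamma_{f})$ being an \Es or a \Ss. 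My plan is to use this crossing data to determine the centre value from the threshold/\dgo structure; this is the single place where the purely order-theoretic bookkeeping of the first two paragraphs no longer suffices and the geometry of the \prep must intervene, and I expect it to be the main obstacle to establishing injectivity and hence the bijection.
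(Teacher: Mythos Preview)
Your forward direction and the skeleton of the inverse construction are essentially the paper's: the paper also computes $\NE(x)$ as a downset, invokes Lemma~\ref{lemma:charTgs}, and builds the inverse via $f_{E}(x)\eqdef\max\NE(x)$ (with $\max\emptyset=0$ and $\NE(0)=\setnp$).

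Where your plan goes wrong is the last paragraph. You correctly spot that $E_{f}$ forgets the diagonal cell and that at the centre two values of $f$ are compatible with the same neighbourhood. But your proposed remedy---to ``determine the centre value from the threshold/\dgo structure''---cannot succeed, because the information is simply not there: the map $f\mapsto E_{f}$ is genuinely two-to-one on \emph{all} \sa \hf{s}. For instance, with $n=1$ the two \sa \hf{s} $f(0)=1,\,f(1)=1$ and $f'(0)=1,\,f'(1)=0$ both give $E_{f}=E_{f'}=\emptyset$; more generally, flipping $f(\gamma_{f})$ between $\gamma_{f}$ and $\gamma_{f}-1$ preserves self-adjointness and leaves $E_{f}$ unchanged. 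So no amount of geometry of the \prep will recover that bit from $E$.

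The paper does not try to recover it. Its inverse $f_{E}(x)=\max\NE(x)$ is automatically \fpf (since $x\notin\NE(x)$, one never gets $f_{E}(x)=x$), and the proof explicitly records that $f_{E_{f}}=f$ holds \emph{whenever $f$ is \fpf}, concluding that ``under the latter hypothesis the two transformations are inverse to each other.'' In other words, the bijection actually established---and the one used downstream in Theorem~\ref{thm:tgdo} and in the count of unlabelled \tg{s}---is between \emph{\fpf} \sa \hf{s} and \tg{s} with $<$ a \dgo. The statement of the proposition omits the word ``\fpf'', but the proof supplies it; your attempt to prove injectivity on the larger domain is chasing a claim that is false. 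The fix to your outline is simply to restrict the domain to \fpf \sa \hf{s} and take $f_{E}(x)=\max\NE(x)$, after which both round-trip identities are immediate.
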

\begin{proof}
  If $y < x$ and $z \leq f(x)$, then $z \leq f(x) \leq f(y)$, since
  $f$ is antitone. As a consequence, if $y < x$, then
  $\NE[E_{f}](x) \subseteq \NE[E_{f}](y) \cup \set{y} =
  \SE[E_{f}]{y}$, so $(\setn,E_{f})$ is a \tg and $<$ is a \dgo, by
  Lemma~\ref{lemma:charTgs}.

  Conversely, let $([n],E)$ be a \tg for which the standard 
  ordering is a \dgo. As we have seen, $\NE(x)$ is a
  downset: if $y \in \NE(x)$ and $z \neq x$ is such that $z < y$, then
  $z \in \NE(x)$.  
  Define then $f_{E}(x) \eqdef \max \NE(x)$,
  with the conventions that $\max \emptyset = 0$ and
  $\NE(0) = \setnp$, so $f_{E} : \setnp \rto \setnp$.  Observe that
  the following equivalences holds, by the definition of $f_{E}$ and
  the fact that $\NE(x)$ is a downset: $\upair{x,y} \in E$ if and only
  if $y \in \NE(x)$ if and only if $x \neq y$ and $y \leq f_{E}(x)$.
  It immediately follows that $y \leq f_{E}(x)$ if and only if
  $x \leq f_{E}(y)$, so $f_{E}$ is \sa; $f_{E}$ is \fpf since
  $x \not\in \NE(x)$.

  It is easily seen that $E_{f_{E}} = E$ and, whenever $f$ is \fpf,
  $f_{E_{f}} = f$ so under the latter hypothesis the two
  transformations are inverse to each other.
\end{proof}

Let in the following $\TGn$ be the set of pairs $(w,E)$ such that
$(\setn,E)$ is a \tg and $w \in \Sn$ is a \dgo of $(\setn,E)$.  We can
state now the main result of this section:
\begin{theorem}
  \label{thm:tgdo}
  The mapping sending $u$ to $(\lux,\Eu)$ restricts to a bijection
  from  $\Dn$ to $\TGn$.
\end{theorem}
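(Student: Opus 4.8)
The plan is to factor $u \mapsto (\lux,\Eu)$ through a chain of bijections assembled from the tools already developed. First I would replace $\Dn$ by a handier model. Since each \fsp contains exactly one \esp, the assignment $u \mapsto \upair{u,\mate{u}}$ is a bijection from $\Dn$ onto the set of \fsp{s}; and since $\mate{u}$ only flips the sign of the first window letter, each \fsp also contains exactly one \sp whose first window letter is positive. Let $u^{+}$ denote that positive-first-letter representative of $\upair{u,\mate{u}}$. Because the \prep of a \fsp is insensitive of how the diagonal is crossed, $\lux[u^{+}] = \lux$ and $\Eu[u^{+}] = \Eu$, so the map to be studied is recovered from $u^{+} \mapsto (\lux[u^{+}],\Eu[u^{+}])$ on $\set{u \in \Bn \mid u_{1} > 0}$.

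Next I would pass to height functions. The \prep bijection identifies $\Bn$ with the pairs $(\pi,w)$ where $\pi$ is a path from $(0,n)$ to $(n,0)$ with \ESs{s} symmetric along the diagonal and $w\in\Sn$; under it, the condition $u_{1}>0$ amounts, by the remark following Lemma~\ref{lemma:ZeroDescent}, to requiring that the first step of $\pi^{u}$ after meeting the diagonal be an \Es. By Lemma~\ref{lemma:EastStep} this means precisely that $f\eqdef\height_{\pi^{u}}$ is \fpf, while $f$ is automatically \sa because $\pi^{u}$ is symmetric along the diagonal. Hence $\set{u\in\Bn\mid u_{1}>0}$ is in bijection with the pairs $(f,w)$, with $f$ a \fpf \sa \hf on $\setnp$ and $w\in\Sn$, via $u^{+}\mapsto(\height_{\pi^{u^{+}}},\lux[u^{+}])$.

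Finally I would bring in threshold graphs. To a pair $(f,w)$ as above I associate $(w,\,w(E_{f}))$, where $E_{f}$ is the threshold graph attached to $f$ in Proposition~\ref{prop:dgtgGalois} and $w(E_{f})\eqdef\set{\upair{w(a),w(b)}\mid\upair{a,b}\in E_{f}}$ is its relabelling along $w$; since relabelling a graph by a bijection preserves the threshold property and carries a \dgo to a \dgo, $(w,w(E_{f}))\in\TGn$. This assignment is a bijection: for $(w,E)\in\TGn$ the relabelling $w^{-1}(E)$ of $E$ along $w^{-1}$ is a threshold graph whose standard order is a \dgo, so Proposition~\ref{prop:dgtgGalois}---whose proof shows that $f\mapsto E_{f}$ restricts to a bijection between \fpf \sa \hf{s} and threshold graphs of that form---produces a unique \fpf \sa \hf $f$ with $E_{f}=w^{-1}(E)$, and $(f,w)$ is the unique preimage of $(w,E)$. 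Composing the three bijections and unwinding the definitions, $u\in\Dn$ is sent to $\bigl(\lux[u^{+}],\,\lux[u^{+}](E_{\height_{\pi^{u^{+}}}})\bigr)$; since $\upair{i,j}$ lies in $\lux[u^{+}](E_{f})$ exactly when $i\neq j$ and $\bigl((\lux[u^{+}])^{-1}(i),(\lux[u^{+}])^{-1}(j)\bigr)$ lies below $\pi^{u^{+}}$, comparison with the definition~\eqref{def:Eu} of $\Eu$ identifies this image with $(\lux,\Eu)$.

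The point needing the most care is the bookkeeping in the first step: checking that $\lux$ and $\Eu$ are genuinely mate-invariant, so that $u\mapsto(\lux,\Eu)$ is well defined on $\Dn$ rather than only on $\Bn$, and that the two ways of picking a representative of a \fsp---the \es one, used to identify \fsp{s} with $\Dn$, and the positive-first-letter one, used to enter the \fpf regime needed for threshold graphs---are compatible, each indexing the \fsp{s} bijectively although they differ in general. Once that is settled, the threshold-graph content is a direct application of Proposition~\ref{prop:dgtgGalois} together with the stability of threshold graphs and degree orderings under relabelling.
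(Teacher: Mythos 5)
Your proof is correct and follows essentially the same route as the paper: the paper's proof likewise factors the map through the positive-first-window-letter mate (denoted $\lowermate{u}$ there, your $u^{+}$), identifies the resulting paths with \fpf \sa \hf{s} via Lemma~\ref{lemma:EastStep}, applies Proposition~\ref{prop:dgtgGalois} to get the threshold graph $E_{f}$, and relabels along $\lux$ (the paper's $\sigma\circ E$ is your $w(E_{f})$). The mate-invariance bookkeeping you flag as the delicate point is indeed the same issue the paper handles via the identities $E^{u}=E^{\mate{u}}$ and $\lux=\lux[\mate{u}]$.
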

\begin{proof}
  Firstly, we claim that the pair $(\lux,E^{u})$ is constructed
  through intermediate steps, as suggested in the following diagram
  (the notation being used is explained immediately after):
  $$
  \begin{tikzcd}
    u \in \Dn \arrow[mapsto]{r}{} \arrow[mapsto]{dd}{} &
    (\lux,\pi^{u})  \arrow[mapsto]{r}{} &
    (\lux,\pi^{\lowermate{u}})\arrow[mapsto]{d}{} \in \Sn \times \Pin\\
    &&  (\lux,\height_{\pi^{\lowermate{u}}}) \arrow[mapsto]{d}{} \in \Sn \times \HFn\\
    (\lux,E^{u}) \arrow[equal]{rr}{} && (\lux,\lux \circ
    E_{\height_{\pi^{\lowermate{u}}}}) \in \TGn
  \end{tikzcd}
  $$
  We also claim that each step in the upper leg of the diagram yields
  a bijection.  In a second time, we shall verify that the pairs that
  may appear in the bottom right corner are exactly the elements of
  $\TGn$.
  
  We explain the notation used in the diagram.
  For $u \in \Dn$, we let $\lowermate{u} \in \set{u,\mate{u}}$ be such
  that $\lowermate{u}_{1} > 0$. The first step of
  $\pi^{\lowermate{u}}$ after crossing the diagonal is an East step
  and therefore the \hf corresponding to $\pi^{\lowermate{u}}$ is
  \fpf.  We let $\Pin$ denote the set of East and \Ss paths from
  $(0,n)$ to $(n,0)$ that make an East step after meeting the
  diagonal, and that are symmetric along the diagonal. We let $\HFn$
  denote the set of \fpfsahf{s} of $\setnp$. Then the $\height$
  function is a bijection from $\Pin$ to $\HFn$. For $f \in \HFn$,
  $E_{f}$ is the set of edges defined in the statement of
  Lemma~\ref{prop:dgtgGalois}. Finally, if $E$ is a set of edges on
  the \vertices $\setn$ and $\sigma \in \Sn$, then we let
  \begin{align*}
    \sigma \circ E & \eqdef
    \set{\upair{\sigma(x),\sigma(y)} \mid \upair{x,y}\in E} =
    \set{\upair{i,j}\mid \upair{\sigma^{-1}(i),\sigma^{-1}(j)} \in E}\,.
  \end{align*}
  We
  justify now the equality on the bottom line of the diagram.  Notice that
  \begin{align*}
    E^{u} & = \lux \circ E_{\pi^{u}} \qquad\text{with}\qquad
    E_{\pi^{u}} \eqdef \set{\upair{x,y}
      \mid x,y \in \setn, x \neq y, (x,y) \text{ lies below } \pi^{u}}
  \end{align*}
  and that $E_{\pi^{u}} = E_{\pi^{\lowermate{u}}} = E_{f}$ where
  $f = \height_{\pi^{\lowermate{u}}}$.  Indeed, the condition that
  $(x,y)$ lies below $\pi^{\lowermate{u}}$ amounts to saying that $y$
  is less than the height of $\pi^{\lowermate{u}}$ after $x$ \Es{s}.
  This shows that $E^{u} = \lux \circ E_{f}$ with
  $f = \height_{\pi^{\lowermate{u}}}$.
  
  Finally, the following equivalences are clear: $f$ is a \fpfsahf on
  $\setnp$ if and only if $<$ (the ordering given by the identity
  permutation) is a \dgo of the \tg $(\setn,E_{f})$ (by
  Proposition~\ref{prop:dgtgGalois}), if and only if the ordering
  given by the permutation $\sigma$ is a \dgo for the \tg
  $\sigma \circ E$.
  Thus, in the right bottom corner of the above diagram we have all
  the pairs $(w,E)$ such that $(\setn,E)$ is a \tg and the linear
  ordering given by the permutation $w \in \Sn$ is among its \dgo{s}.
\end{proof}

\begin{remark}
  \label{rem:smoothpick}
  Let $f $ be a choice of mates, that is, a function
  $f : \Bn \rto \Bn$ such that
  $f(u) = f(\mate{u}) \in \set{u,\mate{u}}$. Let
  $\Dn^{f} = \set{f(u) \mid u \in \Bn}$. In view of
  $E^{u} = E^{\mate{u}}$ and $\lux = \lux[\mate{u}]$, the same
  argument appearing in the proof of Theorem~\ref{thm:tgdo} shows that
  the mapping $u \mapsto (\lux,E^{u})$ restricts to a bijection from
  $\Dn^{f}$ to $\TGn$.
  In particular, we shall consider the function $sm : \Bn \rto \Bn$
  picking the unique smooth mate $sm(u) \in \set{u,\mate{u}}$. Then
  $\Dn^{sm} $ is in bijection with $\TGn$.
\end{remark}

Theorem~\ref{thm:tgdo} yields a natural representation of the weak
ordering on $\Dn$ as follows.   Order $\TGn$ by saying that
$(w_{1},E_{1}) \leq (w_{2},E_{2})$ if and only if $w_{1} \leq w_{2}$
in the weak ordering of $\Sn$ and, moreover, $E_{1} \subseteq E_{2}$,
so $\TGn$ is clearly a poset.
\begin{theorem}
  The poset $\TGn$ is a lattice isomorphic to the weak ordering of the
  Coxeter group $\Dn$.
\end{theorem}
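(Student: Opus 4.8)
The plan is to deduce the statement directly from Theorem~\ref{thm:tgdo}: I would show that the bijection $u\mapsto(\lux,\Eu)$ of that theorem is in fact an order isomorphism from the weak ordering of $\Dn$ onto the poset $\TGn$. Once this is established, the lattice structure comes for free, since the weak ordering of any finite Coxeter group is a lattice \citep{Bjo84}, and a poset isomorphic to a lattice is a lattice; so $\TGn$ is a lattice and it is, by construction, isomorphic to $(\Dn,\leqw)$.

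The main tool will be the classical description of the weak order by containment of inversion sets (see e.g.\ \citep[Ch.~3]{BB2005}): for $u,v\in\Dn$ one has $u\leqw v$ iff $\invD{u}\subseteq\invD{v}$, and likewise $w\leqw w'$ in $\Sn$ iff $\inv{w}\subseteq\inv{w'}$. I would first make precise that the decomposition recorded in \eqref{def:Eu}, namely $\invD{u}=\inv{\lux}\cup\Eu$, is both disjoint and ``coordinatewise'' with respect to containment. Indeed, by Proposition~\ref{prop:charInversions}, inside $\invB{u}$ the pairs coming from $\inv{\lux}$ are precisely those $(i,j)$ with $0<i<j$, while those coming from $\Eu$ are precisely the off-diagonal pairs $(i,j)$ with $i<0$ (the diagonal pairs $(-i,i)$ being exactly what $\invD{}$ discards); these two families occupy disjoint ranges of the first coordinate, so $\invD{u}$ determines both $\inv{\lux}$ and $\Eu$, and
\begin{align*}
  \invD{u}\subseteq\invD{v}
  \quad\Longleftrightarrow\quad
  \inv{\lux}\subseteq\inv{\lux[v]}
  \text{ and }
  \Eu\subseteq\Eu[v]\,.
\end{align*}
The identification of negative inversions with unordered pairs from Remark~\ref{rem:charInversions} is used here, but it is a purely bookkeeping bijection compatible with inclusion: each unordered pair $\upair{a,b}$ of distinct elements corresponds to the unique element of $\invB{u}$ whose negative entry has the smaller absolute value, and the constraint $\abs{i}<j$ ensures that only genuine doubletons occur.

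Chaining these equivalences then gives, for $u,v\in\Dn$,
\begin{align*}
  u\leqw v
  & \iff \invD{u}\subseteq\invD{v}
  \\
  & \iff \inv{\lux}\subseteq\inv{\lux[v]}\text{ and }\Eu\subseteq\Eu[v]
  \\
  & \iff \lux\leqw\lux[v]\text{ in }\Sn\text{ and }\Eu\subseteq\Eu[v]
  \\
  & \iff (\lux,\Eu)\leq(\lux[v],\Eu[v])\text{ in }\TGn\,,
\end{align*}
where the last line is exactly the definition of the order on $\TGn$. Combined with Theorem~\ref{thm:tgdo}, which already provides the bijection onto $\TGn$, this shows $u\mapsto(\lux,\Eu)$ is an order isomorphism, and the theorem follows.

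The step I expect to be the real obstacle is the verification that containment of the type~$\D$ inversion sets splits as containment of the type~$\A$ parts together with containment of the graph parts $\Eu$; this is where the precise content of Proposition~\ref{prop:charInversions} and Remark~\ref{rem:charInversions} — the dictionary between cells below $\pi^{u}$, negative inversions, and unordered pairs of vertices — has to be handled with care, in particular keeping the indexing conventions (abscissa/ordinate versus $\luy,\lux$) straight. Everything else reduces to standard facts about weak orders of Coxeter groups.
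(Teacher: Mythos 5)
Your proposal is correct, and it supplies exactly the argument the paper leaves implicit: the theorem is stated there without proof, the intended justification being precisely that the bijection $u\mapsto(\lux,\Eu)$ of Theorem~\ref{thm:tgdo} transports inversion-set containment (which characterizes the weak order on $\Dn$ and on $\Sn$) into the coordinatewise order on $\TGn$, via the disjoint decomposition $\invD{u}=\inv{\lux}\cup\Eu$ of Proposition~\ref{prop:charInversions} and Remark~\ref{rem:charInversions}. Your observation that the positive and negative inversions occupy disjoint ranges of the first coordinate, so that containment of the union is equivalent to containment of each part, is the one point that genuinely needs checking, and you handle it correctly.
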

Notice that $\TGn$ is only loosely related to the lattice of \tg{s} of
\citep{MerrisRoby2005} where unlabeled (that is, up to isomorphism)
\tg{s} are considered. While many are the remarks that we already
could develop using this characterisation of the weak order on $\Dn$,
it is in the scope of future research to complete them and to give a
satisfying description of this ordering.

\section{Counting \tg{s}}
  Proposition~\ref{prop:dgtgGalois} and Theorem~\ref{thm:tgdo},
  originally conceived for achieving a better understating of the
  structure of the lattices of Coxeter groups of type \D, can also be
  used to enumerate \tg{s}.  As the number of paths from $(0,n)$ to
  $(n,0)$ that are symmetric along the diagonal and begin with an East
  step is easily seen to be $2^{n-1}$,
  Proposition~\ref{prop:dgtgGalois} yields a simple proof that the
  number of unlabeled \tg{s} is $2^{n-1}$.
  Enumeration results for labeled \tg{s} appear in
  \citep{BP1987} and, more recently, in \citep{Spiro2020,GWZ2022} (see
  also \citep[Exercise 5.25]{StanleyHypArr} and \cite[Exercise
  3.115]{Stanley1}).
  Theorem~\ref{thm:tgdo} can be used to give bijective proofs of
  these results. The ideas that we expose next have been suggested by
  the bijection described in \cite[\S 1]{Spiro2020}, that we adapt
  here to fit the correspondences between \tg{s} coming with a \dgo,
  signed and \esp{s}, and \sbp{s}.  Our starting point is the
  following observation:
  \begin{lemma}
    \label{lemma:canonicaldgo}
    Let $(\setn,E)$ be a threshold graph. Then, there exists a unique
    degree ordering $w$ of $(\setn,E)$ such that, if $\degE(i) = \degE(j)$
    and $i < j$, then $w^{-1}(i) < w^{-1}(j)$.
  \end{lemma}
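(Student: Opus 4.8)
The plan is to exhibit the canonical degree ordering explicitly and then check existence and uniqueness directly against the definitions. The first observation I would make is that the threshold hypothesis is not really needed here: a degree ordering of $(\setn,E)$ is simply a listing of $\setn$ refining the rule ``$i$ must precede $j$ whenever $\degE(i) > \degE(j)$'', so within each fibre of $\degE$ (a maximal block of vertices of equal degree) the relative order is unconstrained. The natural way to pin down a canonical representative is therefore to break these ties using the ambient order of $\setn$. Concretely, I would introduce the strict total order $\prec$ on $\setn$ defined by $i \prec j$ iff $\degE(i) > \degE(j)$, or $\degE(i) = \degE(j)$ and $i < j$; this is just the lexicographic order on the pairs $(-\degE(i),\,i)$, hence genuinely total, and I let $w \in \Sn$ be the permutation whose induced order is $\prec$, i.e. $w^{-1}(i) < w^{-1}(j)$ iff $i \prec j$.

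Next I would verify that $w$ has the required two properties. That $w$ is a degree ordering is immediate: if $w^{-1}(i) < w^{-1}(j)$ then $i \prec j$, hence $\degE(i) \geq \degE(j)$, so the degree sequence read along $w$ is nonincreasing. That $w$ satisfies the stated tie-breaking condition is just the second clause in the definition of $\prec$: if $\degE(i) = \degE(j)$ and $i < j$ then $i \prec j$, i.e. $w^{-1}(i) < w^{-1}(j)$.

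For uniqueness, I would take an arbitrary degree ordering $w'$ with the stated property and show it orders every pair $i \neq j$ the same way as $w$, hence equals $w$. If $\degE(i) \neq \degE(j)$, say $\degE(i) > \degE(j)$, then $w'$ being a degree ordering rules out $i$ coming after $j$ (that would force $\degE(i) \leq \degE(j)$), so $i$ precedes $j$ in $w'$, exactly as in $w$. If $\degE(i) = \degE(j)$, say $i < j$, then the tie-breaking hypothesis applied to $w'$ places $i$ before $j$, again as in $w$. Since $w$ and $w'$ agree on the relative position of every pair, $w = w'$.

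I do not expect a genuine obstacle here; the one point that must be identified correctly is that ``degree ordering'' constrains only pairs of distinct degree, so that sorting each fibre of $\degE$ increasingly by label is simultaneously a legitimate choice and the only choice compatible with the extra condition. Phrased abstractly: the degree orderings of $(\setn,E)$ are precisely the linear extensions of the weak order induced by $\degE$, and among those exactly one is increasing within each fibre of $\degE$.
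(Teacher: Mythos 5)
Your proof is correct and matches the paper's (one-sentence) argument: sort the vertices by decreasing degree and break ties by increasing label, then observe that both the degree-ordering condition and the tie-breaking condition force the relative order of every pair, giving uniqueness. Your remark that the threshold hypothesis plays no role here is also accurate.
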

  Recall that $w^{-1}(i) < w^{-1}(j)$ means that $i$ occurs before $j$
  in the permutation $w$, written as a word.  The proof of the Lemma,
  straightforward, amounts to the remark that, given a \dgo, we can
  permute \vertices of equal degree and, in doing so, obtain a \dgo.
  We call the ordering of Lemma~\ref{lemma:canonicaldgo} the
  \emph{\cdgo} of $(\setn,E)$.

  In order to count \tg{s} we can count \sbp{s} $(w,B)$ that, along
  the ideas developed in the previous section, bijectively correspond
  to some $(w,E)$ such that $w$ is the \cdgo of $(\setn,E)$.  To this
  goal, recall that, for $(w,B)$ a \sbp, the bars $B$ split $\setn$
  into blocks, the last of which might be empty.  Observe that
  $i,j \in \setn$ appear in the same block of $(w,B)$ if and only if
  they have equal height, by which we mean that, with $u = \psi(w,B)$
  (cf. Definition~\ref{def:psi}),
  $\height_{\pi^{u}}(w^{-1}(i)) = \height_{\pi^{u}}(w^{-1}(j))$.
  \begin{definition}
    We say that a \sbp $(w,B)$ is \emph{\normal} if, whenever $i,j$ belong
    to the same block and $i < j$, then $w^{-1}(i) < w^{-1}(j)$.
  \end{definition}
  Observe from now that an ordered partition of $\setn$ yields two
  \normal \sbp{s}.  The partition is written as a word, where the
  blocks, already ordered, are written in increasing order and
  separated by a bar. This is the standard construction allowing to
  compute the number of partitions of $\setn$ into $k$ blocks from the
  Eulerian numbers, see e.g. \citep{GS78} or \citep[Theorem
  1.17]{Bona2012}. If we add to the same word a bar in the last
  position $n$, then we obtain a second \sbp. The following
  definition, more involved, shall receive a more intuitive meaning
  with the lemma that immediately follows.
  \begin{definition}
    The \emph{central block} of a \sbp $(w,B)$ is the $k$-th block,
    with $k = \ceiling{\frac{\Card{B} +1}{2}}$.
  \end{definition}
  \begin{lemma}
    Let $(w,B)$ be a \sbp, let $u = \psi(w,B)$, and consider the path
    $\pi^{u}$. If $\pi^{u}$ makes an East-South turn when meeting the
    diagonal, then the central block is the block of equal height
    immediately before the diagonal.  If $\pi^{u}$ makes a South-East
    turn when meeting the diagonal, then the central block is the
    block of equal height immediately after the diagonal.
  \end{lemma}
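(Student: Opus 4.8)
The plan is to read the statement off the explicit staircase shape of $\pi^{u}$. Write $B = \set{b_{1} < \cdots < b_{m}}$ with $m = \Card{B}$, and set $b_{0} \eqdef 0$, $b_{m+1} \eqdef n$. Unwinding Definition~\ref{def:psi} (``raising the bars''), $\pi^{u} = \pi^{\psi(w,B)}$ is the monotone staircase whose $i$-th maximal horizontal run spans the abscissa interval $[b_{i-1},b_{i}]$ at height $b_{m+1-i}$, for $i = 1,\ldots,m+1$ (the run $i = m+1$ being empty exactly when $n\in B$, and everything degenerating when $B=\emptyset$). In particular the East steps of the $i$-th run occur at abscissae $b_{i-1}+1,\ldots,b_{i}$, which are precisely the positions forming the $i$-th block of $(w,B)$, and $\height_{\pi^{u}}$ is constant, equal to $b_{m+1-i}$, on that block. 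Since $b_{m},b_{m-1},\ldots,b_{0}$ are pairwise distinct, the ``blocks of equal height'' are exactly the $m+1$ blocks of $(w,B)$, the $i$-th one sitting at height $b_{m+1-i}$; this reproves the observation quoted just before the lemma.

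Next I would locate the unique point at which $\pi^{u}$ meets the diagonal. On the $i$-th block one has $\height_{\pi^{u}}(x)\ge x$ for all $x$ in the block iff $b_{m+1-i}\ge b_{i}$, i.e. (by monotonicity of the $b_{i}$) iff $i\le (m+1)/2$; writing $p\eqdef\ceiling{m/2}$, the runs $1,\ldots,p$ therefore lie weakly above the diagonal and the runs $p+1,\ldots,m+1$ weakly below it, and the same monotonicity applied to the intervening South runs shows $\pi^{u}$ touches the diagonal only at the corner $(b_{p},b_{p})$. Two subcases remain, according to the parity of $m$. If $m = 2p$ is even, the $p$-th East run sits at height $b_{p+1} > b_{p}$, the following South run brings the path down to $(b_{p},b_{p})$, and the $(p+1)$-st East run leaves that point horizontally; so $\pi^{u}$ makes a South-East turn at the diagonal, and the block of equal height \emph{immediately after} the diagonal is the $(p+1)$-st block of $(w,B)$. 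If $m = 2p-1$ is odd, the $p$-th East run sits at height $b_{m+1-p}=b_{p}$, hence terminates exactly at $(b_{p},b_{p})$ and is immediately followed by a South run; so $\pi^{u}$ makes an \xyturn at the diagonal, and the block of equal height \emph{immediately before} the diagonal is the $p$-th block of $(w,B)$. (Via Lemma~\ref{lemma:barsxyturns} this recovers Lemma~\ref{lemma:ZeroDescent}: the turn at the diagonal is an \xyturn exactly when $\Card{B}$ is odd.)

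It remains to match these with the central block, the $k$-th block with $k = \ceiling{(\Card{B}+1)/2}$. If $m=2p$ then $k = \ceiling{(2p+1)/2} = p+1$, so the central block is the $(p+1)$-st block, which by the previous paragraph is the block of equal height immediately after the diagonal, and in this case the turn is indeed South-East. If $m=2p-1$ then $k = \ceiling{2p/2} = p$, so the central block is the $p$-th block, which is the block of equal height immediately before the diagonal, and here the turn is an \xyturn. This exhausts both alternatives of the lemma.

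The only genuinely delicate step is the first one: justifying the staircase description of $\pi^{\psi(w,B)}$ directly from Definition~\ref{def:psi}, together with the correct bookkeeping of the degenerate runs ($n\in B$, or $B=\emptyset$). Everything afterwards is a direct consequence of the strict monotonicity of the sequence $(b_{i})_{i=0}^{m+1}$.
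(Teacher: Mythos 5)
Your proof is correct. It takes a more computational route than the paper's. You make the path $\pi^{\psi(w,B)}$ fully explicit as a staircase (the $i$-th horizontal run spanning $[b_{i-1},b_{i}]$ at height $b_{m+1-i}$), use the strict monotonicity of $b_{0}<\cdots<b_{m}$ to locate the unique diagonal point $(b_{p},b_{p})$ with $p=\ceiling{\frac{m}{2}}$ and to determine the type of turn made there, and then match $p$ (resp.\ $p+1$) against the central index $\ceiling{\frac{\Card{B}+1}{2}}$. The paper instead argues by symmetry: it invokes the bijection between bars and \xyturn{s} (Lemma~\ref{lemma:barsxyturns}) together with the symmetry of $\pi^{u}$ along the diagonal to read the parity of $\Card{B}$ off the type of turn at the diagonal and to count the $k-1$ turns, hence bars, lying strictly to the left of $\gamma_{f}$, which places the $k$-th block adjacent to the diagonal. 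Your version is self-contained --- it does not rely on the symmetry of the path or on Lemmas~\ref{lemma:barsxyturns} and~\ref{lemma:ZeroDescent} (indeed, as you observe, it re-derives the latter) --- at the price of heavier index bookkeeping and of the one step you rightly flag as needing care, namely justifying the staircase description directly from Definition~\ref{def:psi}; that step is routine, being exactly the ``upper antidiagonal of the subgrid'' construction padded with South steps before and East steps after, and your treatment of the degenerate cases ($n\in B$, $B=\emptyset$) is sound. Both arguments are elementary and reach the same conclusion; the paper's is shorter and more conceptual, yours is more verifiable line by line.
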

  \begin{proof}
    Let, as before, $f = \height_{\pi^{u}}$ and $\gamma_{f}$ be such
    that $\pi_{u}$ meets the diagonal in $(\gamma_{f},\gamma_{f})$.
    
    Suppose that $\pi^{u}$ makes an East-South turn when meeting the
    diagonal. Since bars bijectively correspond to East-South turns
    (cf. Lemma~\ref{lemma:ZeroDescent}) and $\pi^{u}$ is symmetric
    along the diagonal, then $\Card{B}$ is odd, say $\Card{B} =
    2k-1$. The path $\pi^{u}$ therefore makes $k-1$ East-South turns
    strictly on the left of the diagonal.  Thus, there are $k-1$
    vertical bars strictly on the left of $\gamma_{f}$ and therefore,
    for $k = \ceiling{\frac{\Card{B} + 1}{2}}$, the $k$-th block is
    the group of equal height immediately before meeting the
    diagonal---that is, the block containing $w(\gamma_{f})$.

    If $\pi^{u}$ makes a South-East turn when meeting the diagonal,
    then $\Card{B}$ is even. Say $\Card{B} = 2k-2$, so
    $k = \ceiling{\frac{\Card{B} + 1}{2}}$.
    The path $\pi^{u}$ makes $k-1$ East-South turns before meeting the
    diagonal, and therefore there are $k-1$ bars on the left of
    $\gamma_{f}$. Since $\pi^{u}$ makes a South-East turn when
    meeting the diagonal, the last of these bars is in position $\gamma_{f}$.
    The $k$-th block immediately occurs after $\gamma_{f}$: it is the
    block of equal height containing $w(\gamma_{f} +1)$.
  \end{proof}

  \begin{definition}
    We say that a \sbp $(w,B)$ is \emph{compatible} if, for
    $u = \psi(w,B)$ and $i,j \in \setn$,
    $\degE[E^{u}](i) = \degE[E^{u}](j)$ if and only if $i,j$ appear in
    the same block of $B$.
  \end{definition}

  \begin{lemma}
    \label{lemma:smoothcentral}
    For a \sbp $(w,B)$, the following are equivalent:
    \begin{enumerate}\item 
      $(w,B)$ is compatible,
    \item  $\psi(w,B)$ is
      smooth,
    \item the central block has at least two elements.
    \end{enumerate}
  \end{lemma}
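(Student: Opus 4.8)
The plan is to establish the two equivalences $(2)\Leftrightarrow(3)$ and $(1)\Leftrightarrow(2)$ by working inside the \prep. Throughout, set $u=\psi(w,B)$, let $g=\height_{\pi^{u}}$, and let $\gamma=\gamma_{g}$ be the center of $g$, so that $(\gamma,\gamma)$ is the unique point where $\pi^{u}$ crosses the diagonal. I will use three basic facts: $\pi^{u}$ is symmetric along the diagonal, hence $g$ is \sa (Lemma~\ref{lemma:EastStep}); two positions $i,j$ lie in the same block of $(w,B)$ exactly when $g(w^{-1}(i))=g(w^{-1}(j))$; and, since $\pi^{u}$ meets the diagonal at a single lattice point, it must make there either an \xyturn or a South-East turn with corner $(\gamma,\gamma)$---a straight passage through $(\gamma,\gamma)$ would, by symmetry, force $\pi^{u}$ to contain both a horizontal and a vertical unit segment through that point, which a monotone path cannot. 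In particular $g(\gamma)>g(\gamma+1)$ in all cases. Recall finally that $u$ is \smooth iff $u_{1},u_{2}$ have the same sign, and that the \xyturn case occurs iff $u_{1}<0$ iff $\Card{B}$ is odd (Lemma~\ref{lemma:ZeroDescent}).

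For $(2)\Leftrightarrow(3)$ I invoke the lemma identifying the central block. When $\pi^{u}$ makes an \xyturn at $(\gamma,\gamma)$, the central block is the block of equal height immediately before the diagonal, namely the maximal East run of $\pi^{u}$ arriving at $(\gamma,\gamma)$; by the symmetry of $\pi^{u}$ its length equals that of the South run leaving $(\gamma,\gamma)$, which is $\geq 2$ precisely when the first two steps of $\pi^{u}$ after the crossing are both South, i.e. when $u_{1},u_{2}<0$, i.e. when $u$ is \smooth. The South-East case is dual, the central block being the maximal East run leaving $(\gamma,\gamma)$, of length equal to the South run arriving. Either way, the central block has at least two elements iff $u$ is \smooth.

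For $(1)\Leftrightarrow(2)$ I first record the degree function of $E^{u}$. From $E^{u}=\lux\circ E_{\pi^{u}}$ with $E_{\pi^{u}}=\set{\upair{x,y}\mid x\neq y,\ (x,y)\text{ lies below }\pi^{u}}$ (proof of Theorem~\ref{thm:tgdo}) and the fact that $(x,y)$ lies below $\pi^{u}$ iff $y\leq g(x)$, one gets, for $x\in\setn$, $\degE[E^{u}](\lux(x))=g(x)-\mathbf{1}[\,x\leq\gamma\,]$, the correction term recording whether $x\in N_{g}=\set{0,\ldots,\gamma}$ (the downset of Lemma~\ref{lemma:fixepointgammaf}), equivalently whether $x$ is a neighbour of itself below $\pi^{u}$. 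Since no constant run of $g$ can straddle $\gamma$ (as $g(\gamma)>g(\gamma+1)$), $g(x)=g(y)$ forces $\mathbf{1}[\,x\leq\gamma\,]=\mathbf{1}[\,y\leq\gamma\,]$; hence two positions of a common block always get equal degree, and $(w,B)$ fails to be compatible exactly when two positions of \emph{distinct} blocks get equal degree. By the displayed formula and the monotonicity of $g$, this happens iff $g(\gamma)=g(\gamma+1)+1$, the only possible witness being the pair $x=\gamma$, $y=\gamma+1$. Finally, splitting again on the shape of the turn at $(\gamma,\gamma)$ and using the symmetry of $\pi^{u}$, $g(\gamma)=g(\gamma+1)+1$ says exactly that the run of $\pi^{u}$ adjacent to the crossing has length $1$, which by the $(2)\Leftrightarrow(3)$ analysis means $u$ is \ns. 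Thus $(w,B)$ is compatible iff $u=\psi(w,B)$ is \smooth, which together with $(2)\Leftrightarrow(3)$ proves the lemma.

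The main obstacle is the bookkeeping at the crossing point $(\gamma,\gamma)$: one must treat uniformly the two possible shapes of the turn there (an \xyturn when $\Card{B}$ is odd, a South-East turn when $\Card{B}$ is even), translate faithfully between ``length of the run of $\pi^{u}$ adjacent to the crossing'' and the numerical values $g(\gamma)$ and $g(\gamma+1)$, and keep track of the off-by-one between the height function $g$ and the degree function on $E^{u}$. Degenerate small values of $n$, for which $u_{2}$ is not defined, should be set aside, as elsewhere in the paper.
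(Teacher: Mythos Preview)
Your proof is correct and follows essentially the same route as the paper's: both arguments split into $(1)\Leftrightarrow(2)$ and $(2)\Leftrightarrow(3)$, compute the degree in $E^{u}$ as the height function shifted by $1$ on the negative side of the diagonal, reduce non-compatibility to the single numerical condition $g(\gamma)=g(\gamma+1)+1$ (the paper writes it as $f(\gamma_{f})-1=f(\gamma_{f}+1)$), and then interpret this condition, via the two turn shapes at the crossing, as the central run having length~$1$, i.e.\ $u$ being non-\smooth. Your presentation is slightly more explicit (the indicator-function formula for the degree, the observation that no level set of $g$ straddles $\gamma$), but the ideas coincide.
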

  \begin{proof}
    Let us argue that 1. is equivalent to 2.
    For $u = \psi(w,B)$, consider the path representation
    $(\pi^{u},\lux,\luy)$ and the \tg $(\setn,E^{u})$, so $w =
    \lux$. For readability, we also let $f = \height_{\pi^{u}}$.
    Recall that $i,j$ appear in the same block of $B$ if and only if
    $f(w^{-1}(i)) = f(w^{-1}(j))$.  On the other hand,
    $\degE[E^{u}](i) = f(w^{-1}(i)) -1$, if
    $w^{-1}(i) \leq f(w^{-1}(i))$ and, otherwise,
    $\degE[E^{u}](i) = f(i)$.  That is, up to a renaming of \vertices,
    the degree is computed as the height modulo a normalisation by one
    before meeting the diagonal.
    Therefore $(w,B)$ is not compatible, if and only if, for some
    $x,y$ such that $x \leq f(x)$ and $f(y) < y$, we have
    $f(x) -1 = f(y)$. Considering that $x < y$ and that $f$ is
    antitone, this happens exactly when
    $f(\gamma_{f}) -1 = f(\gamma_{f} + 1)$.  The latter condition
    holds exactly when $\pi^{u}$, immediately after meeting the
    diagonal, either takes a South step followed by an East step, or
    takes an East step followed by a South step. In turn, this
    condition amounts to saying that $u$ is not smooth.

    Let us argue that 2. is equivalent to 3.
    If $\pi^{u}$ makes an East-South turn when meeting the diagonal,
    then $u$ is smooth if and only if also its second step after
    meeting the diagonal is a South step, if and only if, before
    meeting the diagonal, $\pi^{u}$ takes two East steps, if and only
    if the central block has at least two elements.
    If $\pi^{u}$ makes an South-East turn when meeting the diagonal,
    then $u$ is smooth if and only if also its second step after
    meeting the diagonal is an East step, if and only if the central
    block has at least two elements.
  \end{proof}

  \begin{theorem}
    \label{thm:bijtgsbps}
    There is a bijection between the set of \tg{s} on the vertex set
    $\setn$ and the \normal \sbp{s} whose central block has at least
    two elements. The bijection sends \vertices of equal degree to
    \vertices in the same block.
  \end{theorem}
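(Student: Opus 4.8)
The plan is to realize the asserted bijection as a composite of bijections already established in the paper, so that almost no new computation is needed. First I would use Lemma~\ref{lemma:canonicaldgo} to identify the threshold graphs on $\setn$ with the set $\mathcal{C}_{n}$ of those pairs $(w,E)\in\TGn$ in which $w$ is the \cdgo of $(\setn,E)$. Next I would invoke Remark~\ref{rem:smoothpick}: the mapping $u\mapsto(\lux,\Eu)$ restricts to a bijection from the smooth \sp{s} (that is, from $\Dn^{sm}$) onto $\TGn$; let $\Gamma$ denote its inverse, so $\Gamma(w,E)$ is the unique smooth \sp with $\lux=w$ and $\Eu=E$. Finally, $\psi^{-1}$ is a bijection $\Bn\to\SBPn$, and by Lemma~\ref{lemma:smoothcentral} it carries the smooth \sp{s} exactly onto the \sbp{s} whose central block has at least two \vertices. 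Thus $\psi^{-1}\circ\Gamma$ is a bijection from $\TGn$ onto that class of \sbp{s}; the remaining task is to show that, restricted to $\mathcal{C}_{n}$, it lands exactly on the \normal ones.

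For the forward direction I would start from a threshold graph $(\setn,E)$, take its \cdgo $w$, put $u\eqdef\Gamma(w,E)$, and set $(w,B)\eqdef\psi^{-1}(u)$; here the first component is again $w$, since $\psi^{-1}(u)$ always has first component $\lux$. As $u$ is smooth, $(w,B)$ is compatible by Lemma~\ref{lemma:smoothcentral}, so two \vertices lie in a common block of $(w,B)$ precisely when they have equal degree in $E$. Since $w$ is a \dgo satisfying the canonical tie-breaking rule, $i<j$ and $\degE[E](i)=\degE[E](j)$ together force $w^{-1}(i)<w^{-1}(j)$; combined with the previous sentence, this is exactly the statement that $(w,B)$ is \normal, and its central block has size at least two, again by Lemma~\ref{lemma:smoothcentral}.

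For the reverse direction I would take a \normal \sbp $(w,B)$ whose central block has at least two \vertices, set $u\eqdef\psi(w,B)$ --- smooth, with $(w,B)$ compatible, by Lemma~\ref{lemma:smoothcentral} --- and apply Remark~\ref{rem:smoothpick} to conclude $(\lux,\Eu)\in\TGn$ with $\lux=w$, so $(\setn,\Eu)$ is a threshold graph and $w$ is among its \dgo{s}. The point to verify is that $w$ is the \cdgo: compatibility gives $\degE[\Eu](i)=\degE[\Eu](j)$ iff $i,j$ lie in a common block of $(w,B)$, and normality then forces $w^{-1}(i)<w^{-1}(j)$ whenever moreover $i<j$, which is the defining property of the \cdgo. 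Hence $(w,\Eu)\in\mathcal{C}_{n}$ and $\psi^{-1}(\Gamma(w,\Eu))=(w,B)$, so $\psi^{-1}\circ\Gamma$ does restrict to a bijection from $\mathcal{C}_{n}$ onto the \normal \sbp{s} with central block of size at least two; composing with the identification from the first paragraph yields the theorem, and the clause that \vertices of equal degree go to \vertices of a common block is just compatibility. I expect the only mildly delicate point to be keeping ``same block'' and ``same degree'' in step for these particular \sbp{s} --- but that coincidence is precisely what Lemma~\ref{lemma:smoothcentral} and the definition of compatibility were arranged to provide, so the genuine work has already been carried out.
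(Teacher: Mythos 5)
Your proposal is correct and follows essentially the same route as the paper: compose the inverse of the smooth-mate bijection of Remark~\ref{rem:smoothpick} with $\psi^{-1}$, use Lemma~\ref{lemma:smoothcentral} to match smoothness with both compatibility and the size of the central block, and use the canonical degree ordering to match normality. Your write-up merely makes explicit the two directions that the paper's terse proof leaves to the reader.
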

  \begin{proof}
    Given the \tg $(\setn,E)$, let $w$ be its \cdgo. 

    For $(w,E) \in \TGn$, we let $u \in \Bn$ be the unique \smooth \sp
    corresponding to $(w,E)$ under the bijection described in
    Theorem~\ref{thm:tgdo} and Remark~\ref{rem:smoothpick}, and let
    $(w',B) = \psi^{-1}(u)$. Notice that $w = w'$.
    Then, by Lemma~\ref{lemma:smoothcentral}, the central block of
    $(w,B)$ has cardinality at least two, since it corresponds to a
    smooth \sp.
    Since two \vertices have
    equal degree if and only if they belong to the same block,
    $(w,B)$ is \normal if and only if $w$ is the \cdgo of $(\setn,E)$.
  \end{proof}

  The following corollary allows us to simplify the counting arguments
  that follow.
  \begin{corollary}
    There is a bijection between the set of \tg{s} on the set $\setn$
    and the set of \normal \sbp{s} such that the first block has at
    least two elements.
  \end{corollary}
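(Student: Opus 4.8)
The plan is to compose the bijection of Theorem~\ref{thm:bijtgsbps} with a size-preserving rearrangement $\rho$ of the blocks of a \normal \sbp that interchanges the role of the central block with the role of the first block. I first observe that a \normal \sbp $(w,B)$ on $\setn$ carries exactly the data of an ordered set partition $(P_{1},\ldots ,P_{m})$ of $\setn$ with $m = \Card{B}+1$ blocks, each written in increasing order, in which only the last block $P_{m}$ may be empty; under this identification the central block is $P_{k}$ with $k = \ceiling{\frac{\Card{B}+1}{2}} = \ceiling{\frac{m}{2}}$.

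I would then set $\rho(w,B)$ to be the \normal \sbp attached to the ordered partition $(P_{k},P_{k-1},\ldots ,P_{1},P_{k+1},\ldots ,P_{m})$, obtained by reversing the order of the first $k$ blocks only. (Reversing all the blocks, or rotating them cyclically, would in general move the empty last block away from the last position and produce consecutive bars; reversing only the initial segment up to the central block avoids this.) The result is again a legal \normal \sbp: $P_{m}$ stays last, so the empty block stays last, and the increasing order inside each block is untouched. Moreover $\rho$ preserves the number $m$ of blocks, hence the cardinality of the bar set and the central index $k$; since reversing the first $k$ entries of a sequence is involutive for fixed $k$ and $k$ depends only on $m$, the map $\rho$ is its own inverse as an operation on ordered partitions.

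Finally I would read off the two boundary conditions: the first block of $\rho(w,B)$ is the old central block $P_{k}$, while the $k$-th block of $\rho(w,B)$ is the old first block $P_{1}$. Hence $\rho$ restricts to mutually inverse bijections between the \normal \sbp{s} whose central block has at least two elements and the \normal \sbp{s} whose first block has at least two elements, and composing with Theorem~\ref{thm:bijtgsbps} gives the stated bijection. Because $\rho$ leaves the underlying set partition of $\setn$ into blocks unchanged, the feature that \vertices of equal degree correspond to \vertices in the same block is inherited. The only point calling for care is the bookkeeping around the possibly empty last block together with the need to recover $k$ from the output; both are handled by the single observation that $\rho$ preserves the number of blocks, so no real obstacle remains.
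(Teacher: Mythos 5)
Your proposal is correct and follows essentially the same route as the paper: both arguments permute the blocks of a \normal \sbp so that the central block moves to the first position, with reversibility guaranteed because the central index $k=\ceiling{(\Card{B}+1)/2}$ is recoverable from the (preserved) number of bars. Your specific choice of reversing the initial segment of $k$ blocks (making the map an involution) rather than simply moving the central block to the front is only a cosmetic variation, and your bookkeeping about the possibly empty last block is sound since $k\leq m-1$ whenever $m\geq 2$.
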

  \begin{proof}
    Given a \normal \sbp whose central block has at least two
    elements, we move this central block in first position by
    permuting the blocks. This transformation is reversible, since we
    can determine the position where to move back the first block from
    the cardinality of $B$.
  \end{proof}
  
  We can count then the labeled \tg{s} on $\setn$ according to the
  number $i$ of blocks of equal
  degree. 
  By the previous considerations, this amounts to counting \normal
  \sbp{s} whose first block has at least two elements. Considering
  that \normal \sbp{s} are sort of ordered partitions, recall that
  $\smallStirling{n}{i}$, the Stirling number of the second kind, counts
  the number of unordered partitions of an $n$-element set into $i$
  blocks.  We immediately recover the formula from \citep[\S 3]{BP1987}
  for the number $T_{n,i}$ of \tg{s} on $n$ \vertices with $i$
  different degrees:
  \begin{align*}
    T_{n,i} & = 2 \cdot (\fact{i}\cdot\stirling{n}{i} -
    n\cdot\fact{(i-1)}\stirling{n-1}{i-1}) = 2 \cdot
    \fact{(i-1)}\cdot(i\cdot\stirling{n}{i} -
    n\cdot\stirling{n-1}{i-1})\,.
  \end{align*}
  The formula can be understood as follows: out of all the ordered
  partitions of $\setn$ into $i$ blocks, eliminate those whose first
  block is a singleton; transform then such an ordered partition into
  a \sbp by adding or not a bar in the last position. Notice that,
  under the bijection, there is a bar in the last position if and only
  if no vertex is isolated (i.e. has degree $0$).

  It is well known that an ordered partition, when transformed into a
  \sbp as before, is determined by its set of descent positions (that
  are always barred), and by the set of the other barred positions
  (ascent positions, those that are not descent positions).  Whence,
  we can count \sbp{s} according to the number of descents.  This
  yields the formula from \citep{Spiro2020} for counting the number
  $T_{n}$ of \tg{s} as the sum of the numbers $\tau_{n,k}$:
  \begin{align*}
    T_{n} & = \sum_{k = 0,\ldots n-2} \tau_{n,k}\,,
    &&\quad\text{with}\quad & \tau_{n,k} & = 2 \cdot P(n,k)\cdot
    2^{n-2-k} = P(n,k)\cdot
    2^{n-1-k}\,.
  \end{align*}
  Above $P(n,k) = (k +1)\cdot \smallEul{n-1}{k}$ is the number of
  permutations of the set $\setn$ having exactly $k$ descents, and
  whose first block has at least two elements, see \cite[Lemma
  6]{Spiro2020}.
  The number $\tau_{n,k}$ can be interpreted as the number of \tg{s}
  whose ordered partition determined by equal degree has exactly $k$
  descents. The explicit formula for $\tau_{n,k}$ stems from the fact
  that, in order to construct such a partition, we can choose a
  permutation whose first block has at least two elements, and then
  add other bars at ascent positions except for the first ascent
  position.

  That \tg{s} are related to the families $\B$ and $\D$ in the theory
  of Coxeter groups has already been observed, see
  e.g. \citep{EdelmanReiner1994}, \citep[Exercise
  5.25]{StanleyHypArr}, and \citep[Exercise 3.115]{Stanley1}.  It
  needs to be emphasized, however, that the way we came up with \tg{s}
  is orthogonal to the way \tg{s} are being used in these works.  As
  part of future research, we wish to investigate the bijections
  presented in Theorems~\ref{thm:tgdo} and~\ref{thm:bijtgsbps} (which
  can be adapted to fit the type $\B$) with the goal of understanding
  whether they play any role with respect to the problem, dealt with
  in \citep{EdelmanReiner1994}, of characterizing free
  sub-arrangements of the Coxeter arrangements of type $\B$. We also
  aim to understand whether the connection with \tg{s} established
  here can shed some light on the problem of giving combinatorial
  characterisations of the notion of free arrangement \citep[\S
  4]{StanleyHypArr}.

\acknowledgements
The author is thankful to the
referees for their help to improve a first version of this paper and
for their numerous hints.

\bibliographystyle{abbrvnat}
\bibliography{biblio}

\begin{thebibliography}{28}
\providecommand{\natexlab}[1]{#1}
\providecommand{\url}[1]{\texttt{#1}}
\expandafter\ifx\csname urlstyle\endcsname\relax
  \providecommand{\doi}[1]{doi: #1}\else
  \providecommand{\doi}{doi: \begingroup \urlstyle{rm}\Url}\fi

\bibitem[Bagno et~al.(2018)Bagno, Biagioli, and Garber]{BagnoBG18}
E.~Bagno, R.~Biagioli, and D.~Garber.
\newblock Stirling and eulerian numbers of types \emph{B} and \emph{D}.
\newblock In L.~Ferrari and M.~Vamvakari, editors, \emph{Proceedings of the
  11th International Conference on Random and Exhaustive Generation of
  Combinatorial Structures, GASCom 2018, Athens, Greece, June 18-20, 2018},
  volume 2113 of \emph{{CEUR} Workshop Proceedings}, pages 53--59. CEUR-WS.org,
  2018.
\newblock URL \url{http://ceur-ws.org/Vol-2113/paper4.pdf}.

\bibitem[Barnard(2016)]{barnard2016canonical}
E.~Barnard.
\newblock The canonical join complex.
\newblock Preprint, retrievable from \url{https://arxiv.org/abs/1610.05137},
  2016.

\bibitem[Bj{\"o}rner(1984)]{Bjo84}
A.~Bj{\"o}rner.
\newblock {Orderings of Coxeter groups}.
\newblock In \emph{Combinatorics and {A}lgebra ({B}oulder, {C}olo., 1983)},
  volume~34 of \emph{Contemp. Math.}, pages 175--195. Amer. Math. Soc.,
  Providence, RI, 1984.

\bibitem[Bj{\"o}rner and Brenti(2005)]{BB2005}
A.~Bj{\"o}rner and F.~Brenti.
\newblock \emph{Combinatorics of {C}oxeter groups}, volume 231 of
  \emph{Graduate Texts in Mathematics}.
\newblock Springer, New York, 2005.
\newblock ISBN 978-3540-442387; 3-540-44238-3.

\bibitem[B\'{o}na(2012)]{Bona2012}
M.~B\'{o}na.
\newblock \emph{Combinatorics of permutations}.
\newblock Discrete Mathematics and its Applications (Boca Raton). CRC Press,
  Boca Raton, FL, second edition, 2012.
\newblock ISBN 978-1-4398-5051-0.
\newblock \doi{10.1201/b12210}.
\newblock URL \url{https://doi.org/10.1201/b12210}.
\newblock With a foreword by Richard Stanley.

\bibitem[Caspard et~al.(2016)Caspard, Santocanale, and Wehrung]{STA1}
N.~Caspard, L.~Santocanale, and F.~Wehrung.
\newblock Permutohedra and associahedra.
\newblock In \emph{Lattice theory: special topics and applications. {V}ol. 2},
  pages 215--286. Birkh\"{a}user/Springer, Cham, 2016.

\bibitem[Chv\'{a}tal and Hammer(1977)]{chvatal1977}
V.~Chv\'{a}tal and P.~L. Hammer.
\newblock Aggregation of inequalities in integer programming.
\newblock In \emph{Studies in integer programming ({P}roc. {W}orkshop, {B}onn,
  1975)}, pages 145--162. Ann. of Discrete Math., Vol. 1, 1977.

\bibitem[Davey and Priestley(2002)]{DP2002}
B.~A. Davey and H.~A. Priestley.
\newblock \emph{Introduction to lattices and order}.
\newblock Cambridge University Press, New York, second edition, 2002.
\newblock ISBN 0-521-78451-4.

\bibitem[Edelman and Reiner(1994)]{EdelmanReiner1994}
P.~H. Edelman and V.~Reiner.
\newblock Free hyperplane arrangements between {$A_{n-1}$} and {$B_n$}.
\newblock \emph{Math. Z.}, 215\penalty0 (3):\penalty0 347--365, 1994.
\newblock ISSN 0025-5874.
\newblock \doi{10.1007/BF02571719}.
\newblock URL \url{https://doi.org/10.1007/BF02571719}.

\bibitem[Freese et~al.(1995)Freese, Je{\v{z}}ek, and Nation]{FJN}
R.~Freese, J.~Je{\v{z}}ek, and J.~Nation.
\newblock \emph{Free {L}attices}, volume~42 of \emph{Mathematical Surveys and
  Monographs}.
\newblock American Mathematical Society, Providence, RI, 1995.
\newblock ISBN 0-8218-0389-1.

\bibitem[Galvin et~al.(2022)Galvin, Wesley, and Zacovic]{GWZ2022}
D.~Galvin, G.~Wesley, and B.~Zacovic.
\newblock Enumerating threshold graphs and some related graph classes.
\newblock \emph{J. Integer Seq.}, 25\penalty0 (2):\penalty0 article 22.2.7, 36,
  2022.
\newblock ISSN 1530-7638.
\newblock URL \url{cs.uwaterloo.ca/journals/JIS/VOL25/Galvin/galvin2.html}.

\bibitem[Gessel and Stanley(1978)]{GS78}
I.~Gessel and R.~P. Stanley.
\newblock Stirling polynomials.
\newblock \emph{J. Combinatorial Theory Ser. A}, 24\penalty0 (1):\penalty0
  24--33, 1978.

\bibitem[Gouveia and Santocanale(2021)]{CWO}
M.~J. Gouveia and L.~Santocanale.
\newblock The continuous weak order.
\newblock \emph{Journal of Pure and Applied Algebra}, 225\penalty0
  (2):\penalty0 106472, 2021.
\newblock ISSN 0022-4049.
\newblock \doi{https://doi.org/10.1016/j.jpaa.2020.106472}.
\newblock URL
  \url{https://www.sciencedirect.com/science/article/pii/S0022404920301730}.

\bibitem[Guilbaud and Rosenstiehl(1963)]{GuRo63}
G.~Guilbaud and P.~Rosenstiehl.
\newblock Analyse alg\'ebrique d'un scrutin.
\newblock \emph{Math. Sci. Hum.}, 4:\penalty0 9--33, 1963.

\bibitem[Mahadev and Peled(1995)]{MP1995}
N.~V.~R. Mahadev and U.~N. Peled.
\newblock \emph{Threshold graphs and related topics}, volume~56 of \emph{Annals
  of Discrete Mathematics}.
\newblock North-Holland Publishing Co., Amsterdam, 1995.
\newblock ISBN 0-444-89287-7.

\bibitem[Merris and Roby(2005)]{MerrisRoby2005}
R.~Merris and T.~Roby.
\newblock The lattice of threshold graphs.
\newblock \emph{JIPAM. J. Inequal. Pure Appl. Math.}, 6\penalty0 (1):\penalty0
  Article 2, 21, 2005.
\newblock ISSN 1443-5756.

\bibitem[Nelson(1976)]{Nelson1976}
E.~Nelson.
\newblock Galois connections as left adjoint maps.
\newblock \emph{Comment. Math. Univ. Carolinae}, 17\penalty0 (3):\penalty0
  523--541, 1976.
\newblock ISSN 0010-2628.

\bibitem[Petersen(2015)]{Petersen2015}
T.~K. Petersen.
\newblock \emph{Eulerian numbers}.
\newblock Birkh\"{a}user Advanced Texts: Basler Lehrb\"{u}cher. [Birkh\"{a}user
  Advanced Texts: Basel Textbooks]. Birkh\"{a}user/Springer, New York, 2015.
\newblock ISBN 978-1-4939-3090-6; 978-1-4939-3091-3.
\newblock \doi{10.1007/978-1-4939-3091-3}.
\newblock URL \url{https://doi.org/10.1007/978-1-4939-3091-3}.
\newblock With a foreword by Richard Stanley.

\bibitem[Pouzet et~al.(1995)Pouzet, Reuter, Rival, and Zaguia]{Pouzet1995}
M.~Pouzet, K.~Reuter, I.~Rival, and N.~Zaguia.
\newblock A generalized permutahedron.
\newblock \emph{Algebra Universalis}, 34\penalty0 (4):\penalty0 496--509, 1995.
\newblock ISSN 0002-5240.
\newblock \doi{10.1007/BF01181874}.
\newblock URL \url{https://doi.org/10.1007/BF01181874}.

\bibitem[Santocanale(2019)]{2019-WORDS}
L.~Santocanale.
\newblock On discrete idempotent paths.
\newblock In R.~Mercaş and D.~Reidenbach, editors, \emph{WORDS 2019}, volume
  11682 of \emph{Lecture Notes in Comput. Sci.}, pages 312--325. Springer,
  Cham, 2019.

\bibitem[Santocanale(2020)]{2020-ALGOS}
L.~Santocanale.
\newblock {Bijective proofs for {E}ulerian numbers in types {B} and {D}}.
\newblock In M.~Couceiro, P.~Monnin, and A.~Napoli, editors, \emph{Proceedings
  of the 1st International Conference on Algebras, Graphs and Ordered Sets
  (ALGOS 2020)}, pages 81--94, Aug. 2020.
\newblock Retrievable from
  \url{https://hal.archives-ouvertes.fr/hal-02918958/}.

\bibitem[Santocanale and Wehrung(2016)]{STA2}
L.~Santocanale and F.~Wehrung.
\newblock Generalizations of the permutohedron.
\newblock In \emph{Lattice Theory: Special Topics and Applications. {V}ol. 2},
  pages 287--397. Birkh\"auser, 2016.

\bibitem[Santocanale and Wehrung(2018)]{JEMS}
L.~Santocanale and F.~Wehrung.
\newblock The equational theory of the weak {B}ruhat order on finite symmetric
  groups.
\newblock \emph{Journal of the European Mathematical Society}, 20\penalty0
  (8):\penalty0 1959--2003, 2018.

\bibitem[Simpson~Beissinger and Peled(1987)]{BP1987}
J.~Simpson~Beissinger and U.~N. Peled.
\newblock Enumeration of labelled threshold graphs and a theorem of {Frobenius}
  involving {Eulerian} polynomials.
\newblock \emph{Graphs Comb.}, 3:\penalty0 213--219, 1987.
\newblock ISSN 0911-0119.
\newblock \doi{10.1007/BF01788543}.

\bibitem[Spiro(2020)]{Spiro2020}
S.~Spiro.
\newblock Counting labeled threshold graphs with {Eulerian} numbers.
\newblock \emph{Australas. J. Comb.}, 77:\penalty0 249--255, 2020.
\newblock ISSN 1034-4942.
\newblock URL \url{ajc.maths.uq.edu.au/pdf/77/ajc_v77_p249.pdf}.

\bibitem[Stanley(2007)]{StanleyHypArr}
R.~P. Stanley.
\newblock An introduction to hyperplane arrangements.
\newblock In \emph{Geometric combinatorics}, volume~13 of \emph{IAS/Park City
  Math. Ser.}, pages 389--496. Amer. Math. Soc., Providence, RI, 2007.

\bibitem[Stanley(2012)]{Stanley1}
R.~P. Stanley.
\newblock \emph{Enumerative combinatorics. {V}olume 1}, volume~49 of
  \emph{Cambridge Studies in Advanced Mathematics}.
\newblock Cambridge University Press, Cambridge, second edition, 2012.
\newblock ISBN 978-1-107-60262-5.

\bibitem[Stembridge(1994)]{Stembridge1994}
J.~R. Stembridge.
\newblock Some permutation representations of {W}eyl groups associated with the
  cohomology of toric varieties.
\newblock \emph{Adv. Math.}, 106\penalty0 (2):\penalty0 244--301, 1994.
\newblock ISSN 0001-8708.
\newblock \doi{10.1006/aima.1994.1058}.
\newblock URL \url{https://doi.org/10.1006/aima.1994.1058}.

\end{thebibliography}

\end{document}